\title{Adaptive Synchronisation of Pushdown Automata} 
\titlerunning{Adaptive Synchronisation of Pushdown Automata}
\author{A.~R. Balasubramanian}{Technische Universit\"at M\"unchen, Munich, Germany \and \url{https://arbalan96.github.io/} }{bala.ayikudi@tum.de}{http://orcid.org/0000-0002-7258-5445}{Supported by funding from the European Research Council (ERC) under the
European Union’s Horizon 2020 research and innovation programme under grant agreement No
787367 (PaVeS).}
\author{K. S. Thejaswini}{Department of Computer Science, University of Warwick, UK}{Thejaswini.Raghavan.1@warwick.ac.uk}{}{}
\authorrunning{A.\,R. Balasubramanian and K.\,S. Thejaswini}
\keywords{Adaptive synchronisation, Pushdown automata, Alternating pushdown systems}
\newcommand{\Pc}{\mathcal{P}}
\newcommand{\Ac}{\mathcal{A}}
\newcommand{\Mc}{\mathcal{M}}
\newcommand{\Cc}{\mathcal{C}}
\newcommand{\Nc}{\mathcal{N}}
\newcommand{\init}{\mathit{init}}
\newcommand{\fin}{\mathit{fin}}
\newcommand{\acc}{\mathit{acc}}
\newcommand{\rej}{\mathit{rej}}
\newcommand{\test}{\mathtt{test}}
\newcommand{\cmmd}{\mathtt{cmd}}
\newcommand{\inputt}{\mathtt{in}}
\newcommand{\set}[1]{[#1]}
\newcommand{\Succ}{\mathit{Succ}}
\newcommand{\End}{\mathtt{end}}
\newcommand{\zero}{\mathbf{0}}
\newcommand{\bDiamond}{\mathbin{\Diamond}}
\newcommand{\mycirc}[1][black]{\Large\textcolor{#1}{\ensuremath\bullet}}
\newcommand{\xhri}{\xhookrightarrow}
\newcommand{\word}{\mathit{word}}
\newcommand{\nex}{\mathit{next}}
\begin{document}

\maketitle

\begin{abstract}
We introduce the notion of adaptive synchronisation for pushdown automata,
in which there is an external observer who has no knowledge about the current
state of the pushdown automaton, but can observe the contents of the stack.
The observer would then like to decide if it is possible to bring the automaton from any state into some predetermined state by giving inputs to it in an \emph{adaptive} manner, i.e., the next input letter to be given can depend on how the contents of the stack changed after the current input letter. We show that for non-deterministic pushdown automata, this 
problem is $\TWOEXPTIME$-complete and for deterministic pushdown automata,
we show $\EXPTIME$-completeness. 

To prove the lower bounds, we first introduce (different variants of) subset-synchronisation
and show that these problems are polynomial-time equivalent with the adaptive synchronisation
problem. We then prove hardness results for the subset-synchronisation problems.
For proving the upper bounds, we consider the problem of deciding if a given alternating pushdown system has an accepting run with at most $k$ leaves and we provide an
$n^{O(k^2)}$ time algorithm for this problem.
\end{abstract}

\section{Introduction}
The notion of a synchronizing word for finite-state machines
is a classical concept in computer science
which consists of deciding, given a finite-state machine,
whether there is a word which brings all of its states to a
single state. Intuitively, assuming that we initially do not know
which state the machine is in, such a word \emph{synchronises}
it to a single state and assists in regaining control over the machine.
This idea has been studied for many types of finite-state machines~\cite{Cerny, Survey, Unambiguous, MDPs} with applications in biocomputing~\cite{DNA}, 
planning and robotics~\cite{Eppstein, Natarajan} and
testing of reactive systems~\cite{Testing,Hennie}.
In recent years, the notion of a synchronizing word 
has been extended to various \emph{infinite-state systems}
such as timed automata \cite{Timed}, register automata \cite{Register}, nested word automata \cite{Nested}, pushdown and visibly pushdown automata \cite{VPA,PushdownAutomataUndecidable}.
In particular, for the pushdown case, Fernau, Wolf and Yamakami \cite{PushdownAutomataUndecidable} have shown
that this problem is undecidable even for deterministic pushdown automata.

When the finite-state machine can produce outputs, the notion of synchronisation 
could be further refined to give rise to synchronisation under \emph{partial observation} or \emph{adaptive synchronisation} (See Chapter 1 of~\cite{Lecturenotes} and~\cite{PartialObservation2}). In this setting, there is an external observer
who does not know the current state of the machine, however she can give inputs
to the machine and observe the outputs given by the machine. Depending
on the outputs of the machine, she can \emph{adaptively} decide which input letter
to give next. In this manner, the observer would like to bring the 
machine into some predetermined state. Larsen, Laursen and Srba~\cite{PartialObservation2} describe an example of adaptive synchronisation pertaining to the orientation of  a simplified model of satellites,
in which they observe that adaptively choosing the input letter is sometimes necessary in order to achieve synchronisation.
In this paper, we extend this notion of adaptive synchronisation to pushdown automata (PDA). In our model, the observer does not know which state the PDA is currently in, but can observe the contents of the stack.
She would then like to decide if it is possible to synchronise the PDA into some state by giving inputs to the PDA
adaptively, i.e., depending on how the stack changes after each input.
To the best of our knowledge, the notion of adaptive synchronisation has not been considered before for any class of infinite-state systems.

This question is a natural extension of the notion of adaptive synchronisation from finite-state machines to pushdown automata.
Further, it is mentioned in the works of Lakhotia, Uday Kumar and Venable as well as Song and Touili \cite{Malware1,Malware2}
that several antivirus systems determine whether a program is malicious by observing the 
calls that the program makes to the operating system. With this in mind,
Song and Touili use pushdown automata~\cite{Malware1} as abstractions of programs where a stack stores the calls made by the program and use this abstraction to detect viruses. Hence, we believe that our setting of being able to observe the changes happening to the stack can be practically motivated.

Our main results regarding adaptive synchronisation are as follows: We show that for non-deterministic pushdown automata, the problem is $\TWOEXPTIME$-complete. However, by restricting our input to deterministic pushdown automata, we show that we can get $\EXPTIME$-completeness, thereby obtaining an exponential reduction in complexity. 

We also consider a natural variant of this problem, called \emph{subset adaptive synchronisation}, which is similar to adaptive synchronisation, except the observer has more knowledge about which state the automaton is initially in. We obtain a surprising result that shows that this variant is polynomial-time equivalent to adaptive synchronisation, unlike in the case of finite-state machines. Furthermore, for the deterministic case of this variant, we obtain an algorithm that runs in time $O\left(n^{ck^3}\right)$ where $n$ is the size of the input and $k$ is the size of the subset of states that the observer believes the automaton is initially in. This gives a polynomial time algorithm if $k$ is fixed and a quasi-polynomial time algorithm if $k = O(\log n)$.

Used as a subroutine in the above decision procedure, is an $O\left(n^{ck^2}\right)$ time algorithm to the following question, which we call the \emph{sparse-emptiness problem}: Given an alternating pushdown system and a number $k$, decide whether there is an accepting run of the system with at most $k$ leaves. Intuitively, such a run means that the system has an accepting run in which it uses only ``limited universal branching''. We note that such a notion of alternation 
with ``limited universal branching'' has recently been studied by Keeler and Salomaa for alternating finite-state automata~\cite{LimitedUniversalBranching}. 
Our problem can be considered as a generalisation of one of their problems (Corollary 2 of~\cite{LimitedUniversalBranching}) to pushdown systems.  We think that this problem and its associated algorithm might be of independent interest. 

\textbf{Roadmap:} In Section~\ref{section:prelim}, we introduce notations. In Section~\ref{sec:equivalence}, we discuss different variations of the problem. In Sections~\ref{sec:lower-bounds} and~\ref{sec:upperbounds} we prove lower and upper bounds respectively.
Due to lack of space, some of the proofs can be found in the appendix.

\section{Preliminaries}
\label{section:prelim}
Given a finite set $X$, we let $X^*$ denote the set of all words with the alphabet $X$.
As usual, the concatenation of two words $x,y \in X^*$ is denoted by $xy$.

\subsection{Pushdown Automata} 
We recall the well-known notion of a pushdown automaton.
A pushdown automaton (PDA) is a 4-tuple $\Pc = (Q,\Sigma,\Gamma, \delta)$ where $Q$ is a finite set of \emph{states}, $\Sigma$ is the \emph{input alphabet}, $\Gamma$ is the \emph{stack alphabet} 
and $\delta \subseteq (Q\times\Sigma\times\Gamma)\times (Q\times \Gamma^*)$ is the \emph{transition relation}. Alternatively, sometimes we will describe the transition relation
$\delta$ as a function $Q \times \Sigma \times \Gamma \mapsto 2^{Q \times \Gamma^*}$.
We will always use small letters $a,b,c,\dots$ to denote elements 
of $\Sigma$,
capital letters $A,B,C,\dots$ to denote elements of $\Gamma$ and Greek letters $\gamma,\eta,\omega,\dots$ to denote
elements of $\Gamma^*$. 

If $(p,a,A,q,\gamma) \in \delta$ then 
we sometimes denote it by $(p,A) \xhri{a} (q,\gamma)$. We say $A$ is  
the \emph{top} of the stack that is \emph{popped} and $\gamma$ is the string that is \emph{pushed}
onto the stack. A \emph{configuration} of the automaton is a tuple $(q,\gamma)$ where $q \in Q$ and $\gamma \in \Gamma^*$. Given two configurations $(q,A\gamma)$ and $(q',\gamma' \gamma)$ of $\Pc$ with $A \in \Gamma$, we say that $(q,A \gamma) \xrightarrow{a} (q', \gamma' \gamma)$ iff 
$(q,A) \xhri{a} (q',\gamma')$. 

As is usual, we assume that there exists a special 
\emph{bottom-of-the-stack} symbol $\bot\in \Gamma$,
such that whenever some transition pops $\bot$, it pushes
it back in the bottom-most position.
A PDA is said to be deterministic if for every $q \in Q$, $a \in \Sigma$ and $A \in \Gamma$, $\delta(q,a,A)$ has exactly one element. If a PDA is deterministic, 
we further abuse notation and denote $\delta(q,a,A)$ as a single element and not as a set.

\subsection{Adaptive Synchronisation}\label{subsec:define-ada-sync}
We first expand upon the intuition given in the introduction for adaptive synchronisation with the help of a running example. Consider the pushdown automaton as given in Figure~\ref{fig:automata} where we do not know which state the automaton is in currently, but we do know that the stack content is $\bot$.  
To synchronise the automaton to the state $4$ when the stack is visible, the observer has a strategy as depicted in Figure~\ref{fig:tree}. The labelling of the nodes of the tree intuitively denotes the `knowledge of the observer' at the current point in the strategy and the labelling of the edges denotes
the letter that she inputs to the PDA. Initially, according to the observer, the automaton could be in any one of the 4 states. The observer first inputs the letter $\Box$. If the top of the stack becomes $\mycirc[red]$, then she knows that the automaton is currently either in state $1$ or $2$.
On the other hand, if the top of the stack becomes $\mycirc[blue]$, then the observer can deduce that the automaton is currently in state $3$ or $4$.
From these two scenarios, by following the appropriate strategy depicted in the figure,
we can see that she can synchronise the automaton to state 4. However, if the stack was hidden to the observer, reading either $\bDiamond$ or $\Box$ does not change the knowledge of the observer and therefore, there is no word that can be read that would synchronise the automaton to any state.
\begin{figure}
\begin{minipage}{0.45 \textwidth}
    \centering
    \begin{tikzpicture}[shorten >=1pt,node distance=2cm,auto]
  \tikzstyle{every state}=[fill={rgb:black,1;white,10}]
  \node[state] (q_1)   {$1$};
  \node[state]           (q_2) [below =2.5cm of q_1]     {$2$};
  \node[state,accepting] (r_1) [right=2.5cm of q_1] {$4$};
  \node[state]           (r_2) [below=2.5cm of r_1]     {$3$};
  \path[->]
  (q_1) edge [loop above]  node {$\Box,pop(\mycirc[blue]/\mycirc[red])$} (   )
        edge [bend left=10]  node[align=center] {$\bDiamond$\\$\mycirc[red]\rightarrow\mycirc[blue]\mycirc[red]$\\$\mycirc[blue]\rightarrow\mycirc[blue]\mycirc[blue]$} (q_2)
        edge [bend left=10]  node {$\Box,\bot\to\mycirc[blue]\bot$} (r_1)
  (q_2) edge [loop below]  node {$\Box,pop(\mycirc[blue]/\mycirc[red])$} (   )
        edge [bend left=10]  node[align=center] {$\bDiamond$\\$\mycirc[red]\rightarrow\mycirc[red]\mycirc[red]$\\$\mycirc[blue]\rightarrow\mycirc[red]\mycirc[blue]$} (q_1)
        edge [bend left=10]  node {$\Box,\bot\rightarrow\mycirc[blue]\bot$} (r_2)
  (r_1) edge [loop above] node {$\Box,pop(\mycirc[blue]/\mycirc[red])$} (   )
        edge [bend left=10]  node[align=center] {$\bDiamond$\\$\mycirc[red]\rightarrow\mycirc[red]\mycirc[red]$\\$\mycirc[blue]\rightarrow\mycirc[red]\mycirc[blue]$} (r_2)
        edge [bend left=10]  node {$\Box,\bot\rightarrow\mycirc[red]\bot$} (q_1)
  (r_2) edge [loop below] node {$\Box,pop(\mycirc[blue]/\mycirc[red])$} (   )
        edge [bend left=10]  node[align=center] {$\bDiamond$ \\$\mycirc[red]\rightarrow\mycirc[blue]\mycirc[red]$\\$\mycirc[blue]\rightarrow\mycirc[blue]\mycirc[blue]$} (r_1)
        edge [bend left=10]  node {$\Box,\bot\rightarrow\mycirc[red]\bot$} (q_2);
\end{tikzpicture}
    \caption{A label of $a,A\rightarrow\gamma$ means that if the input is $a$ and if the top of the stack is $A$, then pop $A$ and push $\gamma$.\label{fig:automata}} 
\end{minipage}
\hfill \vline \hfill
\begin{minipage}{0.45 \textwidth}
    \centering
\makebox[\textwidth][c]{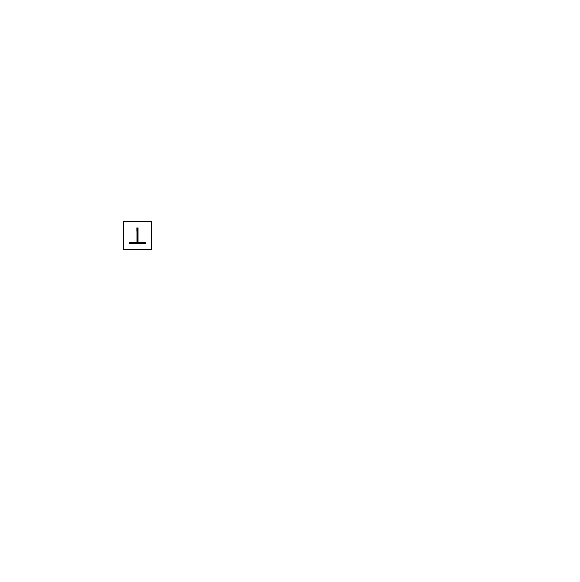}%
\caption{A synchroniser between $(\{1,2,3,4\},\bot)$ and state $4$ for the PDA in Figure~\ref{fig:automata}.\label{fig:tree}}
\end{minipage}
\end{figure}

We now formalize the notion of an adaptive synchronizing word that we have so far described.
Let $\Pc = (Q,\Sigma, \Gamma, \delta)$ be a PDA.
Given $S \subseteq Q$, $a \in \Sigma$ and $A \in \Gamma$,
let $T_{S,A}^{a} := \{t \in \delta\mid t = (p,a,A, q, \gamma)$ where $p\in S\}$. 
Intuitively, if the observer knows that $\Pc$ is currently in some state in $S$ and the top of the stack
is $A$ and she chooses to input $a$, then $T_{S,A}^{a}$ is the set of transitions that might take place.
We define an equivalence relation $\sim_{S,A}^{a}$ on the elements of  $T_{S,A}^{a}$ as follows:
$t_1 \sim_{S,A}^a t_2 \iff \exists \gamma \in \Gamma^*$ such that $t_1 = (p_1, a, A, q_1,\gamma)$ and $ t_2 = (p_2, a, A, q_2,\gamma)$. Notice that if $t_1 \sim_{S,A}^a t_2$ then the observer cannot distinguish occurrences of $t_1$ from occurrences of $t_2$.
In our running example, if we take $S = \{3,4\}$, $a = \bDiamond$ and $A = \mycirc[blue]$,
it is easy to see that $T^a_{S,A}$ is $\{(3,\bDiamond,\mycirc[blue],4,\mycirc[blue]\mycirc[blue]),
(4,\bDiamond,\mycirc[blue],3,\mycirc[red]\mycirc[blue])\}$ and these two transitions 
are not in the same equivalence class under $\sim^a_{S,A}$.

The relation $\sim_{S,A}^a$ partitions the elements of $T_{S,A}^{a}$ into equivalence classes. If $E$ is an equivalence class of $\sim_{S,A}^a$, then notice that there is a word $\gamma \in \Gamma^*$ such that all the transitions in $E$ pop $A$ and push $\gamma$ onto the stack. This word $\gamma$ will 
be denoted by $\word(E)$.
If we define $\nex(E) := \{q \mid  (p,a,A, q, \word(E)) \in E\}$, then
$\nex(E)$ contains all the states that the automaton can move to
if \emph{any} of the transitions from $E$ occur.
Now, suppose the observer knows that $\Pc$ is currently in some state in $S$ with $A$ being at the top of the stack. Assuming she inputs the letter $a$ and observes that $A$ has been popped and $\word(E)$ has been pushed, she can deduce that $\Pc$ is currently in some state in $\nex(E)$.
In our running example of $S = \{3,4\}$, $a = \bDiamond$ and $A = \mycirc[blue]$,
there are two equivalence classes $E_1 = \{(3,\bDiamond,\mycirc[blue],4,\mycirc[blue]\mycirc[blue])\}$ and $E_2 = \{(4,\bDiamond,\mycirc[blue],3,\mycirc[red]\mycirc[blue])\}$ 
with $\nex(E_1) = \{3\}$, $\nex(E_2) = \{4\}$, 
$\word(E_1) = \{\mycirc[blue]\mycirc[blue]\}$ and $\word(E_2) = \{\mycirc[red]\mycirc[blue]\}$.

A \emph{pseudo-configuration} of the automaton $\Pc$ is a pair $(S,\gamma)$ 
such that $S \subseteq Q$ and $\gamma \in \Gamma^*$. The pseudo-configuration $(S,\gamma)$
captures the knowledge of the observer at any given point. Given a pseudo-configuration $(S,A\gamma)$
and an input letter $a$, let $\Succ(S,A\gamma,a) := \{(\nex(E_1),\word(E_1)\gamma),\dots,
(\nex(E_k),\word(E_k)\gamma)\}$ where $E_1,\dots,E_k$ are the equivalence
classes of $\sim^a_{S,A}$.  Each element of $\Succ(S,A\gamma,a)$ will be 
called a possible successor of $(S,A\gamma)$ under the input letter $a$.
The function $\Succ$ captures all the possible
pseudo-configurations that could happen when the observer inputs $a$ at the pseudo-configuration $(S,A\gamma)$. 

We now define the notion of a \emph{synchroniser} which will correspond to a strategy 
for the observer to synchronise the automaton into some state. 
Let $I \subseteq Q, s\in Q$ and $\gamma \in \Gamma^*$. (The $I$ stands for \emph{\textbf{I}nitial set of states},
and the $s$ stands for \emph{\textbf{s}ynchronising state}).
A \emph{synchroniser} between the pseudo-configuration $(I,\gamma)$ and the state $s$, is a labelled tree $T$
such that
\begin{itemize}
    \item All the edges are labelled by some input letter $a \in \Sigma$ such that, for every vertex $v$, all its outgoing edges have the same label. 
    \item The root is labelled by the pseudo-configuration $(I,\gamma)$. 
    \item Suppose $v$ is a vertex which is labelled by the pseudo-configuration $(S,A\eta)$.
    Let $a$ be the unique label of its outgoing edges and let $\Succ(S,A\eta,a)$ be of size $k$.
    Then $v$ has $k$ children, with the $i^{th}$ child labelled by the $i^{th}$ pseudo-configuration in $\Succ(S,A\eta,a)$.
    \item For every leaf, there exists $\eta \in \Gamma^*$ such that its label is $(\{s\},\eta)$.
\end{itemize}
In addition, if all the leaves are labelled by $(\{s\},\bot)$, then $T$ is called
a \emph{super-synchroniser} between $(I,\gamma)$ and $s$. 
We use the notation $(I,\gamma) \xRightarrow[\Pc]{} s$ (resp. $(I,\gamma) \xRightarrow[\Pc]{\text{sup}} s$) to denote that there is a synchroniser (resp. super-synchroniser) between $(I,\gamma)$
and $s$ in the PDA $\Pc$. (When $\Pc$ is clear from context, we would drop
it from the arrow notation).

\subsection{Different Formulations}
We now formally introduce the problem which we will refer to as \emph{adaptive synchronising problem} ({\scshape Ada-Sync}) and it is defined as the following:
\begin{quote}
    \noindent \emph{Given: } A PDA $\Pc = (Q,\Sigma,\Gamma,\delta)$ and a word $\gamma \in \Gamma^*$\\
    \noindent \emph{Decide: } Whether there is a state $s$ such that $(Q,\gamma) \Rightarrow s$
\end{quote}

The {\scshape Det-Ada-Sync} problem is the same as {\scshape Ada-Sync}, except that
the given pushdown automaton is deterministic. Notice that we can generalise the adaptive synchronising problem by the following \emph{subset adaptive synchronising problem} ({\scshape Subset-Ada-Sync}): Given a PDA $\Pc = (Q,\Sigma,\Gamma,\delta)$, a subset $I \subseteq Q$ and a word $\gamma \in \Gamma^*$, decide if there is a state $s$
such that $(I,\gamma) \Rightarrow s$. Similarly, we can define {\scshape Det-Subset-Ada-Sync}.

\begin{remark}
    One can also frame both of these problems in various other
    ways such as ``Given $\Pc,\gamma$ and $q$ does $(Q,\gamma) \Rightarrow q$?''
    or ``Given $\Pc,\gamma,I$, is there a $q$ such that $(I,\gamma) \xRightarrow{\text{sup}} q$'' etc. 
    We chose this version, because this is similar to the way it is defined for the finite-state version (Problem 1 of ~\cite{PartialObservation2}). Nevertheless, in order to make the lower bounds easier
    to understand, we introduce a few different variants of {\scshape Ada-Sync} and {\scshape Subset-Ada-Sync} in Section~\ref{sec:equivalence} and
    conclude that
    that they are all polynomial-time equivalent with {\scshape Ada-Sync}. 
    We defer a detailed analysis of the different variants of this problem to future work.
\end{remark}
\begin{remark}
    One can relax the notion of a synchroniser and ask instead for an adaptive ``homing'' word, which is the same as a synchroniser, except that we now only require that if $(S,\gamma)$ is the label of a leaf then $S$ is \emph{any} singleton. 
    Intuitively, in an adaptive homing word, we are content with knowing the state the automaton is in after
    applying the strategy, rather than enforcing the automaton to synchronise into some state.
    Due to lack of space, we state this problem formally and  
    prove in the appendix that it is polynomial-time equivalent
    to {\scshape Ada-Sync}. In the main paper, we primarily focus on finding the complexity status of the problems {\scshape Ada-Sync} and {\scshape Subset-Ada-Sync}.
\end{remark}

The main results of this paper are now as follows: 
\begin{theorem}
{\scshape Ada-Sync} and {\scshape Subset-Ada-Sync} are both $\TWOEXPTIME$-complete. {\scshape Det-Ada-Sync} and {\scshape Det-Subset-Ada-Sync} are both $\EXPTIME$-complete.
\end{theorem}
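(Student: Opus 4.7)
The plan is to obtain each of the four complexity bounds by combining polynomial-time equivalences between the subset and non-subset variants (Section~\ref{sec:equivalence}) with separate upper and lower bound arguments. For one direction of each equivalence (namely, polynomial-time reductions from \textsc{Subset-Ada-Sync} to \textsc{Ada-Sync} and from \textsc{Det-Subset-Ada-Sync} to \textsc{Det-Ada-Sync}), I would append a small gadget to the given PDA that uses fresh input letters to siphon every state outside $I$ into an indistinguishable trap, leaving $I$ as the effective live set of possibilities; the converse is immediate by setting $I=Q$. This halves the work to proving upper bounds for the base variants and lower bounds for the subset variants.

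For the upper bounds I would express a synchroniser as an accepting run of an alternating pushdown system (APDS): at each pseudo-configuration $(S,A\eta)$ the existential player picks the next input letter $a$, while the universal player picks which equivalence class of $\sim^{a}_{S,A}$ is realised, producing a child in $\Succ(S,A\eta,a)$. Acceptance is reaching a pseudo-configuration of the form $(\{s\},\cdot)$ for a target state $s$ guessed at the root. I would then invoke the sparse-emptiness subroutine of Section~\ref{sec:upperbounds}, which runs in $n^{O(k^{2})}$ time on an APDS of size $n$ with an accepting run of at most $k$ leaves, with a leaf bound derived from $|I|$. For the deterministic variants this composite analysis delivers the sharp $O(n^{ck^{3}})$ bound on the subset problem itself, and hence $\EXPTIME$ when $k=|Q|$; for the nondeterministic variants a further subset-construction on top of the APDS absorbs the extra nondeterminism and yields $\TWOEXPTIME$.

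For the lower bounds I would reduce from the acceptance problem for alternating Turing machines that are space-bounded by $2^{n}$ (giving $\TWOEXPTIME$-hardness) and by $n$ (giving $\EXPTIME$-hardness). The stack of the PDA stores the ATM tape, using a binary-counter encoding to achieve exponential width in the first case; the observer's input letters simulate the ATM's existential transitions; and the equivalence classes of transitions simulate the ATM's universal branching, in that two universal successors remain indistinguishable to the observer until their stack contents diverge. Targeting the subset variants directly is convenient here, because the subset $I$ can naturally encode the initial ATM configuration together with the existential/universal mode bookkeeping.

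The principal obstacle I foresee is proving the $n^{O(k^{2})}$ bound for sparse-emptiness on APDS: since ordinary APDS emptiness is already $\EXPTIME$-complete, extracting the leaf-sensitive refinement requires a delicate combinatorial analysis of how parallel cuts through an accepting run can be extended without blowing up the shared stack state. A secondary challenge is establishing the polynomial-time equivalence between base and subset variants; the corresponding equivalence is known to fail (or remain open) for many other synchronisation problems, so this step demands a PDA-specific construction rather than a generic reduction.
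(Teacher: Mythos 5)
Your overall architecture (subset/full equivalences, an alternating-pushdown-system encoding of synchronisers for the upper bounds, sparse emptiness for the deterministic case, ATM simulations for the lower bounds) matches the paper's, but there is a genuine gap in your lower bound for the deterministic variants. Your reduction template simulates the ATM's universal branching by having the observer's knowledge set split into several successor pseudo-configurations, which requires a \emph{single} state to push several different strings on the same input letter and top-of-stack symbol --- i.e., the constructed PDA is inherently non-deterministic. Applied to linearly space-bounded ATMs this only shows $\EXPTIME$-hardness of {\scshape Ada-Sync}, which is subsumed by your $\TWOEXPTIME$ bound; it says nothing about {\scshape Det-Ada-Sync}. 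Indeed, in a deterministic PDA the successor knowledge sets \emph{partition} the current one (this is exactly the paper's Proposition~\ref{prop:equi-classes-small-size}), so each branch shrinks and unbounded universal replication of a full configuration is impossible. The paper circumvents this by routing the $\EXPTIME$-hardness through \emph{non-alternating} extended pushdown systems (Boolean programs): the exponential blow-up for the deterministic case comes entirely from the exponentially many variable assignments carried in the knowledge set, with the ATM's alternation eliminated beforehand by a DFS traversal of the computation tree stored on the stack. Your proposal needs some analogue of this step; as written, the $\EXPTIME$-hardness of {\scshape Det-Ada-Sync} is unproved.

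Two secondary issues. First, your reduction from {\scshape Subset-Ada-Sync} to {\scshape Ada-Sync} via ``fresh input letters'' that siphon states outside $I$ into a trap is fragile: if the trap is a genuinely new dead state the instance becomes unsynchronisable, and if nothing forces the observer to play the fresh letter \emph{first}, a synchroniser from $(Q,\gamma)$ need not project to one from $(I,\gamma)$. The paper instead prepends a fresh \emph{stack} symbol $\#$ whose forced first pop sends every state outside $I$ to a fixed state of $I$; since $\#$ is never pushed again, the collapse happens at the root no matter which letter is input. Second, for the non-deterministic upper bound the leaf bound derived from $|I|$ does not hold (it relies on determinism), so the sparse-emptiness subroutine is not applicable there; one simply runs the standard saturation-based $\EXPTIME$ emptiness algorithm on the exponential-size alternating pushdown system, and your ``further subset-construction'' is neither needed nor the right mechanism for alternating systems with a stack.
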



\section{Equivalence of Various Formulations}\label{sec:equivalence}
In this section, we show that the problems {\scshape Ada-Sync} and {\scshape Subset-Ada-Sync} are
polynomial-time equivalent to each other. A similar result is also shown for their corresponding deterministic versions.
We note that such a result is not true for finite-state (Moore) machines (Table 1 of~\cite{PartialObservation2}) and so we provide a proof of this here, because
it illustrates the significance of the stack in the pushdown version.
\begin{lemma}\label{lem:equiv-subset-sync}
    {\scshape Ada-Sync} (resp. {\scshape Det-Ada-Sync}) is polynomial time
    equivalent to {\scshape Subset-Ada-Sync} (resp. {\scshape Det-Subset-Ada-Sync}).
\end{lemma}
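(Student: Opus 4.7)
The plan is to prove the equivalence in both directions, and for both the general and the deterministic variants. The easy direction, reducing {\scshape Ada-Sync} to {\scshape Subset-Ada-Sync}, is immediate: an instance $(\Pc, \gamma)$ of {\scshape Ada-Sync} is precisely the instance $(\Pc, Q, \gamma)$ of {\scshape Subset-Ada-Sync}, and this identity transformation preserves determinism.

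The interesting direction reduces {\scshape Subset-Ada-Sync} to {\scshape Ada-Sync} by exploiting the observability of the stack: after a single designated input, the induced stack change will reveal whether the hidden initial state lay in $I$ or not. Given an instance $(\Pc = (Q, \Sigma, \Gamma, \delta), I, \gamma)$, I would construct $\Pc'$ over the same state set $Q$ with enlarged alphabets $\Sigma' := \Sigma \cup \{\$\} \cup \{b_s : s \in Q\}$ and $\Gamma' := \Gamma \cup \{\#, \#'\}$, and take $\gamma' := \# \gamma$ as the new initial stack content. On input $\$$ at stack top $\#$, every $q \in I$ would pop $\#$ and push $\varepsilon$ while staying in $q$, whereas every $q \in Q \setminus I$ would pop $\#$ and push $\#'$ while staying in $q$. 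Since these two families push different strings, they fall in distinct equivalence classes of $\sim_{Q,\#}^{\$}$, and hence the two possible successor pseudo-configurations of $(Q, \#\gamma)$ under $\$$ are exactly $(I, \gamma)$ and $(Q \setminus I, \#' \gamma)$. The rest of $\delta'$ is tuned so that: (i) at top $\#$, every input other than $\$$ is a no-op in every state, forcing the observer to eventually play $\$$; (ii) at top $A \in \Gamma$, the $\Sigma$-inputs follow the original $\delta$ while $\$$ and every $b_s$ are no-ops, so the $(I, \gamma)$-branch simulates $\Pc$ faithfully; and (iii) at top $\#'$, each $b_s$ sends every state to $s$ (pushing $\#'$ back) while all other inputs are no-ops, so the $(Q \setminus I, \#'\gamma)$-branch can be synchronised to $\{s\}$ with a single input $b_s$.

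For the forward direction, a $\Pc$-synchroniser of $(I, \gamma)$ to some $s$ would lift to a $\Pc'$-synchroniser from $(Q, \#\gamma)$ to $s$: play $\$$ first, run the original synchroniser on the $(I, \gamma)$-branch, and play $b_s$ once on the $(Q \setminus I, \#'\gamma)$-branch. For the converse, any $\Pc'$-synchroniser of $(Q, \#\gamma)$ must eventually play $\$$ (every other input at top $\#$ leaves the pseudo-configuration fixed), and the resulting $(I, \gamma)$-subtree uses only inputs whose effect on $Q$ at tops in $\Gamma$ is either a no-op or given by $\delta$; pruning the no-op steps yields a $\Pc$-synchroniser of $(I, \gamma)$ in $\Pc$. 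The construction is polynomial in $|\Pc|$, and every triple $(q, a, A)$ in $\delta'$ has a unique successor, so determinism is preserved and the {\scshape Det-} variants are handled simultaneously.

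The main obstacle I anticipate is the careful bookkeeping in parts (i)--(iii) of the construction: each auxiliary input must genuinely be a no-op, in the sense that all of its transitions at the relevant top lie in a single equivalence class whose $\nex$ equals the current state set, in every stack region where it is not intended to act. Otherwise the observer could short-circuit the synchronisation without actually solving the $\Pc$-instance on $(I, \gamma)$, trivialising the reduction. Verifying that no auxiliary transition accidentally merges or splits the two branches in an unintended way, and that no letter supplies a free synchronisation shortcut, is the non-routine part of the argument.
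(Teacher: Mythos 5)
Your proof is correct, and the overall skeleton (the trivial direction via $I=Q$; the interesting direction via a fresh top-of-stack marker $\#$ whose removal by the first input converts the knowledge set from $Q$ into $I$) matches the paper's. Where you diverge is in how $Q\setminus I$ is disposed of. You keep it alive as a second branch: the states outside $I$ push a distinguishable symbol $\#'$, and you then add a family of letters $b_s$ that synchronise that branch for free, which forces you (as you yourself note) to verify that none of the auxiliary letters leaks a shortcut into the $(I,\gamma)$-branch. The paper avoids the second branch altogether with a one-line trick: fix some $q_I\in I$ and let every state outside $I$ pop $\#$ and move \emph{into} $q_I$, while states in $I$ pop $\#$ and stay put. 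Since all of these transitions push the same string ($\varepsilon$), they form a single equivalence class with $\nex$ equal to $I\cup\{q_I\}=I$, so $(Q,\#\gamma)$ has the unique successor $(I,\gamma)$ under \emph{any} input letter — no new input letters, no $\#'$, no second branch to synchronise, and the converse direction reduces to the observation that $\#$ never reappears. Your construction buys nothing extra here and carries more verification burden, but it does work; the paper's redirection into $q_I$ is the cleaner route and eliminates exactly the bookkeeping you flag as the non-routine part.
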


\begin{proof}
    It suffices to show that
    {\scshape Subset-Ada-Sync} (resp. {\scshape Det-Subset-Ada-Sync})
    can be reduced to  {\scshape Ada-Sync} (resp. {\scshape Det-Ada-Sync})  
    in polynomial time.
    
    Let $\Pc = (Q,\Sigma,\Gamma,\delta)$ be a PDA with $I \subseteq Q$
    and $\gamma \in \Gamma^*$. Let $q_{I}$ be some fixed state in the subset $I$.
    Construct $\Pc'$ from $\Pc$ by  
    adding a new stack letter $\#$ and the following new transitions: Upon reading any $a \in \Sigma$, 
    if the top of the stack is $\#$, then any state $q \in I$ pops $\#$ and stays at $q$ whereas any state $q \notin I$ pops $\#$ and moves to $q_{I}$. 
    Notice that $\Pc'$ is deterministic if $\Pc$ is.
    
    It is clear that if $(I,\gamma) \xRightarrow[\Pc]{} s$ for some state $s$, then $(Q,\#\gamma) \xRightarrow[\Pc']{} s$. We now claim that the
    other direction is true as well. To see this, suppose there
    is a synchroniser in $\Pc'$ (say $T$) between $(Q,\#\gamma)$ and some state $s$. It is easy to see that, irrespective of the label of 
    the outgoing edge from the root of $T$, there is only one child of the root 
    which is labelled by $(I,\gamma)$. Now, no transition pushes $\#$
    onto the stack and so nowhere else in the synchroniser does $\#$ 
    appear in the label of some vertex. It is then
    easy to see that if we remove the root of $T$, we 
    get a synchroniser between $(I,\gamma)$ and $s$ in $\Pc$.
\end{proof}

Lemma~\ref{lem:equiv-subset-sync} allows us to 
introduce a series of problems which we can prove are poly-time equivalent
to {\scshape Ada-Sync}. The reason to consider these problems is that lower bounds for these are substantially easier to prove than for {\scshape Ada-Sync}. The three problems are 
as follows:
\begin{enumerate}
    \item {\scshape Given-Sync}: Given a PDA $\Pc$, 
a subset $I$, a word $\gamma$ and also \emph{a state $s$}, check if  $(I,\gamma) \Rightarrow s$.
    \item {\scshape Super-Sync} has the same input as {\scshape Given-Sync},
except we ask if  $(I,\gamma) \xRightarrow{\text{sup}} s$.
    \item {\scshape Special-Sync} is the same as {\scshape Super-Sync} but restricted to inputs where $\gamma$ is $\bot$.
\end{enumerate}

\begin{restatable}{lemma}{ThmEquivSuperSync}\label{lem:equiv-super-sync}
    {\scshape Subset-Ada-Sync}, {\scshape Given-Sync}, {\scshape Super-Sync}
    and {\scshape Special-Sync} are all poly. time equivalent.
    Further the same applies for their corresponding deterministic versions.
\end{restatable}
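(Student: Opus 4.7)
The plan is to close a polynomial-time cycle
\[
  \textsc{Subset-Ada-Sync} \leq_p \textsc{Given-Sync} \leq_p \textsc{Super-Sync} \leq_p \textsc{Special-Sync} \leq_p \textsc{Subset-Ada-Sync},
\]
in which every reduction is designed to preserve determinism of the input PDA, so that the analogous cycle holds for the deterministic versions as well. The first reduction is immediate: given $(\Pc,I,\gamma)$, iterate over all $s\in Q$ and query \textsc{Given-Sync} on $(\Pc,I,\gamma,s)$; this incurs only polynomially many queries.

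For \textsc{Given-Sync} $\leq_p$ \textsc{Super-Sync}, I augment $\Pc$ with a fresh letter $\#$, a fresh target $s'$, and a sink $d$, arranged so that $\#$ pops the top stack symbol from $s$ (pushing the empty word and staying at $s$) and, once $\bot$ is reached, moves to $s'$ while leaving $\bot$ in place; every state other than $s$ is sent to $d$ by $\#$, with the pushed word chosen to differ from that of $s$ in every case. The equivalence classes of $\sim^{\#}_{S,A}$ then separate the $s$-class from the rest, so any use of $\#$ at a pseudo-configuration $(S,A\eta)$ with $|S|\ge 2$ produces a $(\{d\},\ldots)$-child that destroys the prospective super-synchroniser. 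Hence super-synchronisers for $s'$ in the new PDA correspond bijectively to synchronisers for $s$ in $\Pc$ extended at each leaf by a $\#$-chain terminating at $(\{s'\},\bot)$. For \textsc{Super-Sync} $\leq_p$ \textsc{Special-Sync}, I prepend a deterministic ``push-$\gamma$'' preamble using fresh letters and fresh stack markers that force the observer to push the symbols of $\gamma$ onto $\bot$ in the correct order before any original letter becomes productive; after the preamble, the pseudo-configuration is exactly $(I,\gamma)$, and the remaining super-synchronisation task coincides with the original instance.

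The most delicate reduction is \textsc{Special-Sync} $\leq_p$ \textsc{Subset-Ada-Sync}, because the latter quantifies the target existentially, and so I must force it to be a single fresh state whose reachability encodes the super-synchronisation condition. Given $(\Pc,I,\bot,s)$, I double every state, adding primed copies $q'$ and duplicating each $\Pc$-transition $(p,a,A,r,\gamma)$ into $(p,a,A,r,\gamma)$ and $(p',a,A,r',\gamma)$; starting from $I'=I\cup\{q':q\in I\}$, every subset reachable by original letters preserves this paired structure and therefore never collapses to a singleton of an original state. I then add a fresh letter $\star$, a fresh target $s^*$, and a pair of mutually indistinguishable dead states $\{d_1,d_2\}$, designed so that $\star$ moves both $s$ and $s'$ to $s^*$ exactly when the stack top is $\bot$ (leaving $\bot$ in place), whereas every other application of $\star$ feeds the automaton into the inseparable pair $\{d_1,d_2\}$ (an absorbing set under every letter). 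Consequently $s^*$ is the unique synchronisable target, every $\star$-edge in any synchroniser for $s^*$ has source $(\{s,s'\},\bot)$, and pruning those edges yields a super-synchroniser for $s$ in $\Pc$ from $(I,\bot)$.

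The main obstacle throughout is fine-grained control of the equivalence classes of $\sim^{a}_{S,A}$, which depend solely on the pushed word and not on the source or destination state: a single unintended coincidence of pushed words could either merge a ``good'' branch with a ``bad'' one, or let a premature use of an auxiliary letter slip by unpunished. I side-step this by pushing distinct, state-indexed markers whenever a separation is required and identical words only when a merger is desired. Determinism is preserved in each construction because every new transition assigns a unique target to each (state, letter, stack-top) triple.
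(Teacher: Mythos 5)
Your proposal is correct in substance and relies on the same gadgets as the paper, but it organises them differently: you close one cycle of four reductions, whereas the paper proves three pairwise equivalences ({\scshape Subset-Ada-Sync} $\equiv$ {\scshape Given-Sync}, {\scshape Given-Sync} $\equiv$ {\scshape Super-Sync}, {\scshape Super-Sync} $\equiv$ {\scshape Special-Sync}), each with its own forward and backward construction. Three points of divergence are worth recording. First, your link from {\scshape Subset-Ada-Sync} to {\scshape Given-Sync} is a Turing reduction (query every candidate target $s$), while the paper gives a many-one reduction via fresh letters $\text{decide}_q$ and $\text{done}_q$ that force the observer to commit to a target in advance; your version is simpler and suffices here because the relevant classes are closed under polynomial-time Turing reductions, but it is strictly weaker than a Karp reduction if ``polynomial-time equivalent'' is read in the many-one sense. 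Second, your {\scshape Special-Sync}-to-{\scshape Subset-Ada-Sync} gadget --- two indistinguishable copies of the automaton, a fresh target reachable only from the paired copies of $s$ when the stack top is $\bot$, and an inseparable absorbing pair $\{d_1,d_2\}$ punishing every misuse of $\star$ --- compresses into one step what the paper does in two (its doubling construction for {\scshape Given-Sync} $\to$ {\scshape Subset-Ada-Sync} plus its empty-stack test for {\scshape Super-Sync} $\to$ {\scshape Given-Sync}); using \emph{two} dead states is exactly right, since a single dead sink would itself become a legal synchronisation target, and your invariant that $\Sigma$-letters preserve the paired structure so that every $\star$-edge of a successful synchroniser has source $(\{s,s'\},\bot)$ mirrors the paper's ``Facts A--D'' arguments. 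Third, the one genuinely under-specified step is {\scshape Super-Sync} to {\scshape Special-Sync}: pushing $\gamma$ symbol by symbol with fresh letters and stack markers leaves open how the markers are cleaned up so that the pseudo-configuration after the preamble is exactly $(I,\gamma)$, and how out-of-order or premature inputs are neutralised. Since a transition may push an arbitrary word from $\Gamma^*$, the paper's construction --- one primed state $q'$ per $q\in I$ that, on any input, pushes all of $\gamma$ in a single step and moves to $q$ --- is both simpler and sidesteps these issues; I would adopt it. A last small point: in your {\scshape Given-Sync} to {\scshape Super-Sync} step the correspondence is not bijective (the observer may interleave $\#$ with original letters), but the truncate-each-branch-at-its-first-$\#$-edge argument you implicitly need still goes through, exactly as in the paper.
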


Because of this lemma, for the rest of this paper, we will only be concerned with the {\scshape Special-Sync} problem, where given a PDA $\Pc$, 
a subset $I$ and a state $s$, we have to decide if $(I,\bot) \xRightarrow[]{\text{sup}} s$.

\section{How Hard is it to Solve Adaptive Synchronisation}\label{sec:lower-bounds}
To prove the lower bounds, we introduce the notion of an alternating extended pushdown system (AEPS), which is an extension of pushdown systems with Boolean variables and alternation. 
\subsection{Alternating Extended Pushdown Systems}

An \emph{alternating extended pushdown system} (AEPS) $\Ac$ is a tuple $(Q,V,\Gamma,\Delta,\init,\fin)$ where
$Q$ and $V$ are finite sets of states and Boolean variables respectively, $\Gamma$ is the stack
alphabet, $\init, \fin \in Q$ are the initial and final states respectively.
$\Ac$ has no input letters but it has a stack to which it can pop and push letters from $\Gamma$.
Each variable in $V$ is of Boolean type and a transition of $\Ac$ could apply simple tests on these variables and depending on the outcome, can update their values.
A configuration of $\Ac$ is a tuple $(q,\gamma,F)$ where $q \in Q, \gamma \in \Gamma^*$
and $F : V \to \{0,1\}$ is a function assigning a Boolean value to each variable.

Let $\test$ denote the set of \emph{tests} given by $\{v ?= b : v \in V, b \in \{0,1\}\}$ and
let $\cmmd$ denote the set of \emph{commands} given by $\{v \mapsto b : v \in V, b \in \{0,1\} \}$.
A \emph{consistent command} is a conjunction of elements
from $\cmmd$ such that for every $v \in V$, both $v \mapsto 0$ and $v \mapsto 1$
are not present in $\cmmd$.  
The transition relation $\Delta$ consists of transitions of the form
$(q,A,G) \hookrightarrow \{(q_1,\gamma_1,C_1),\dots,(q_k,\gamma_k,C_k)\}$ 
where $q,q_1,\dots,q_k \in Q$, $A \in \Gamma ,\gamma_1,\dots,\gamma_k \in \Gamma^*$, 
$G$ is a conjunction of elements from $\test$ and each $C_i$ is a 
consistent command. Intuitively, at a configuration $(q,A\gamma,F)$
the machine \emph{non-deterministically} selects a transition of the form
$(q,A,G) \hookrightarrow \{(q_1,\gamma_1,C_1),\dots,(q_k,\gamma_k,C_k)\}$
such that the assignment $F$ satisfies the conjunction $G$ and then \emph{forks} into $k$ copies in the configurations $(q_1,\gamma_1\gamma,F[C_1]),\dots,(q_k,\gamma_k\gamma,F[C_k])$
where $F[C_i]$ is the function obtained by updating $F$ according to the command $C_i$.
With this intuition in mind, we say that a transition $(q,A,G) \hookrightarrow \{(q_1,\gamma_1,C_1),\dots,(q_k,\gamma_k,C_k)\}$ is enabled at 
a configuration $(p,B\gamma,F)$ iff $p = q, B = A$ and $F$ satisfies all the 
tests in $G$.

A \emph{run} from a configuration $(q,\eta,H)$ to a configuration $(q',\eta',H')$ is a tree satisfying the following properties: The root is labelled by $(q,\eta,H)$. If some internal node $n$ is labelled by $(p,A\gamma,F)$ then there exists a transition 
$(p,A,G) \hookrightarrow \{(p_1,\gamma_1,C_1),(p_2,\gamma_2,C_2),\dots,(p_k,\gamma_k,C_k)\}$
which is enabled at $(p,A\gamma,F)$ such that the children of $n$ are labelled by $(p_1,\gamma_1\gamma,F[C_1])$,$\dots$, $(p_k,\gamma_k\gamma,F[C_k])$, where 
$F[C_i](v) = b$ if $C_i$ contains a command of the form $v \mapsto b$ and 
$F[C_i](v) = F(v)$ otherwise. Finally all the leaves
are labelled by $(q',\eta',H')$. If a run exists between $(q,\eta,H)$
and $(q',\eta',H')$ then we denote it by $(q,\eta,H) \xrightarrow[\Ac]{*} (q',\eta',H')$. 
An \emph{accepting run} from a configuration $(q,\eta,H)$ is a run
from $(q,\eta,H)$ to $(\fin,\bot,\mathbf{0})$ where $\mathbf{0}$ is the zero function.
An accepting run of an AEPS is simply an accepting run from the initial configuration $(\init,\bot,\mathbf{0})$. The emptiness problem is then to decide whether a given AEPS has an accepting run.

By a simple adaptation of the $\EXPTIME$-hardness proof for emptiness of 
alternating pushdown systems which have no Boolean variables (Theorem 5.4 of ~\cite{Alternation}, Prop. 31 of~\cite{PushdownGames})
we prove that
\begin{restatable}{lemma}{LemABPTwoEXP}\label{lemma:ABP-2EXP}
    The emptiness problem for AEPS is $\TWOEXPTIME$-hard.
\end{restatable}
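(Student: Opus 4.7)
The plan is to lift the $\EXPTIME$-hardness proof for emptiness of alternating pushdown systems (Theorem~5.4 of~\cite{Alternation} and Proposition~31 of~\cite{PushdownGames}) to a $\TWOEXPTIME$ lower bound by exploiting the exponential compression that the Boolean variables of an AEPS afford. Since $\TWOEXPTIME$ coincides with alternating exponential space, the reduction will start from the acceptance problem for an alternating Turing machine $M$ running in space $2^{p(n)}$ on an input $w$ of length $n$, and produce an AEPS of size polynomial in $n$ whose emptiness is equivalent to $M$ accepting $w$.

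First I would recall the shape of the cited reduction: given a polynomial-space alternating TM, one builds an alternating pushdown system whose stack encodes the tape contents and whose finite control holds a pair consisting of the TM state and the head position, where the head position is affordable in polynomially many states because it ranges over $\{1,\dots,p(n)\}$. I would then try to run essentially the same simulation on $M$, with the catch that the head position now ranges over $\{0,\dots,2^{p(n)}-1\}$, which does not fit into polynomially many control states. This is exactly where the Boolean variables come in: I would introduce $p(n)$ Boolean variables and store the head position in binary. Incrementing, decrementing, and bit-testing this counter is straightforward using guards (conjunctions of tests $v ?= b$) and consistent commands ($v \mapsto b$). The existential and universal branching of $M$ is mirrored, respectively, by the non-determinism between AEPS transitions and by the $k$-way fork inside a single AEPS transition.

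The main obstacle is the mismatch between a LIFO stack and the arbitrary head movement of $M$: the head can be at any one of the $2^{p(n)}$ cells, while the stack offers only top access. I would resolve this in the same spirit as the cited $\EXPTIME$-hardness proof, by not maintaining the tape as a single static stack object at all; instead, the AEPS writes out successive configurations of $M$ symbol-by-symbol onto the stack, and uses alternation to universally verify that each pair of consecutive configurations is consistent with one transition of $M$. A secondary binary counter, again encoded by $O(p(n))$ Boolean variables, tracks the position of the symbol currently being pushed, and the head-position counter is compared against it to enforce the rule ``new symbol equals old symbol except at the head position, where the transition function applies''. The hardest part of the write-up will be bookkeeping this verification pass cleanly so that the size of the resulting AEPS remains polynomial in $n$; all of the individual tests and updates needed are local operations on $O(p(n))$ bits and thus encodable by polynomially many AEPS transitions, so once the bookkeeping is set up the reduction goes through and yields the desired $\TWOEXPTIME$ lower bound.
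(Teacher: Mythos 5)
Your proposal follows essentially the same route as the paper: reduce from acceptance of an alternating exponential-space Turing machine, use the Boolean variables as a polynomial-size binary counter so that exponentially long configurations can be pushed onto the stack, guess the run configuration-by-configuration with AEPS forking mirroring the TM's universal branching, and then use further alternation to verify consistency of consecutive configurations. The only divergence is cosmetic bookkeeping (you track the head position in a dedicated counter and compare it to a position counter, whereas the paper keeps the state symbol embedded in the pushed configuration and aligns cells by popping exactly $B+1$ symbols and checking a local window), so the argument goes through as in the paper.
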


An AEPS $\Ac$ is called a \emph{non-deterministic extended pushdown system} (NEPS) if every transition of $\Ac$ is of the form $(p,A,F) \hookrightarrow \{(q,\gamma,C)\}$. 
By Theorem 2 of~\cite{BooleanPrograms} we have that
\begin{restatable}{lemma}{LemmaNBPEXP}\label{lemma:NBP-EXP}
    The emptiness problem for NEPS is $\EXPTIME$-hard.
\end{restatable}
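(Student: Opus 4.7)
The plan is to give a polynomial-time reduction from the reachability problem for recursive Boolean programs (the model studied in~\cite{BooleanPrograms}), which Theorem 2 of that paper shows to be \EXPTIME-hard, to the emptiness problem for NEPS. Intuitively, an NEPS is essentially a sequential pushdown program that manipulates finitely many Boolean variables, so the two models are equivalent up to polynomial blow-up; the only non-trivial point is taking care of procedure-local variables.

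Concretely, given a Boolean program $\mathcal{B}$ with global variables $G$, procedures $P_1,\dots,P_m$ and target control location $\ell_f$, I would construct an NEPS $\Ac = (Q,V,\Gamma,\Delta,\init,\fin)$ as follows. The Boolean variables $V$ of $\Ac$ are the global variables $G$. The stack alphabet $\Gamma$ contains, for each control location $\ell$ of $\mathcal{B}$, symbols of the form $(\ell,\rho)$ where $\rho$ is a valuation of the local variables of the procedure containing $\ell$; the initial state pushes an initial symbol corresponding to the entry point of the main procedure. Statements of $\mathcal{B}$ are then simulated by single transitions of $\Ac$ of the form $(q,A,G)\hookrightarrow\{(q',\gamma',C)\}$: a guarded assignment to a global variable becomes a test on $V$ paired with a command in $\cmmd$, together with a stack-top rewrite that advances the control location; a procedure call pushes a new symbol $(\ell^{P_i}_{\init},\zero)$ on top of the return-site symbol; and a return pops the top symbol so that the caller resumes with its saved locals. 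Finally, once $\ell_f$ is reached, a set of auxiliary transitions drains the stack, zeroes out all variables in $V$, and moves to $\fin$ so that the configuration $(\fin,\bot,\zero)$ is hit exactly when $\ell_f$ is reachable in $\mathcal{B}$.

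The main obstacle is the treatment of local variables: the NEPS has only global Boolean variables in $V$, so procedure-local state has to be encoded somewhere else. Encoding a full local valuation into the stack symbol $(\ell,\rho)$ is only polynomial when the number of locals per procedure is $O(\log n)$. The standard remedy, which I would apply, is to preprocess $\mathcal{B}$ so that every procedure has at most logarithmically many live locals (padding and splitting statements as necessary), or alternatively to save the locals of the caller onto the stack bit-by-bit at call time and restore them bit-by-bit at return time using a gadget that uses the global variables in $V$ as a scratch space. Either encoding keeps $|\Gamma|$ and $|\Delta|$ polynomial in $|\mathcal{B}|$.

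Correctness follows by a straightforward bisimulation: each non-deterministic step of $\mathcal{B}$ corresponds to a transition of $\Ac$ on the same global valuation and stack, and the reconstruction of locals from the top-of-stack symbol is faithful. Hence $\ell_f$ is reachable in $\mathcal{B}$ if and only if $\Ac$ has an accepting run from $(\init,\bot,\zero)$ to $(\fin,\bot,\zero)$, yielding \EXPTIME-hardness of NEPS emptiness.
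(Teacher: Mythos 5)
Your overall strategy is legitimate and genuinely different from the paper's: the paper's appendix proof is a self-contained reduction from the acceptance problem for alternating \emph{linearly bounded} Turing machines, in which the Boolean variables of the NEPS hold an entire tape configuration and the stack is used to perform a DFS traversal of the run tree, with explicit forward and reverse modes. Your route instead goes through reachability in recursive Boolean programs, which is closer in spirit to the citation the paper itself gives for this lemma. The catch is that the paper's route never has to confront procedure-local state at all, whereas yours stands or falls on exactly that point, and this is where your write-up has a genuine gap.

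Concretely, your first remedy for local variables cannot work: the \EXPTIME-hardness of Boolean-program reachability resides precisely in procedures with \emph{linearly} many live locals (the hardness constructions store a machine configuration in the locals of a recursive procedure), so no polynomial-time preprocessing can bring the number of live-across-a-call locals down to $O(\log n)$ without collapsing the complexity of the source problem. Your second remedy is the right idea but is internally inconsistent as stated: you declare $V=G$ and place the local valuation $\rho$ inside the stack symbol $(\ell,\rho)$, which already makes $|\Gamma|$ exponential before any bit-by-bit serialisation can help. The correct construction is to take $V = G \uplus L$, where $L$ has one Boolean variable per local of the \emph{currently executing} procedure (of maximal arity over all procedures), keep only the control location and return site in the stack symbol, and at each call serialise the caller's $L$-valuation onto the stack one bit at a time (one transition per local, guarded by $v_i ?= b$ and pushing a symbol $b$), reinitialise $L$ for the callee, and reverse the process at return via commands $v_i \mapsto b$ as the bits are popped. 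With that repair $|\Gamma|$ and $|\Delta|$ stay polynomial, the bisimulation argument goes through, and your reduction is sound; as written, however, neither of your two proposed treatments of locals is a working construction.
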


\begin{remark}
    The hardness result for AEPS could also be inferred from Theorem 10 of~\cite{BooleanPrograms}. Because we use a different notation, for the sake of completeness, we provide the proofs of both of these lemmas in the appendix.
\end{remark}

 \subsection{Reduction from Alternating Extended Pushdown Systems}

We now give a reduction from the emptiness problem for AEPS to {\scshape Special-Sync}.
Let $\Ac = (Q,V,\Gamma,\Delta,\init,\fin)$ be an AEPS. 
Without loss of generality, we can assume that 
if $(q,A,G) \hookrightarrow \{(q_1,\gamma_1,C_1),
(q_2,\gamma_2,C_2),\dots,(q_k,\gamma_k,C_k)\} \in \Delta$, then $\gamma_i \neq \gamma_j$
for $i \neq j$. (This can be accomplished, by prefixing new characters to each $\gamma_i$, moving to some intermediate states and then popping the new characters and moving to the respective $q_i$'s).
Having made this assumption, the reduction is described below.

From the given AEPS $\Ac$, we now construct a pushdown automaton $\Pc$ as follows.
The stack alphabet of $\Pc$ will be $\Gamma$.
For each transition $t \in \Delta$, $\Pc$ will have an input letter $\inputt(t)$.
$\Pc$ will also have another input letter $\End$.
The state space of $\Pc$ will be the set $Q \cup (V \times \{0,1\}) \cup \{q_{\acc},q_{\rej}\}$, where $q_{\acc}$ and $q_{\rej}$ are two states, which on reading any input letter, will leave the stack untouched and simply stay at $q_{\acc}$ and $q_{\rej}$ respectively. 
\begin{figure}
    \centering
    \makebox[\textwidth][c]{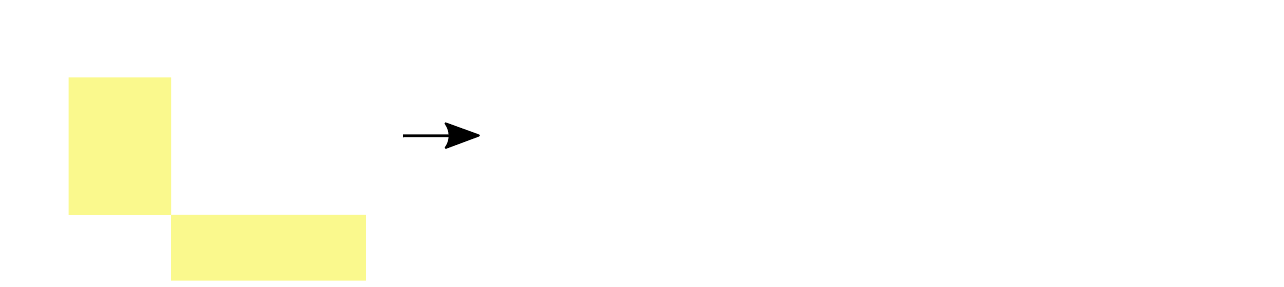}%
    \caption{Let $t$ be the transition $(q_1,A,[v_1?=0, v_3?=1]) \hookrightarrow \{(q_2, AB, [v_1\leftarrow 1, v_2\leftarrow 0]),(q_3,\epsilon, [v_2\leftarrow 0])\}$ in $\Ac$.  In $\Ac$, using $t$, the configuration $C_1 := (q_1,ABA\bot,[v_1=0,v_2=1,v_3=1])$ can
    fork into $C_2 := (q_2,ABBA\bot,[v_1=1,v_2=0,v_3=1])$ and 
    $(C_3 := q_3,BA\bot,[v_1=0,v_2=0,v_3=1])$. 
    The diagram here illustrates the
    simulation of this forking on the corresponding pseudo-configurations of $C_1,C_2,C_3$ that the automaton $\Pc$ will achieve when reading the letter $\inputt(t)$. 
    \label{fig:AEPStoPDA}}
\end{figure}

We now give an intuition behind the transitions of $\Pc$. Given an assignment $F : V \to \{0,1\}$ of the Boolean variables $V$, and a state $q$ of $\Ac$, we use 
the notation $\set{q,F}$ to denote the subset $\{q\} \cup \{(v,F(v)) : v \in V\}$ of states of $\Pc$. Intuitively, a configuration $(q,\gamma,F)$ of 
$\Ac$ is simulated by its corresponding \emph{pseudo-configuration} $(\set{q,F},\gamma)$ in $\Pc$.
This intuition is captured by Figure~\ref{fig:AEPStoPDA}, which 
gives an example of a step in $\Pc$.

Now we give a formal description of the transitions of $\Pc$. 
Let $t = (q,A,G) \hookrightarrow \{(q_1,\gamma_1,C_1),\dots,(q_k,\gamma_k,C_k)\}$ be
a transition of $\Ac$.  Let $p \in Q$. Upon reading $\inputt(t)$, if $p \neq q$ then $p$ immediately 
moves to the $q_\rej$ state. Further, even state $q$ moves to the $q_\rej$ state
if the top of the stack is not $A$. However, if the top of the 
stack is $A$, then $q$ pops $A$ and \emph{non-deterministically} pushes 
any one of $\gamma_1,\dots,\gamma_k$ onto the stack and if it pushed $\gamma_i$, then $q$
moves to the state $q_i$.

Let $(v,b) \in V \times \{0,1\}$. Upon reading $\inputt(t)$, if the test $v ?= (1-b)$ appears in the guard $G$, then $(v,b)$ immediately moves to the $q_{\rej}$ state. (Notice
that this is a purely syntactical condition on $\Ac$). 
Further, if the top of the stack is not $A$, then once again $(v,b)$ moves to $q_{\rej}$.
If these two cases do not hold, then $(v,b)$ pops $A$ and \emph{non-deterministically} picks an $i\in \{1,\dots k\}$ and pushes $\gamma_i$ onto the stack. Having pushed $\gamma_i$, if $C_i$ does not update the variable $v$,  it stays in state $(v,b)$; otherwise if $C_i$ has a command $v \mapsto b'$, it moves to $(v,b')$.

Finally, upon reading $\End$, the states in $\set{\fin,\zero}$ move to the 
$q_{\acc}$ state and all the other states in $Q \cup (V \times \{0,1\})$ move to the $q_{\rej}$ state. 

We now claim that $\Ac$ has an accepting run iff there
is a super-synchronizer in $\Pc$ between $(\set{\init,\zero},\bot)$ and $q_{\acc}$.
Intuitively, any accepting run of $\Ac$ can be simulated by the corresponding pseudo-configurations in a manner similar to Figure~\ref{fig:AEPStoPDA} and once we arrive at
the pseudo-configuration $(\set{\fin,\zero},\bot)$,
we can input the letter $\End$ and synchronise to the state $q_{\acc}$.
For the reverse direction, we can show that any super-synchroniser between
$(\set{\init,\zero},\bot)$ and $q_{\acc}$ must be a simulation of 
an accepting run in $\Ac$.

Notice that $\Pc$ is deterministic if $\Ac$ is non-deterministic.  
Hence, by Lemmas~\ref{lemma:ABP-2EXP} and \ref{lemma:NBP-EXP},
\begin{restatable}{theorem}{ThmSyncHardness}\label{thm:sync-hardness}
    {\scshape Special-Sync}, {\scshape Subset-Ada-Sync} and {\scshape Ada-Sync} are all $\TWOEXPTIME$-hard. {\scshape Det-Special-Sync}, {\scshape Det-Subset-Ada-Sync} and {\scshape Det-Ada-Sync} are all $\EXPTIME$-hard.
\end{restatable}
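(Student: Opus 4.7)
The plan is to verify the reduction described above and then combine it with the hardness of AEPS and NEPS emptiness, together with the equivalences from Section~\ref{sec:equivalence}, to derive all six hardness results at once. Concretely, I would prove the central claim that $\Ac$ has an accepting run if and only if $(\set{\init,\zero},\bot) \xRightarrow[\Pc]{\text{sup}} q_\acc$. Once this equivalence is established, Lemma~\ref{lemma:ABP-2EXP} yields $\TWOEXPTIME$-hardness of {\scshape Special-Sync}. Observing that when $\Ac$ is an NEPS every transition has $k=1$, so the only sources of non-determinism in $\Pc$---the choices of which $\gamma_i$ to push at the $Q$- and $(V\times\{0,1\})$-states---disappear and $\Pc$ becomes deterministic, the same reduction together with Lemma~\ref{lemma:NBP-EXP} yields $\EXPTIME$-hardness of {\scshape Det-Special-Sync}. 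The remaining hardness results follow by Lemmas~\ref{lem:equiv-subset-sync} and~\ref{lem:equiv-super-sync}, which hold in both the deterministic and non-deterministic settings.

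For the forward direction of the central claim, I would simulate an accepting run tree of $\Ac$ node-by-node: replace each configuration $(p,A\eta,F)$ by the pseudo-configuration $(\set{p,F},A\eta)$, and at an internal node firing transition $t$ label the outgoing edges by $\inputt(t)$. Thanks to the assumption $\gamma_i \neq \gamma_j$ for $i\neq j$, the equivalence classes of $\sim^{\inputt(t)}_{\set{p,F},A}$ are in bijection with the $k$ branches of $t$: the $i$-th class consists of the $Q$-state transition pushing $\gamma_i$ and the $(v,F(v))$-transitions that push $\gamma_i$ and update according to $C_i$, so $\word(E_i)=\gamma_i$ and $\nex(E_i)=\set{p_i,F[C_i]}$. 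Enabledness of $t$ at $(p,A\eta,F)$, namely that $F$ satisfies $G$ and the top of the stack is $A$, ensures that no $p$ or $(v,F(v))$ is thrown to $q_\rej$. At the leaves of the run the pseudo-configuration is $(\set{\fin,\zero},\bot)$, and reading $\End$ synchronises it to $(\{q_\acc\},\bot)$, producing a super-synchroniser.

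The main obstacle is the converse direction: showing that every super-synchroniser $T$ between $(\set{\init,\zero},\bot)$ and $q_\acc$ is necessarily the simulation of an accepting run of $\Ac$. The key invariant is that every non-leaf label of $T$ has the form $(\set{p,F},\eta)$ for some $p\in Q$ and $F\colon V\to\{0,1\}$; I would establish this by induction along root-to-leaf paths, crucially using that $q_\rej$ is absorbing and so cannot appear in the label of any descendant, in particular not in the singleton leaf. Given the invariant, at an internal node labelled $(\set{p,F},A\eta')$ the outgoing edge must either read $\inputt(t)$ for some $t=(p,A,G)\hookrightarrow\{(p_1,\gamma_1,C_1),\ldots,(p_k,\gamma_k,C_k)\}\in\Delta$ enabled at $(p,A\eta',F)$, whence the children enumerate exactly $(\set{p_i,F[C_i]},\gamma_i\eta')$ by the same $\sim$-class calculation as before; or read $\End$, which, because $\End$ does not modify the stack and maps $p\in Q$ to $q_\acc$ only when $p=\fin$ and the $(v,F(v))$-states to $q_\acc$ only when $F=\zero$, can appear solely immediately before a leaf and only at the label $(\set{\fin,\zero},\bot)$. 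Mapping each internal node $(\set{p,F},A\eta')$ back to the $\Ac$-configuration $(p,A\eta',F)$ and stripping the terminal $\End$-edges then produces an accepting run of $\Ac$, closing the proof.
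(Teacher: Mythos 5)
Your proposal is correct and follows essentially the same route as the paper's proof: the same reduction, the same forward simulation of the run tree using the $\gamma_i\neq\gamma_j$ assumption to identify the $\sim$-classes with the branches of a transition, and the same converse argument via the invariant that non-leaf labels have the form $(\set{p,F},\eta)$ (the paper's Facts A--D), with the absorbing state $q_\rej$ ruling out disabled transitions and misplaced $\End$-edges. The derivation of the deterministic case from NEPS and of the remaining problems via the equivalence lemmas also matches the paper.
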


\section{How Easy is it to Solve Adaptive Synchronisation}\label{sec:upperbounds}
In this section, we will give algorithms that solve {\scshape{Special-Sync}} and {\scshape{Det-Special-Sync}}. We first give a reduction from {\scshape{Special-Sync}} to the problem of checking emptiness in an alternating pushdown system, which we define below. Then, we show that for {\scshape Det-Special-Sync}, the same
reduction produces alternating pushdown systems with a ``modular'' structure,
which we exploit to reduce the running time.
\subsection{Adaptive Synchronisation for Non-deterministic PDA}\label{subsec:SpecialSync2EXP}
An alternating pushdown system (APS) is an alternating extended pushdown system which has no Boolean variables. Since there are no variables, we can suppress any notation corresponding to the variables, e.g., configurations can be just denoted by $(q,\gamma)$.
It is known that the emptiness problem for APS is in $\EXPTIME$ (Theorem 4.1 of~\cite{ReachabilityAnalysis}).
We now give an exponential time reduction from {\scshape Special-Sync} to the emptiness problem for APS. 

Let $\Pc = (Q,\Sigma,\Gamma,\delta)$ be a PDA with $I \subseteq Q$, $s \in Q$. Construct the following APS $\Ac_{\Pc} = (2^Q, \Gamma, \Delta, I, \{s\})$ where $\Delta$ is defined as follows:
Given $S \subseteq Q, a \in \Sigma$ and $A \in \Gamma$, let 
$E_1,\dots,E_k$ be the equivalence classes of the relation $\sim^a_{S,A}$ as defined in subsection~\ref{subsec:define-ada-sync}. Then, we have the following transition in $\Ac_{\Pc}$:
\begin{equation}~\label{eq:transitionsA_P}
  (S,A) \hookrightarrow \{(\nex(E_1),\word(E_1)),(\nex(E_2),\word(E_2)),\dots,(\nex(E_k),\word(E_k)\}  
\end{equation}

The following fact is immediate from the definition of a super-synchroniser and from the construction of $\Ac_{\Pc}$. 
\begin{proposition}\label{prop:correctnessA_P}
    Let $S \subseteq 2^Q, \gamma \in \Gamma^*$.
    Then a labelled tree $T$ is a super-synchroniser between $(S,\gamma)$ and $s$ in $\Pc$ if and only if $T$ is an accepting run from $(S,\gamma)$ in $\Ac_{\Pc}$.
\end{proposition}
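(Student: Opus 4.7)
The plan is to prove the proposition by a direct unpacking of definitions, observing that the two objects (super-synchroniser of $\Pc$ and accepting run of $\Ac_\Pc$) are defined by the same local rules once we identify pseudo-configurations of $\Pc$ with configurations of $\Ac_\Pc$. In particular, I would not need induction: the statement is that a given labelled tree $T$ satisfies one definition iff it satisfies the other, and each condition is expressible node-by-node.

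First, I would observe that the global conditions match: in both cases $T$ has root labelled $(S,\gamma)$, and by construction the final state of $\Ac_\Pc$ is $\{s\}$, so accepting runs of $\Ac_\Pc$ require every leaf to be labelled $(\{s\},\bot)$, which is exactly the definition of a super-synchroniser. It therefore remains to match the internal-node condition.

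Next, I would fix an internal node $v$ labelled by a pseudo-configuration $(T,A\eta)$ with $T \subseteq Q$, $A \in \Gamma$, and $\eta \in \Gamma^*$. In the super-synchroniser definition, all outgoing edges of $v$ share a common label $a \in \Sigma$, and the children of $v$ are exactly the elements of $\Succ(T,A\eta,a) = \{(\nex(E_i),\word(E_i)\eta) : i=1,\dots,k\}$, where $E_1,\dots,E_k$ are the equivalence classes of $\sim^{a}_{T,A}$. On the other hand, by Equation~\eqref{eq:transitionsA_P}, the transitions of $\Ac_\Pc$ enabled at $(T,A\eta)$ are precisely, one per input letter $a$, of the form $(T,A) \hookrightarrow \{(\nex(E_i),\word(E_i))\}_{i=1}^{k}$, and firing such a transition at $v$ produces children labelled $(\nex(E_i),\word(E_i)\eta)$. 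Thus ``choose outgoing label $a$'' in the super-synchroniser corresponds bijectively to ``fire the $a$-transition'' in $\Ac_\Pc$, yielding the same children, which gives both implications at once.

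I do not expect a genuine obstacle: the construction of $\Ac_\Pc$ has been engineered to make this correspondence tautological. The only thing I would take care of is the edge case in which $T^{a}_{T,A}$ is empty for a chosen letter $a$ (in which case no transition is defined in $\Ac_\Pc$, matching the fact that a super-synchroniser cannot use such an $a$ at $v$ either), and the fact that distinct equivalence classes $E_i \neq E_j$ yield distinct children (guaranteed since by definition of $\sim^{a}_{T,A}$ we have $\word(E_i) \neq \word(E_j)$ whenever $i \neq j$), so the number of children is the same in both formalisms.
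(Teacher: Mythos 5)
Your proposal is correct and matches the paper's treatment: the paper offers no explicit proof, stating the proposition is ``immediate from the definition of a super-synchroniser and from the construction of $\Ac_{\Pc}$'', and your node-by-node unpacking (matching the root and leaf conditions, and identifying the choice of edge label $a$ at an internal node with firing the transition of equation~(\ref{eq:transitionsA_P}) for that $a$) is exactly the intended justification. Your attention to the empty-$T^a_{S,A}$ case and to the distinctness of children via $\word(E_i)\neq\word(E_j)$ is a welcome extra dose of care.
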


By Theorem 4.1 of~\cite{ReachabilityAnalysis}, emptiness
for APS can be solved in exponential time and so 
\begin{theorem}
    {\scshape Special-Sync} is in $\TWOEXPTIME$
\end{theorem}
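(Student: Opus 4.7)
The plan is to appeal directly to Proposition~\ref{prop:correctnessA_P} together with the known exponential-time algorithm for emptiness of alternating pushdown systems due to~\cite{ReachabilityAnalysis}. Given an instance $(\Pc, I, s)$ of {\scshape Special-Sync}, I would first construct the APS $\Ac_{\Pc} = (2^Q, \Gamma, \Delta, I, \{s\})$ as described above. By Proposition~\ref{prop:correctnessA_P}, a super-synchroniser between $(I, \bot)$ and $s$ exists in $\Pc$ iff $\Ac_{\Pc}$ has an accepting run from $(I, \bot)$, i.e.\ iff $\Ac_{\Pc}$ is non-empty. Hence {\scshape Special-Sync} reduces to the APS-emptiness problem on input $\Ac_{\Pc}$.

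The remaining task is to bound the size of $\Ac_{\Pc}$ and compose it with the known complexity of APS emptiness. The state set of $\Ac_{\Pc}$ is $2^Q$, which has size $2^{|Q|}$, and exactly one transition of the form~(\ref{eq:transitionsA_P}) is generated per triple $(S, a, A) \in 2^Q \times \Sigma \times \Gamma$, so $|\Delta| \leq 2^{|Q|} \cdot |\Sigma| \cdot |\Gamma|$. The right-hand side of each such transition has at most $|\delta|$ branches (since the equivalence classes of $\sim^a_{S,A}$ partition a subset of $\delta$), and each $\word(E_i)$ is bounded by the maximum length of a string pushed by a transition of $\Pc$. Thus $|\Ac_{\Pc}|$ is single-exponential in $|\Pc|$. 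By Theorem~4.1 of~\cite{ReachabilityAnalysis}, APS emptiness can be decided in time exponential in the size of the APS, so running that algorithm on $\Ac_{\Pc}$ takes time doubly-exponential in $|\Pc|$, yielding the claimed $\TWOEXPTIME$ upper bound.

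I do not anticipate any genuine obstacle: the correctness of the reduction is already packaged in Proposition~\ref{prop:correctnessA_P}, and the remaining argument is an exponential-on-exponential calculation. The only point worth verifying carefully is that the blow-up is confined to the state set, so that each individual transition of $\Ac_{\Pc}$ stays polynomial in size and invoking the $\EXPTIME$ APS-emptiness procedure on an exponential-sized input does not secretly cost more than $\TWOEXPTIME$; this is exactly what the bounds on the branching degree and on $|\word(E_i)|$ ensure.
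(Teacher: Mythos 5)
Your proposal is correct and follows exactly the paper's own route: construct $\Ac_{\Pc}$ (an exponential-size APS), invoke Proposition~\ref{prop:correctnessA_P} for correctness of the reduction, and apply the $\EXPTIME$ emptiness algorithm of Theorem~4.1 of~\cite{ReachabilityAnalysis} to obtain the doubly-exponential bound. Your additional accounting of why the blow-up is confined to the state set (branching degree and $|\word(E_i)|$ staying polynomial) is a sound elaboration of what the paper leaves implicit.
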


\subsection{Adaptive Synchronisation for Deterministic PDA}
Let $\Pc = (Q,\Sigma,\Gamma,\delta)$ be a deterministic PDA with $I \subseteq Q, s \in Q$. We have the following proposition, whose proof follows from the fact that $\Pc$ is deterministic.
\begin{restatable}{proposition}{PropEquiClassesSmallSize}~\label{prop:equi-classes-small-size}
    Suppose $S \subseteq Q, a \in \Sigma, A \in \Gamma$ and 
    suppose $E_1,\dots,E_k$ are the equivalence classes of $\sim^a_{S,A}$.
    Then, $|S| \ge \sum_{i=1}^k |\nex(E_i)|$.
\end{restatable}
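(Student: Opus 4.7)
The plan is to exploit determinism to count the transitions in $T^a_{S,A}$ precisely, and then observe that the $\nex$-sets can only be smaller than their corresponding equivalence classes.

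First, I would unpack the definition. Recall $T^a_{S,A} = \{(p,a,A,q,\gamma) \in \delta : p \in S\}$. Because $\Pc$ is deterministic, for every $p \in S$ the set $\delta(p,a,A)$ has exactly one element, so there is exactly one transition in $T^a_{S,A}$ whose source state is $p$. Hence the map $(p,a,A,q,\gamma) \mapsto p$ is a bijection from $T^a_{S,A}$ onto $S$, giving $|T^a_{S,A}| = |S|$.

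Next, since $E_1,\dots,E_k$ partition $T^a_{S,A}$, we have $\sum_{i=1}^k |E_i| = |T^a_{S,A}| = |S|$. It then suffices to show $|\nex(E_i)| \leq |E_i|$ for each $i$. This follows because, by definition, $\nex(E_i) = \{q : (p,a,A,q,\word(E_i)) \in E_i\}$ is the image of $E_i$ under the projection to the target-state coordinate, and the image of any set under a function has cardinality at most that of the domain. Summing over $i$ yields
\[
\sum_{i=1}^k |\nex(E_i)| \;\leq\; \sum_{i=1}^k |E_i| \;=\; |S|,
\]
as required.

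There is no real obstacle here; the only point worth flagging is that determinism is used in a single, crucial place, namely to assert $|T^a_{S,A}| = |S|$. Without it, a state $p \in S$ could give rise to several transitions on $(a,A)$, which would blow up the right-hand side and invalidate the bound.
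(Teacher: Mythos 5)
Your proof is correct and follows essentially the same route as the paper's: determinism gives $|T^a_{S,A}| = |S|$, the equivalence classes partition $T^a_{S,A}$, and each $\nex(E_i)$ is a projection image of $E_i$, so the sum of their sizes is at most $|S|$. You merely spell out the final "it follows" step of the paper more explicitly, which is fine.
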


Now, given $\Pc$, consider the APM $\Ac_{\Pc} = (2^Q,\Gamma,\Delta,I,\{s\})$ that we have constructed in subsection~\ref{subsec:SpecialSync2EXP}.  By Proposition~\ref{prop:equi-classes-small-size}, we now have the following lemma.
\begin{restatable}{lemma}{LemSmallLeaves}
    For any $S \in 2^Q, \gamma \in \Gamma^*$, any accepting run of $\Ac_{\Pc}$
    from the configuration $(S,\gamma)$ has at most $|S|$ leaves.
\end{restatable}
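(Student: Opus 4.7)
The plan is to prove this by structural induction on the accepting run tree, using Proposition~\ref{prop:equi-classes-small-size} as the key arithmetic estimate.

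For the base case, suppose the accepting run tree from $(S,\gamma)$ consists of a single node. Since the run is accepting, that node must be labeled by the final configuration $(\{s\},\bot)$. Hence $S=\{s\}$, so $|S|=1$ and the number of leaves is also $1$, confirming the bound.

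For the inductive step, consider an accepting run tree $T$ rooted at $(S,A\eta)$, where the root is not a leaf. By the definition of a run in $\Ac_{\Pc}$ and the form of its transitions in equation~(\ref{eq:transitionsA_P}), the transition applied at the root has the shape
\[
(S,A) \hookrightarrow \{(\nex(E_1),\word(E_1)),\ldots,(\nex(E_k),\word(E_k))\},
\]
where $E_1,\ldots,E_k$ are the equivalence classes of $\sim^{a}_{S,A}$ for some $a \in \Sigma$. The children of the root are therefore labeled $(\nex(E_i),\word(E_i)\eta)$ for $i=1,\ldots,k$, and the subtree $T_i$ rooted at the $i$-th child is itself an accepting run of $\Ac_{\Pc}$ from $(\nex(E_i),\word(E_i)\eta)$. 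By the inductive hypothesis, $T_i$ has at most $|\nex(E_i)|$ leaves.

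Summing over the children and invoking Proposition~\ref{prop:equi-classes-small-size}, the total number of leaves of $T$ is bounded by $\sum_{i=1}^{k} |\nex(E_i)| \leq |S|$, which closes the induction. The only content in the argument lies in Proposition~\ref{prop:equi-classes-small-size}, which in turn is where determinism of $\Pc$ is crucial: each state $p \in S$ contributes at most one transition to $T^{a}_{S,A}$, so across all equivalence classes these transitions produce at most $|S|$ distinct next states in total. There is no real obstacle beyond lining up the induction with the definition of $\Ac_{\Pc}$ and checking that the base case corresponds exactly to the accepting leaf label $(\{s\},\bot)$.
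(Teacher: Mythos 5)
Your proof is correct and follows essentially the same route as the paper's: structural induction on the run tree, with the inductive step reading off the transition shape from equation~(\ref{eq:transitionsA_P}) and summing the per-child bounds via Proposition~\ref{prop:equi-classes-small-size}. Your base case is spelled out slightly more explicitly than the paper's (which just calls it trivial), but the argument is the same.
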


The following corollary follows from the lemma above.
\begin{corollary}~\label{cor:limited-leaves}
    Any accepting run of $\Ac_{\Pc}$ has at most $|I|$ leaves.
\end{corollary}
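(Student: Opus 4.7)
The plan is to derive the corollary as a direct specialisation of the preceding lemma. An accepting run of $\Ac_{\Pc}$, by definition of the emptiness problem for an alternating pushdown system, starts from its initial configuration. Since $\Ac_{\Pc} = (2^Q, \Gamma, \Delta, I, \{s\})$ has $I$ as its initial state and the initial stack content is $\bot$, every accepting run is in particular an accepting run from the configuration $(I, \bot)$.

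The preceding lemma asserts that for any $S \in 2^Q$ and any $\gamma \in \Gamma^*$, every accepting run of $\Ac_{\Pc}$ starting from $(S, \gamma)$ has at most $|S|$ leaves. Instantiating this statement with $S := I$ and $\gamma := \bot$ yields immediately that every accepting run of $\Ac_{\Pc}$ has at most $|I|$ leaves, which is exactly the statement of the corollary.

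There is essentially no obstacle here: the corollary is a one-line consequence of the lemma, obtained by plugging in the specific initial state $I$ of $\Ac_{\Pc}$. The only thing to be careful about is noting that the \emph{entire} notion of an accepting run of an APS is by convention ``from the initial configuration'', so no additional argument about reducing to the general case of the lemma is required.
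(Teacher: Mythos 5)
Your proposal is correct and matches the paper exactly: the corollary is obtained by instantiating the preceding lemma at the initial configuration $(I,\bot)$ of $\Ac_{\Pc}$, which is all the paper does as well. No further comment is needed.
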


\begin{example}
    Let $\Pc$ be the deterministic PDA from Figure~\ref{fig:automata}.
    Figure~\ref{fig:run} shows an example of an accepting run in the corresponding
    APS $\Ac_\Pc$ from $I := \{1,2,3,4\}$. Notice that 
    there are $|I| = 4$ leaves in this run.
\end{example}

Corollary~\ref{cor:limited-leaves} motivates the study of the following problem, which we call the 
\emph{sparse emptiness} problem for APMs ({\scshape Sparse-Empty}): 
\begin{quote}
    \emph{Given: } An APM $\Ac$ and a number $k$ in unary.\\
    \emph{Decide: } Whether there exists an accepting run for $\Ac$ with at most $k$ leaves
\end{quote}
We prove the following theorem about {\scshape{Sparse-Empty}} in the next section. 
\begin{theorem}\label{thm:Sparse-Empty-Exp}
    Given $\Ac$ and $k$, the {\scshape{Sparse-Empty}} problem can be solved in time $O(|\Ac|^{ck^2})$
    for a fixed constant $c$.
\end{theorem}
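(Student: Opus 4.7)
The plan is to design a bottom-up dynamic programming algorithm based on popping summaries of $\Ac$. Let $n = |Q|$. I would introduce a summary relation $S(p, A, M)$, where $p \in Q$, $A \in \Gamma$, and $M$ is a multiset of states of size at most $k$. The intended meaning is that there is an alternating sub-tree starting from $(p, A\eta)$ for any $\eta \in \Gamma^*$, in which every leaf has popped $A$ (and is therefore at stack $\eta$), with leaf-state multiset exactly $M$. Since $|M| \leq k$, there are $\binom{n+k}{k} = |\Ac|^{O(k)}$ such multisets, giving $|\Ac|^{O(k)}$ summary entries overall.

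The next step would be a saturation procedure. For each transition $(p, A) \hookrightarrow \{(p_1, \gamma_1), \dots, (p_m, \gamma_m)\}$ of $\Ac$ and each decomposition $M = M_1 \sqcup \dots \sqcup M_m$, I would require a summary for each $(p_i, \gamma_i)$ with leaf-multiset $M_i$. Because $\gamma_i \in \Gamma^*$ is a word rather than a single symbol, I would also introduce an auxiliary extended summary $E(p, \gamma, M)$ computed by a left-to-right DP over the suffixes of each $\gamma_i$: set $E(p, \varepsilon, \{p\})$ as the base case, and take $E(p, B\gamma', M)$ to hold iff there exists a multiset $g$ with $S(p, B, g)$ and, for each occurrence of a state $r$ in $g$, an extended summary $E(r, \gamma', M_r)$ such that $\bigsqcup_r M_r = M$. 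The instance of {\scshape Sparse-Empty} is then decided by testing whether $S(\init, \bot, j \cdot \{\fin\})$ holds for some $j \leq k$, after a small adjustment that treats $\bot$ as an ordinary symbol and appends a trivial acceptance transition from $(\fin, \bot)$.

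The main obstacle will be the complexity analysis, and in particular pinning down the exponent $ck^2$. Two sources of enumeration compound inside the DP: (i) the table already has $|\Ac|^{O(k)}$ entries, one for each choice of target multiset of size at most $k$; and (ii) at each saturation step, combining the children of a transition or extending a summary by one more stack symbol requires enumerating an intermediate multiset of size up to $k$ together with a splitting of the target multiset into its pieces, contributing another $|\Ac|^{O(k)}$ factor. Multiplying these two $|\Ac|^{O(k)}$ factors across all saturation steps, and accounting for the bounded total length of transition words in $\Ac$, the overall running time fits into $|\Ac|^{ck^2}$ for a suitable constant $c$, as claimed.
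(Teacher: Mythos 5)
Your plan is correct in substance, and it takes a genuinely different route from the paper. The paper first compresses maximal chains of non-branching steps into single $\xrightarrow[\Nc]{*}$ edges (Lemma~\ref{lem:compress}), so that any $k$-leaf accepting run is captured by one of $2^{k^{O(1)}}$ small ``$k$-structured'' tree shapes; it then enumerates these shapes and, for each, builds bottom-up one finite automaton over $\Gamma$ per tree vertex --- applying the $\mathit{pre}^*$ saturation of Theorem~\ref{th:prestar} at unary vertices and a product construction at branching vertices --- so that the automaton at a vertex recognises exactly the configurations from which the subtree below it can be completed. The exponent $k^2$ there comes from the products compounding along the height of the tree. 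You instead run a direct least-fixpoint computation on pop summaries $S(p,A,M)$, generalising the classical summary $(p,A,q)$ to alternation by recording the \emph{multiset} $M$ of states at which the branches first return to the stack level below $A$, with $E(p,\gamma,M)$ serving as sequential composition along the pushed words. This is sound, but its correctness hinges on a decomposition lemma you leave implicit and must prove: in any run tree from $(p,A\eta)$ all of whose branches eventually reach stack level $\eta$, each branch first attains that level at a well-defined cut node, the cut nodes form an antichain, and the portion of the tree above the cuts depends only on $(p,A)$ and not on $\eta$; this (routine) induction is the analogue of the paper's Lemma~\ref{lem:compress} plus invariant~(*). Two small repairs: transitions forking into more than $k$ targets can be discarded (each $E(p_i,\gamma_i,M_i)$ forces $|M_i|\geq 1$, so they never contribute), and after your $\bot$-adjustment the final test should refer to the fresh post-pop state rather than to $\fin$ itself. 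Finally, your complexity worry is misplaced in a pleasant way: the table has $(|\Ac|+k)^{O(k)}$ entries and each saturation step enumerates $(|\Ac|+k)^{O(k)}$ splittings, and a product of constantly many such factors is still $(|\Ac|+k)^{O(k)}$, not $|\Ac|^{O(k^2)}$ --- so your algorithm meets the stated bound with room to spare and, if carried through, would also sharpen the exponent in Theorem~\ref{thm:Det-Special-Sync-EXP}.
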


Now, because of Proposition~\ref{prop:equi-classes-small-size} and because of the structure of the transitions of $\Ac_{\Pc}$ (as given by equation~(\ref{eq:transitionsA_P})), it is sufficient to restrict the construction of $\Ac_{\Pc}$ to only those states which have cardinality at most $|I|$ and hence, it can be assumed that $|\Ac_{\Pc}| \le |\Pc|^{4|I|}$. This fact, along with Proposition~\ref{prop:correctnessA_P}, corollary~\ref{cor:limited-leaves} and Theorem~\ref{thm:Sparse-Empty-Exp} implies the following theorem.
\begin{theorem}\label{thm:Det-Special-Sync-EXP}
    Given an instance $(\Pc,I,s)$ of {\scshape Det-Special-Sync}, we can check
    if $(I,\bot) \xRightarrow[\Pc]{\text{sup}} s$ in time $O(n^{4ck^3})$ 
    where $n = |\Pc|$ and $k = |I|$ and $c$ is some fixed constant.
\end{theorem}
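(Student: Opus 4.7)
The plan is to combine the three ingredients assembled earlier in this subsection: the correspondence between super-synchronisers and accepting runs of $\Ac_{\Pc}$ (Proposition~\ref{prop:correctnessA_P}), the structural leaf bound coming from determinism (Corollary~\ref{cor:limited-leaves}), and the algorithmic result for sparse emptiness of APSs (Theorem~\ref{thm:Sparse-Empty-Exp}). The overall strategy is: reduce the instance to a sparse-emptiness question on $\Ac_{\Pc}$, prune $\Ac_{\Pc}$ so that its size is only $n^{O(k)}$ rather than $2^{O(n)}$, and then feed it into the sparse-emptiness algorithm.

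First, by Proposition~\ref{prop:correctnessA_P}, $(I,\bot) \xRightarrow[\Pc]{\text{sup}} s$ holds if and only if $\Ac_{\Pc}$ has an accepting run starting from $(I,\bot)$. By Corollary~\ref{cor:limited-leaves}, every such run has at most $|I| = k$ leaves. Hence the decision problem is exactly the instance of {\scshape Sparse-Empty} consisting of $\Ac_{\Pc}$ with leaf bound $k$.

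Second, although $\Ac_{\Pc}$ nominally has $2^{|Q|}$ states, Proposition~\ref{prop:equi-classes-small-size} shows that along any branch of any accepting sparse run, the successor subsets appearing in transitions~(\ref{eq:transitionsA_P}) have cardinalities summing to at most that of the current subset. In particular, every subset reachable in a sparse accepting run from $(I,\bot)$ has cardinality at most $k$. It therefore suffices to restrict $\Ac_{\Pc}$ to state labels in $\{S \subseteq Q : |S| \le k\}$. The number of such subsets is at most $\binom{n}{\le k} \le n^{k}$, and for each such $S$, each stack letter $A$, and each input letter $a$, transition~(\ref{eq:transitionsA_P}) produces one rule whose encoding is polynomial in $n$ and $k$. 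A routine counting of rules and of their written sizes yields $|\Ac_{\Pc}| \le n^{4k}$ (the exponent $4$ comfortably absorbs the $|\Gamma|$, $|\Sigma|$ and right-hand-side factors).

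Third, plugging this bound into Theorem~\ref{thm:Sparse-Empty-Exp} with parameter $k$ yields running time $O\bigl(|\Ac_{\Pc}|^{ck^2}\bigr) = O\bigl(n^{4ck^3}\bigr)$, which is the bound claimed in Theorem~\ref{thm:Det-Special-Sync-EXP}. The main obstacle in this plan is not algorithmic but combinatorial: one must justify pruning $\Ac_{\Pc}$ to subsets of cardinality $\le k$ without losing any accepting sparse run. This hinges entirely on Proposition~\ref{prop:equi-classes-small-size}, which is precisely where determinism of $\Pc$ is used; without it the successor cardinalities could exceed $|S|$ and the $n^{O(k)}$ state bound would collapse, explaining why the deterministic case admits this exponential improvement over the non-deterministic setting of Section~\ref{subsec:SpecialSync2EXP}.
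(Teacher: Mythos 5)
Your proposal is correct and follows essentially the same route as the paper: it combines Proposition~\ref{prop:correctnessA_P}, Corollary~\ref{cor:limited-leaves}, and the pruning of $\Ac_{\Pc}$ to subsets of size at most $k$ (justified via Proposition~\ref{prop:equi-classes-small-size}) to get $|\Ac_{\Pc}| \le n^{4k}$, and then invokes Theorem~\ref{thm:Sparse-Empty-Exp} to obtain the $O(n^{4ck^3})$ bound. Your added remark correctly identifies determinism as the sole source of the leaf and cardinality bounds, which matches the paper's reasoning.
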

\begin{remark}
    Note that the algorithm to solve {\scshape Det-Special-Sync} on an instance $(\Pc,I,s)$, although in $\EXPTIME$, is polynomial if $|I|$ is fixed 
    and quasi-polynomial if $|I| = O(\log |\Pc|)$.
\end{remark}

\subsection{`Sparse Emptiness' Checking of Alternating Systems}~\label{subsec:Sparse-Empty}
This subsection is dedicated to proving Theorem~\ref{thm:Sparse-Empty-Exp}. We fix an alternating pushdown system $\Ac = (Q,\Gamma,\Delta,\init,\fin)$ and a number $k$ for the rest of this subsection. A $k$-accepting run of $\Ac$ is defined to be an accepting run of $\Ac$ with at most $k$ leaves.
We now split the desired algorithm for {\scshape Sparse-Empty} into three parts. Finally, we give its runtime analysis.

\textbf{Compressing $k$-accepting runs of $\Ac$:}
We define a non-deterministic pushdown system (NPS) to be a non-deterministic extended pushdown system which has no Boolean variables. From $\Ac$, we can derive a NPS
obtained by deleting all transitions of the form $(q,A) \hookrightarrow \{(q_1,  \gamma_1),\dots,(q_k,\gamma_k)\}$ with $k > 1$. 
We will denote this NPS by $\Nc$.
Emptiness of NPS is known to be solvable in polynomial time (Theorem 2.1 of~\cite{ReachabilityAnalysis}). To exploit this fact for our problem, we propose the following notion of a \emph{compressed} accepting run of $\Ac$. 
Intuitively, a compressed accepting run is obtained from an accepting run of $\Ac$ 
by ``compressing'' a series of transitions belonging to the non-deterministic part $\Nc$,
into a single transition.
An intuition of a compressed accepting run is captured by Figure~\ref{fig:comprun}, which is obtained by compressing the run depicted in Figure~\ref{fig:run}.
\begin{figure}
\begin{minipage}{0.45 \textwidth}
    \centering
\makebox[\textwidth][c]{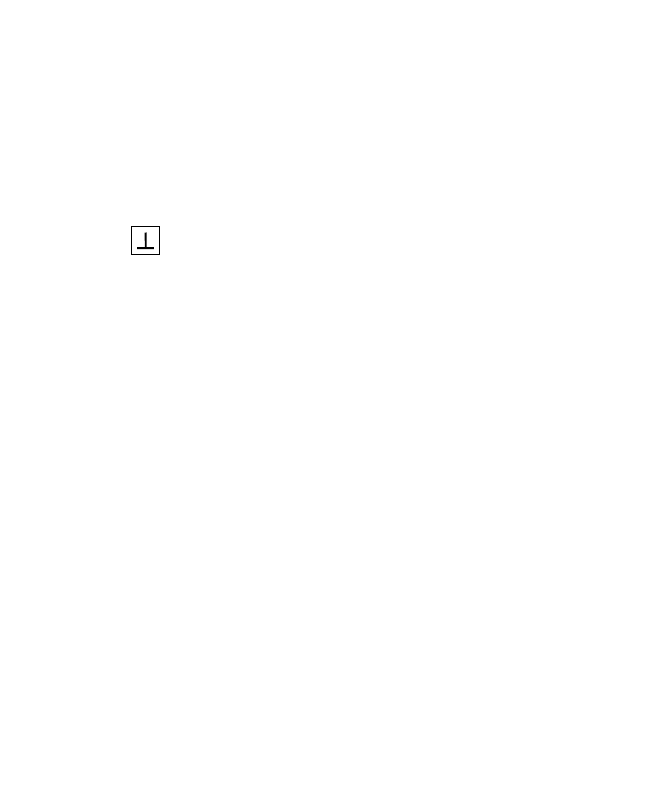}%
\caption{An accepting run of $\Ac_\Pc$ for the deterministic PDA $\Pc$ given in Figure~\ref{fig:automata}.
\label{fig:run}}
\end{minipage}
\hfill \vline \hfill
\begin{minipage}{0.40 \textwidth}
   \centering
\makebox[\textwidth][c]{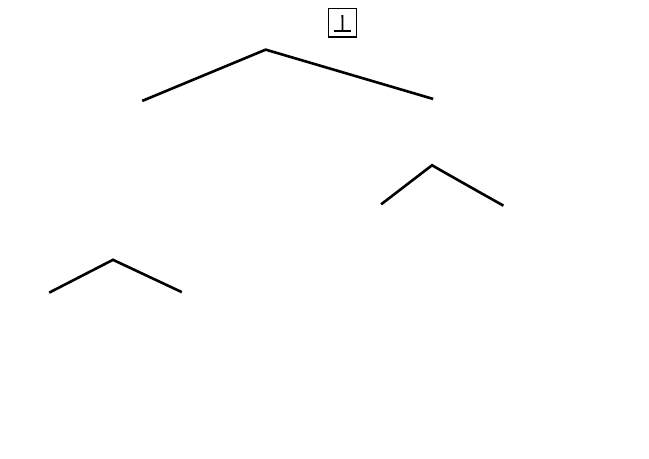}%
\caption{A compressed accepting run of $\Ac_\Pc$ for the deterministic PDA $\Pc$ given
in Figure~\ref{fig:automata}, obtained by compressing the run from Figure~\ref{fig:run}
\label{fig:comprun}}
\end{minipage}
\end{figure}

Given a tree, we say that a vertex $v$ in the tree is \emph{simple} if it has exactly one
child and otherwise we say that it is \emph{complex} (Note that all leaves are complex). A \emph{compressed accepting run} of $\Ac$ from the configuration $(p,\eta)$ is a labelled tree such that: The root is labelled by $(p,\eta)$. If
$v$ is a simple vertex labelled by $(q,\gamma)$ and $u$ is its only child labelled by $(q',\gamma')$ then $u$ is a complex vertex and $(q,\gamma)  \xrightarrow[\Nc]{*} (q',\gamma')$.
If $v$ is a complex vertex labelled by $(q,A\gamma)$ and $v_1,\dots,v_k$ are its children with $k > 1$, then there is a transition $(q,A) \hookrightarrow \{(q_1,A_1),\dots,(q_k,A_k)\}$ in $\Ac$ such that the label of $v_i$ is $(q_i,A_i\gamma)$. Finally, all the leaves are labelled by $(\fin,\bot)$.
A compressed accepting run of $\Ac$ is a compressed accepting run from $(\init,\bot)$ and a $k$-compressed accepting run is a compressed accepting run with at most $k$ leaves. 
We now have the following lemma.
\begin{restatable}{lemma}{LemCompress}~\label{lem:compress}
    There is a $k$-accepting run of $\Ac$ from a configuration $(p,\eta)$ iff 
    there is a $k$-compressed accepting run of $\Ac$ from $(p,\eta)$.
\end{restatable}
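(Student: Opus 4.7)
The plan is to exhibit an explicit back-and-forth translation between $k$-accepting runs and $k$-compressed accepting runs, by observing that maximal chains of single-successor transitions in $\Ac$ are precisely what the reflexive transitive closure $\xrightarrow[\Nc]{*}$ in the non-deterministic part $\Nc$ describes.

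For the ``only if'' direction, I would take a $k$-accepting run $T$ of $\Ac$ from $(p,\eta)$ and classify each internal vertex as \emph{branching} (using a transition with at least two successors, hence having at least two children in $T$) or \emph{non-branching} (using a transition of $\Nc$, hence having exactly one child). I then compress $T$ into $T'$ by keeping all branching vertices and all leaves as complex vertices of $T'$, and replacing every maximal downward chain of non-branching vertices by a single simple vertex in $T'$ whose label is the top of the chain and whose unique child is the complex vertex lying at the bottom of the chain. The chain itself is a sequence of $\Nc$-transitions and therefore witnesses $(q,\gamma) \xrightarrow[\Nc]{*} (q',\gamma')$ between the two endpoints, so the definition of a compressed run is satisfied. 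Since the leaves of $T'$ are exactly the leaves of $T$, the bound of $k$ leaves is preserved.

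For the ``if'' direction, given a $k$-compressed accepting run $T'$ of $\Ac$ from $(p,\eta)$, I would invert this construction: every simple vertex $v$ in $T'$ labelled $(q,\gamma)$ with complex child $u$ labelled $(q',\gamma')$ comes, by definition, equipped with a witness $(q,\gamma) \xrightarrow[\Nc]{*} (q',\gamma')$; I replace the edge $v \to u$ by the linear chain of configurations visited by this derivation, applying the corresponding single-successor transitions of $\Ac$ at each step. Complex vertices and their outgoing branching transitions are carried over verbatim. The resulting tree is a valid accepting run of $\Ac$, and since the expansion only inserts internal vertices with exactly one child, no new leaves are created and the leaf count is preserved.

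The argument is conceptually routine; the only point needing care is that the definition of a compressed run forbids two consecutive simple vertices, so in the forward direction the chosen chains must be taken inclusion-maximal, and in the backward direction the expansion naturally produces a tree in which every maximal run of single-successor steps corresponds to exactly one $\Nc$-derivation. Once this is checked, both constructions are mutual inverses and the leaf-count equivalence is immediate.
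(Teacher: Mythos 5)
Your proposal is correct and matches the paper's argument in substance: both directions rest on the same correspondence between maximal chains of single-successor ($\Nc$-)steps in an accepting run and the simple-vertex edges of a compressed run, with the leaf set unchanged in either direction. The paper merely packages this as an induction on the size of the run (peeling off the root and recursing on subtrees) rather than as your one-shot global compression/expansion, so the two proofs are essentially the same.
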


\textbf{Searching for $k$-compressed accepting runs:}
To fully use the result of Lemma~\ref{lem:compress}, we need some 
results about non-deterministic pushdown systems, which we state here.
Recall that $\Nc$ is an NPS over the states $Q$ and stack alphabet $\Gamma$ obtained from the APS $\Ac$. We say that $M = (Q^M,\Gamma,\delta^M,F^M)$ is an \emph{$\Nc$-automaton} if $M$
  is a non-det. finite-state automaton over the alphabet $\Gamma$ with accepting states $F^M$ such that for each
  state $q \in Q$, there is a unique state $q^M \in Q^M$.
    The set of configurations of $\Ac$ that are stored by $M$ (denoted by $\Cc(M)$) is defined to be the set $\{(q,\gamma) : \gamma \text{ is accepted in } M \text{ from the state } q^M \}$. In the above definition, note that $Q^M$ can potentially have more states other than the set $\{q^M\mid q \in Q\}$.
\begin{theorem}~\label{th:prestar}(Section 2.3 and Theorem 2.1 of~\cite{ReachabilityAnalysis})
    Given an $\Nc$-automaton $M$,
    in time polynomial in $\Nc$ and $M$, we can construct an $\Nc$-automaton $M'$ which has
    the \emph{same states as $M$} such that $M'$ stores the set of predecessors of $M$, i.e.,
    $\Cc(M') = \{(q',\gamma') : \exists (q,\gamma) \in \Cc(M) \text{ such that } (q',\gamma') \xrightarrow[\Nc]{*} (q,\gamma)\}$.
\end{theorem}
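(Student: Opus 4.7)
The plan is to implement the standard $\pre^*$ saturation procedure for pushdown systems. I would initialize $M'$ to be a copy of $M$ (so that $M'$ starts with exactly the state set of $M$), and then iteratively apply the following saturation rule until no new transitions appear: for every rule $(q,A) \hookrightarrow (q', A_1 A_2 \cdots A_n)$ of $\Nc$, whenever there exist states $r_0, r_1, \dots, r_n$ in $Q^M$ with $r_0 = q'^M$ and $r_{i-1} \xrightarrow{A_i} r_i$ in the current $M'$ for each $i$, add the transition $q^M \xrightarrow{A} r_n$ to $M'$. Because $M'$ never gains states and the stack alphabet is fixed, the total number of possible transitions is at most $|Q^M|^2 \cdot |\Gamma|$, so the saturation stabilises after polynomially many rounds; each round can be carried out in polynomial time by enumerating the rules of $\Nc$ together with candidate tuples $(r_0, \dots, r_n)$.

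For correctness, two inclusions must be established. The soundness direction $\Cc(M') \subseteq \{(q',\gamma') : \exists (q,\gamma) \in \Cc(M),\ (q',\gamma') \xrightarrow[\Nc]{*} (q,\gamma)\}$ would follow by induction on the step at which each transition of $M'$ is added. Each newly added transition $q^M \xrightarrow{A} r_n$ is certified by the $\Nc$-step $(q, A\eta) \xrightarrow[\Nc]{} (q', A_1 \cdots A_n \eta)$ whose target is, by the inductive hypothesis combined with the path $r_0, \dots, r_n$ in $M'$, a predecessor of some configuration in $\Cc(M)$; hence so is the source. The completeness direction goes by induction on the length of a derivation $(q, \gamma) \xrightarrow[\Nc]{*} (q^*, \gamma^*) \in \Cc(M)$. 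The base case is immediate because $M' \supseteq M$; the inductive step takes the first $\Nc$-step and applies the saturation rule, using an $M'$-accepting run of the one-step successor (which exists by the inductive hypothesis) to witness the premise of the rule and thereby add the transition that lets $M'$ accept $(q, \gamma)$.

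The main obstacle I expect is the completeness induction: to invoke the saturation rule for an $\Nc$-step $(q, A\eta) \xrightarrow[\Nc]{} (q', A_1 \cdots A_n \eta)$ one needs an accepting $M'$-run starting exactly at $q'^M$ on $A_1 \cdots A_n \eta$, whose intermediate states after reading each prefix $A_1 \cdots A_i$ serve as the $r_0, \dots, r_n$ in the rule's premise. This relies on the invariant that any accepting $M'$-run from a configuration $(q'', \eta')$ must start at the canonical entry state $q''^M$, which is preserved throughout the construction because saturation only adds transitions between states already present in $Q^M$ and never introduces alternative entry states for a control state of $\Nc$. Once this invariant is isolated, both inductions go through cleanly, yielding an $M'$ of polynomial size whose stored set is exactly the $\pre^*$-closure of $\Cc(M)$.
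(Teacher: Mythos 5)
Your proposal is correct and is precisely the standard $\pre^*$ saturation construction of Bouajjani--Esparza--Maler that the paper invokes by citation (the paper gives no proof of its own, deferring entirely to Section 2.3 and Theorem 2.1 of the cited reference). The only presentational refinement needed is in the soundness direction: the induction should be phrased as an invariant over entire runs of the intermediate automata (if $p^M$ reaches $r$ reading $w$ after $i$ saturation steps, then some $\Nc$-successor configuration of $(p,w)$ already reaches $r$ in the original $M$), rather than over single added transitions, but this is the standard strengthening and your sketch clearly contains the right idea.
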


We say that an unlabelled tree is \emph{structured}, if the child of every simple vertex
is a complex vertex. An $\ell$-structured tree is simply a structured tree which has at most $\ell$ leaves. Notice that the height of an $\ell$-structured tree is $O(\ell)$ and since it has at most $\ell$ leaves, it follows that a $\ell$-structured tree can be described using a polynomial
number of bits in $\ell$. Hence, the number of $\ell$-structured trees is $O(2^{\ell^c})$ 
for some fixed $c$.

Now let us come back to the problem of searching for $k$-accepting runs of $\Ac$.
By Lemma~\ref{lem:compress} it suffices to search for a $k$-compressed accepting
run of $\Ac$. Notice that if we take a $k$-compressed accepting run and 
remove its labels, we get a $k$-structured tree. Now, suppose we have
an algorithm $\mathtt{Check}$ that takes a $k$-structured tree $T$ and checks if $T$ can be labelled to make it a $k$-compressed accepting run of $\Ac$.
Then, by calling $\mathtt{Check}$ on every $k$-structured tree,
we have an algorithm to check for the existence of a $k$-compressed
accepting run of $\Ac$. Hence, it suffices to describe 
this procedure $\mathtt{Check}$ which is what we will do now.

\textbf{The algorithm $\mathtt{Check}$:}
Let $T$ be a $k$-structured tree. For each vertex $v$ in the tree $T$, $\mathtt{Check}$ will assign a $\Nc$-automaton $M_v$ such that $M_v$ will have
the following property: 
\begin{quote}\label{invariant}
    Invariant (*) : A configuration $(q,\gamma) \in \Cc(M_v)$ iff all the vertices of the subtree rooted at $v$ can be labelled such that
    the resulting labelled subtree is a compressed accepting run of $\Ac$ from $(q,\gamma)$.
\end{quote}

The construction of each $M_v$ is as follows:
Let $Q$ be the states and $\Delta$ be the transitions of the alternating pushdown system $\Ac$. 
\begin{itemize}
    \item Suppose vertex $v$ is a leaf. We let $M_v$ be an automaton such that $\Cc(M_v) = \{(\fin,\bot)\}$. Notice that such a $M_v$ can be easily constructed in polynomial time.
    \item Suppose vertex $v$ is simple and $u$ is its child. We take $M_{u}$ and use Theorem~\ref{th:prestar} to construct the $\Nc$-automaton $M_v$. 
    Note that $M_v$ has the same set of states as $M_{u}$.
    \item Suppose $v$ is complex and suppose $v_1,\dots,v_\ell$ are its children. For each $1 \le i \le \ell$ and for every configuration $(q,\gamma)$ of $\Ac$, let $\delta_i(q^{M_{v_i}},\gamma)$ denote the set of states that the automaton $M_{v_i}$ will be in after reading $\gamma$ from the state $q^{M_{v_i}}$. To construct $M_v$ first do a product construction which we denote by $M_{v_1} \times M_{v_2} \times \dots \times M_{v_\ell}$. Then, for each $q \in Q$, add a state $q^{M_v}$.
    Then for each transition $(p,A) \hookrightarrow \{(p_1,\gamma_1),\dots,(p_\ell,\gamma_\ell)\}$ in $\Delta$, add a transition in $M_v$,
    which upon reading $A$, takes $p^{M_v}$ to any of the states in $\delta_1({p_1}^{M_{v_1}},\gamma_1) \times \delta_2({p_2}^{M_{v_2}},\gamma_2) \times \dots \times \delta_l({p_\ell}^{M_{v_\ell}},\gamma_\ell)$. Intuitively, we accept a word $A\gamma$ from the state $p^{M_v}$ if for each $i$, the
    word $\gamma_i\gamma$ can be accepted from the state ${p_i}^{M_{v_i}}$. 
\end{itemize}
 
\begin{restatable}{proposition}{PropInvariant}
    For each vertex $v$ of the tree $T$, $M_v$ satisfies invariant (*)
\end{restatable}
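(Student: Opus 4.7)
The plan is to prove invariant (*) by structural induction on $T$, processing vertices bottom-up (in post-order, matching the order in which the $M_v$'s are constructed). There are three cases corresponding to the three kinds of vertex treated by $\mathtt{Check}$: leaves, simple vertices, and complex non-leaf vertices. In each case I want to show both implications of invariant (*) from the inductive hypotheses applied to the children.

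For the base case, if $v$ is a leaf then $\mathcal{C}(M_v) = \{(\fin,\bot)\}$ by construction, while a compressed accepting run consisting of a single vertex must be labelled by $(\fin,\bot)$ since leaves of a compressed accepting run are required to carry this label; so $(q,\gamma) \in \mathcal{C}(M_v)$ iff the trivial subtree at $v$ is a compressed accepting run from $(q,\gamma)$. For the simple case, let $u$ be the unique child of $v$. Since $T$ is structured, $u$ is complex, so the inductive hypothesis applies. Theorem~\ref{th:prestar} gives that $\mathcal{C}(M_v) = \{(q,\gamma) : \exists (q',\gamma') \in \mathcal{C}(M_u),\ (q,\gamma) \xrightarrow[\Nc]{*} (q',\gamma')\}$. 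Combining this with the definition of a compressed accepting run at a simple vertex (which requires exactly such a $\Nc$-run from the label of $v$ to the label of $u$), and with the inductive hypothesis on $u$, gives invariant (*) at $v$ in both directions.

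For the complex non-leaf case, let $v_1,\dots,v_\ell$ be the children of $v$ (with $\ell > 1$). The product-and-lift construction of $M_v$ implies, by a direct unfolding of the transitions added for each $\Delta$-rule, that $(p,A\eta) \in \mathcal{C}(M_v)$ holds iff there exists a transition $(p,A)\hookrightarrow\{(p_1,\gamma_1),\dots,(p_\ell,\gamma_\ell)\}$ in $\Delta$ such that for every $i$ the word $\gamma_i\eta$ is accepted by $M_{v_i}$ from ${p_i}^{M_{v_i}}$, i.e., $(p_i,\gamma_i\eta) \in \mathcal{C}(M_{v_i})$. Applying the inductive hypothesis to each $v_i$ and the definition of a compressed accepting run at a complex vertex then yields both directions of invariant (*). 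I should also separately observe that when the stack content at $v$ is empty the claim is vacuous, since complex non-leaf transitions require a top stack symbol.

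The main technical obstacle is the complex case: one must verify that the lifted product automaton faithfully conjoins the acceptance conditions of the $M_{v_i}$ after consuming the different pushed words $\gamma_i$. Concretely, the crux is that a state of the product reached from ${p_i}^{M_{v_i}}$ by reading $\gamma_i$ in component $i$ is an accepting state of the product after further reading $\eta$ if and only if, in every component $M_{v_i}$ separately, the state $p_i^{M_{v_i}}$ reaches an accepting state after reading $\gamma_i\eta$. This is a standard property of product automata but must be invoked carefully because of the new ``initial'' states $q^{M_v}$ added on top of the product, whose transitions non-deterministically guess both the $\Delta$-rule and the target tuple in the product.
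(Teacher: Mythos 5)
Your proof is correct and follows essentially the same route as the paper: structural induction on $T$ with the three cases (leaf, simple via Theorem~\ref{th:prestar}, complex via unfolding the lifted product construction and applying the inductive hypothesis to each child). The only difference is that you flag and justify the product-automaton step more explicitly than the paper, which simply asserts it ``by construction of $M_v$''; this is a welcome but not substantively different elaboration.
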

Finally, we accept iff $(\init,\bot) \in \Cc(M_r)$ where $r$ is the root
of the tree. The correctness of $\mathtt{Check}$ follows from the proposition above.

\paragraph*{Running time analysis}
Let us analyse the running time of $\mathtt{Check}$.
Let $T$ be a $k$-structured tree and therefore $T$ has $O\left(k^2\right)$ vertices.
$\mathtt{Check}$ assigns to each vertex $v$ of $T$ an automaton $M_v$.
We claim that the running time of $\mathtt{Check}$ is $O\left(k^2 \cdot |\Ac|^{ck^2}\right)$ (for some constant $c$) because
of the following: 
\begin{itemize}
    \item[1)] By induction on the structure of the tree $T$, it can be proved that, there exists a constant $d$, such that if $h_v$ is the height of 
a vertex $v$ and $l_v$ is the number of leaves in the subtree of $v$, then the 
number of states of $M_v$ is $O\left(|\Ac|^{dh_vl_v}\right)$ (Recall that $h_vl_v$ is at most $O\left(k^2\right)$). 
    \item[2)] If an $\Nc$-automaton has $n$ states,
then the number of transitions it can have is $O\left(|\Ac| \cdot n^2\right)$. 
    \item[3)] For a vertex $v$
with children $v_1,\dots,v_\ell$, $M_v$ can be constructed in polynomial time
in the size of $|M_{v_1}| \times |M_{v_2}| \times \dots |M_{v_\ell}|$ and $|\Ac|$.
\end{itemize}

Now the final algorithm for {\scshape Sparse-Empty} simply iterates over all $k$-structured trees
and calls $\mathtt{Check}$ on all of them. Since the
number of $k$-structured trees is at most $f(k)$ where $f$ is an exponential function, it follows that the total running time is $O\left(f(k) \cdot k^2 \cdot |\Ac|^{ck^2}\right) = O\left(|\Ac|^{ek^2}\right)$ for some constant $e$.

\section{Conclusion}
Our results can be considered as a step in the research direction recently proposed by Fernau, Wolf and Yamakami in~\cite{PushdownAutomataUndecidable},  in which the authors prove that 
the synchronisation problem for PDAs is undecidable when the stack is \emph{not} visible. They also suggest looking into different variants of 
synchronisation for PDAs with a view towards the decidability and complexity frontier.
Within this context, we believe we have proposed a natural variant of synchronisation
in which the observer can see the stack and given decidability and complexity-theoretic optimal results for both the non-deterministic and the deterministic cases.

One can ask similar questions to almost any automata model, which has a ``data'' part that
can be observed, for example, timed automata. Another natural question is to consider the same questions for one-counter automata, i.e., pushdown automata with a single stack alphabet. 
Though our results imply decidability for this case, further investigation into its complexity is needed. Finally, one can impose constraints on pushdown automata that are less constrained than determinism, but do not imply full non-determinism, like unambiguity or being Good-for-games.

\bibliography{references}

\appendix
\section{Proofs for Section \ref{sec:equivalence}}

\begin{remark}~\label{remark:convention}
    For the sake of brevity, for some
    tuples $(q,a,A) \in Q \times \Sigma \times \Gamma$, we will sometimes \textbf{not} specify
    $\delta(q,a,A)$. In such cases, it is to be assumed that $\delta(q,a,A) = (q,A)$.
    Also, occasionally we will describe transitions by saying ``Upon reading $a$, the state $p$ moves to $q$'', without describing the changes to the stack. In these cases, it is to be assumed that irrespective of the element at the top of the stack, the state $p$ moves to $q$ and does not change the stack.
\end{remark}

\ThmEquivSuperSync*

We break the proof of this Lemma into various propositions, each one 
showing equivalence between a pair of problems.

\begin{proposition}
    {\scshape Subset-Ada-Sync} and {\scshape Given-Sync} are polynomial time 
    equivalent. Further the same applies for their corresponding deterministic versions.
\end{proposition}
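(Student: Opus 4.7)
The plan is to establish the equivalence via polynomial-time reductions in both directions, where the ``easy'' direction is a Turing reduction by iteration and the ``hard'' direction is a many-one reduction via a PDA gadget.

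First I would handle the reduction from {\scshape Subset-Ada-Sync} to {\scshape Given-Sync}. This is a straightforward polynomial-time Turing reduction: given an instance $(\Pc, I, \gamma)$, iterate over each state $s \in Q$ and query {\scshape Given-Sync} on $(\Pc, I, \gamma, s)$; accept iff some query accepts. Since $|Q|$ is polynomial, this is polynomial time, and since $\Pc$ is unchanged, the reduction also works verbatim for {\scshape Det-Subset-Ada-Sync} to {\scshape Det-Given-Sync}.

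Next I would give the reduction from {\scshape Given-Sync} to {\scshape Subset-Ada-Sync}. Given $(\Pc, I, \gamma, s)$, the plan is to build $\Pc'$ over state set $Q \cup \{t\}$ and alphabet $\Sigma \cup \{\$\}$, engineering the new letter $\$$ so that $t$ is effectively the only sync target reachable from the modified initial pseudo-configuration, and reaching $t$ forces a prior synchronisation to $s$ in $\Pc$. Concretely, I would make $t$ absorbing (staying at $t$ on every input letter without touching the stack), and define $\$$ so that $s$ moves to $t$ (possibly with a distinguished push of a marker $\#$) while states in $Q \setminus \{s\}$ remain unchanged under $\$$. I would query {\scshape Subset-Ada-Sync} on $(\Pc', I \cup \{t\}, \gamma)$: since $t$ is absorbing and present in the initial subset, any successful synchroniser must sync to $t$; and $t$ can only be ``reached into the subset'' via the $\$$-transition from $s$. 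The forward direction is immediate: from a synchroniser for $(I, \gamma) \Rightarrow s$ in $\Pc$, extend each leaf $(\{s\}, \eta)$ by a $\$$-step to obtain $(\{t\}, \eta')$ in $\Pc'$, so $(I \cup \{t\}, \gamma) \Rightarrow t$.

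The hard step will be the reverse direction of the second reduction: given a synchroniser for $(I \cup \{t\}, \gamma) \Rightarrow t$ in $\Pc'$, I must produce a synchroniser for $(I, \gamma) \Rightarrow s$ in $\Pc$. The main obstacle is that the observer in $\Pc'$ may interleave $\$$-steps with original-letter steps, using a $\$$-step to ``peel off'' a copy of $s$ from an intermediate subset before fully synchronising the rest. The plan is to prove, by induction on the height of the synchroniser tree in $\Pc'$, an invariant of the form ``$(S \cup \{t\}, \eta) \xRightarrow{}_{\Pc'} t$ iff $(S, \eta) \xRightarrow{}_{\Pc} s$'' so that when $\$$ is used, the subtree processing the non-synchronised branch can be translated inductively into an original-letter subtree in $\Pc$ that synchronises to $s$. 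Keeping the $\$$-transitions carefully designed so that $s$-containing subsets split into exactly ``already synchronised to $t$'' and ``the rest'' (and so that the observer cannot exploit $\$$ when $s \notin S$) is the crucial design constraint for making this induction go through.

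Finally, for the deterministic version, I would observe that every new transition introduced (the single $\$$-transition from each state, and the self-loops at $t$) is deterministic, so $\Pc'$ is deterministic whenever $\Pc$ is. The same inductive correctness argument then yields the polynomial-time equivalence between {\scshape Det-Given-Sync} and {\scshape Det-Subset-Ada-Sync}.
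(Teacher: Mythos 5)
Your first reduction ({\scshape Subset-Ada-Sync} to {\scshape Given-Sync}) is fine for the purposes at hand, though note it is a polynomial-time \emph{Turing} reduction (one query per candidate state $s$), whereas the paper builds a single many-one instance transformation (forcing the observer to commit to $s$ up front via a $\text{decide}_s$ letter and to certify it via a $\text{done}_s$ letter). Since $\EXPTIME$ and $\TWOEXPTIME$ are closed under poly-time Turing reductions, nothing downstream breaks, but you are proving a slightly weaker statement than the paper does.

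The second reduction, however, is not merely incompletely proven --- the construction itself is wrong, and the invariant you hope to establish by induction is false. Take $Q=\{s,p\}$, $\Sigma=\{a\}$, $\Gamma=\{\bot\}$, with $a$ swapping $s$ and $p$ while leaving the stack untouched, and let $I=\{s,p\}$, $\gamma=\bot$. In $\Pc$ both $a$-transitions push the same word, so the pseudo-configuration $(\{s,p\},\bot)$ has the single successor $(\{s,p\},\bot)$ forever, and $(I,\bot)\not\xRightarrow[\Pc]{}s$ (indeed no synchronisation is possible at all). In your $\Pc'$ the observer wins: from $(\{s,p,t\},\bot)$ she inputs $\$$, splitting into the leaf $(\{t\},\#\bot)$ and $(\{p,t\},\bot)$; then $a$ yields $(\{s,t\},\bot)$; then $\$$ again yields only $\{t\}$-labelled leaves. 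So your reduction maps a NO-instance of {\scshape Given-Sync} to a YES-instance of {\scshape Subset-Ada-Sync}. The ``peeling off'' you flag as an obstacle is a genuine exploit, precisely because $\$$ is \emph{harmless} to the states in $Q\setminus\{s\}$: the observer can repeatedly shave one copy of $s$ off the current belief set and re-converge the remainder onto $s$, which is strictly easier than synchronising everything to $s$ simultaneously. No induction can repair this, since the biconditional it would need to prove is false.

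Your idea of placing the absorbing state $t$ in the initial subset to force the synchronisation target is sound and plays the same role as the paper's two disjoint copies of $\Pc$ (whose only common reachable state is $q_\acc$). What is missing is the other half of the paper's design: there, the special letter $\End$ sends every state other than $(s,0),(s,1)$ to a \emph{dead} reject state from which $q_\acc$ is unreachable, so inputting it at any belief set other than $\{s\}\times\{0,1\}$ irrevocably kills that branch; the letter is all-or-nothing rather than a free peel. If you redesign $\$$ so that every $q\in Q\setminus\{s\}$ moves to a sink $q_\rej$ with no path to $t$, the exploit disappears and a pruning argument (cut each branch at its first $\$$-edge, delete the pure-$\{t\}$ side branches, and erase $t$ from the remaining labels) recovers a synchroniser to $s$ in $\Pc$. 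As written, though, the reduction is incorrect.
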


\begin{proof}
    \paragraph*{{\scshape Subset-Ada-Sync} is reducible to {\scshape Given-Sync}}
    Let $\Pc = (Q,\Sigma,\Gamma,\delta)$ be a PDA with $I \subseteq Q$ and
    $\gamma \in \Gamma^*$. The central idea is that the new PDA $\Pc'$ that we construct will force the observer to initially decide on which state of $\Pc$ she wants to synchronise in, by inputting a special letter specific to each state of $\Pc$. Once she has done that, this choice will be remembered in the states of $\Pc'$. Finally, only when
    she believes she has synchronised in the state that she 
    had chosen initially, she can input a special letter which
    will take her to a special state $q_\acc$. 
    
    This can be concretely implemented as follows: The states of $\Pc'$
    will be $(Q \times Q) \cup (Q \times \{\smiley\}) \cup \{q_\acc,q_\rej\}$.
    The input alphabet of $\Pc'$ will be $\Sigma \cup \{\text{decide}_q : q \in Q\} \cup \{\text{done}_q : q \in Q\}$. The transition relation $\delta'$ of $\Pc'$
    is as follows:
    Upon reading $\text{decide}_q$,
    the state $(p,\smiley)$ moves to the state $(p,q)$.
    Upon reading any letter from $\Sigma$, the transitions on states $(p,q) \in Q \times Q$ just mimic the transitions of $\delta$ on the first co-ordinate
    and leave the second co-ordinate unchanged.
    Upon reading $\text{done}_q$, the states $(q,q)$ and $q_\acc$ move to $q_\acc$,
    while all the other states move to $q_\rej$. Notice 
    that $\Pc'$ is deterministic if $\Pc$ is. We now claim that
    \begin{quote}
        $\exists s \in Q$ with $(I,\gamma) \xRightarrow[\Pc]{} s$ if and only if $(I \times \{\smiley\},\gamma) \xRightarrow[\Pc']{} q_\acc$.    
    \end{quote}
    
    ($\Rightarrow$ ) Suppose there exists $s \in Q$ such that there
    is a synchroniser between $(I,\gamma)$ and $s$ in $\Pc$ (say $T$).
    Let $(S_v,\eta_v)$ be the label of each vertex $v$ in $T$.
    By converting the label $(S_v,\eta_v)$ to $(S_v \times \{s\},\eta_v)$ for 
    each vertex $v$, we get a synchroniser between $(I \times \{s\},\gamma)$ and $(s,s)$ in $\Pc'$. Now to the root of this synchroniser
    add a parent labelled by $(I \times \{\smiley\}, \gamma)$ with 
    its outgoing edge labelled by $\text{decide}_{s}$.
    Similarly, to each leaf $v$, add an outgoing edge labelled by $\text{done}_{s}$
    and label the child of this edge by $(q_\acc,\eta_v)$.
    By inspection, it can be easily verified that this new tree is 
    a synchroniser between $(I \times \{\smiley\},\gamma)$ and $q_\acc$ in $\Pc'$.\\
    
    ($\Leftarrow$ ) Suppose there exists a synchroniser between $(I \times \{\smiley\},\gamma)$ and $q_\acc$ in $\Pc'$ (say $T'$). Assume that $T'$ is a minimal
    such synchroniser.  We recall our convention stated in Remark~\ref{remark:convention} that if we did not 
    state the image of $\delta'(q,a,A)$ for some $q,a,A$, then
    $\delta'(q,a,A) = (q,A)$. 
    With this convention in mind, by using the 
    fact that $T'$ is minimal, we can easily conclude that the root has 
    exactly one outgoing edge labelled by $\text{decide}_{s}$ for some $s \in Q$ and the child of this edge is labelled by $(I \times \{s\}, \gamma)$.
    Hence, if we remove the root of $T'$, we get a synchroniser between
    $(I \times \{s\},\gamma)$ and $q_\acc$, which we will 
    denote by $T$. By our assumption on $T'$, it follows that $T$ is a minimal
    such synchroniser between $(I \times \{s\},\gamma)$ and $q_\acc$ in $\Pc'$.
    
    Let $(S_v,\eta_v)$ be the label of the vertex $v$ in the tree $T$.
    We now do a series of observations.
    \begin{itemize}
        \item No state in $Q \times \{\smiley\}$ has an incoming transition. Hence, 
        \begin{quote}\label{fact:A}
            Fact A: For every vertex $v$, $S_v \cap (Q \times \{\smiley\}) = \emptyset$.
        \end{quote}
        \item By our convention stated in remark~\ref{remark:convention}, it follows that $\delta'(p,\text{decide}_q,A) = (p,A)$
        for any $p \notin Q \times \{\smiley\}, A \in \Gamma$ and $q \in Q$.
        By fact~\ref{fact:A} and by the minimality of $T$ we get,
        \begin{quote}\label{fact:no-decide}
            Fact B: No edge of $T$ is labelled by $\text{decide}_q$ for any $q \in Q$.
        \end{quote}
        \item Notice that $q_\rej$ has no outgoing transitions. Since $S_u = \{q_{\acc}\}$
        for any leaf $u$, it follows that 
        \begin{quote}\label{fact:no-rej}
            Fact C: For every vertex $v$, $q_\rej \notin S_v$. 
        \end{quote}
        \item By induction on the structure of the tree $T$, we now prove that
        \begin{quote}\label{fact:D}
            Fact D: For all non-leaves $v$, $S_v \subseteq Q \times \{s\}$. Further, if the outgoing edge from some vertex $v$ is labelled
            by $\text{done}_q$ for some $q \in Q$, then $q = s$
            and $v$ is the parent of a leaf with $S_v = \{(s,s)\}$.
        \end{quote}
        Clearly $S_v \subseteq Q \times \{s\}$ when $v$ is the root vertex.
        Suppose for some non-leaf $v$, we have $S_v \subseteq Q \times \{s\}$. If the outgoing edges from $v$ are labelled
        by some letter from $\Sigma$, then it is clear that for all children
        $v'$ of $v$ we have $S_{v'} \subseteq Q \times \{s\}$.
        By Fact B no edge can be labelled by $\text{decide}_q$
        for any $q \in Q$. Hence, the only remaining case is when the outgoing
        edges of $v$ are labelled by $\text{done}_q$ for some $q \in Q$.
        In this case, there is only one child of $v$ (say $v'$).
        Since $S_v \subseteq Q \times \{s\}$, if $\text{done}_q \neq \text{done}_s$ or if $S_v \neq \{(s,s)\}$
        then $\rej \in S_{v'}$ which contradicts Fact C.
        Hence, $S_v = \{(s,s)\}$, $\text{done}_q = \text{done}_s$ and so $S_{v'} = \{q_\acc\}$. By minimality
        of $T$, $v'$ must be a leaf.
    \end{itemize}
    
    By combining all the facts it follows that if we remove the leaves
    of $T$, we get a synchroniser between $(I \times \{s\},\gamma)$
    and $(s,s)$ in $\Pc'$ such that all the edges are labelled by letters from $\Sigma$
    alone. Hence, if we remove the second co-ordinate $s$ from each
    state in the label of each vertex of $T$, we will get
    a synchroniser between $(I,\gamma)$ and $s$ in $\Pc$.\\

    Therefore, we have shown that  $\exists s \in Q$ with $(I,\gamma) \xRightarrow[\Pc]{} s$ if and only if $(I \times \{\smiley\},\gamma) \xRightarrow[\Pc']{} q_\acc$.  Hence, this gives the desired
    reduction from {\scshape Subset-Ada-Sync} to {\scshape Given-Sync}.

\paragraph*{{\scshape Given-Sync} is reducible to {\scshape Subset-Ada-Sync}}
    Let $\Pc = (Q,\Sigma,\Gamma,\delta)$ with $I \subseteq Q$, $s \in Q$
    and $\gamma \in \Gamma^*$. The central idea is to take 
    two disjoint copies of $\Pc$ and then add a new state $q_\acc$ such that
    the only state reachable from both these copies of $\Pc$ is the state $q_\acc$.
    Further this state $q_\acc$ can be reached only from the
    copies of the state $s$. Then, if at all synchronisation is possible 
    from both the copies of $I$ in $\Pc'$, it has to happen at $q_\acc$ and so must go through the corresponding copies of $s$. Hence the projection
    of this synchronisation on any of the copies will lead 
    to a synchronisation from $I$ to $s$ in $\Pc$. 
    
    This idea can be 
    concretely implemented as follows: The states of $\Pc'$
    will be $Q' = (Q \times \{0,1\}) \cup \{q_\acc,(q_\rej,0),(q_\rej,1)\}$.
    The input alphabet of $\Pc'$ will be $\Sigma \cup \{\End\}$.
    Upon reading any letter from $\Sigma$, the transitions on a state $(q,b) \in Q \times \{0,1\}$ just mimic the transitions of $\delta$ on the first co-ordinate
    and leave the second one unchanged. Upon reading $\End$, 
    the states $(s,0)$, $(s,1)$ and $q_\acc$ move to $q_\acc$ and 
    all the other states move to their corresponding copy of $q_\rej$.
    Notice 
    that $\Pc'$ is deterministic if $\Pc$ is. 
    We now claim that
    \begin{quote}
        $(I,\gamma) \xRightarrow[\Pc]{} s$ iff
        there exists $q \in Q'$ such that $(I \times \{0,1\}, \gamma) \xRightarrow[\Pc']{} q$.     
    \end{quote}
    
    ($\Rightarrow$) Suppose there is a synchroniser between $(I,\gamma)$
    and $s$ in $\Pc$ (say $T$). Let $(S_v,\eta_v)$ be the label
    of every vertex $v$ in $T$. Modify $T$ as follows: Replace $(S_v,\eta_v)$
    with $(S_v \times \{0,1\},\eta_v)$ for every vertex $v$.
    Then to each leaf $v$, add an outgoing edge labelled by $\End$ and
    label the child of this edge by $(q_\acc,\eta_v)$.  
    By inspecting the transition relation, it can be easily verified
    that this modified tree is a synchroniser between $(I \times \{0,1\}, \gamma)$
    and $q_\acc$ in $\Pc'$.\\
    
    ($\Leftarrow$) Suppose there exists $q \in Q'$ such that $(I \times \{0,1\}, \gamma) \xRightarrow[\Pc']{} q$.  Since $q_\acc$ is the only state
    reachable from both $I \times \{0\}$ and $I \times \{1\}$, it
    follows that $q = q_\acc$. Hence, we have a synchroniser between
    $(I \times \{0,1\},\gamma)$ and $q_\acc$ in $\Pc'$ (say $T$). 
    We can assume that $T$ is a minimal such synchroniser.
    Let $(S_v,\eta_v)$ be the label of each vertex $v$ in $T$.
    We now do a series of observations.
    \begin{itemize}
        \item Notice that $(q_\rej,0)$ and $(q_\rej,1)$ both have no outgoing transitions. Since $S_u = \{q_{\acc}\}$ for any leaf $u$, it follows that 
        \begin{quote}
            Fact A: For every vertex $v$, $(q_\rej,0) \notin S_v$ and $(q_\rej,1) \notin S_v$.
        \end{quote}
        
        \item By induction on the structure of the tree $T$, we now prove that
        \begin{quote}
            Fact B: For all non-leaf vertices $v$, there exists $Q_v \subseteq Q$
            such that $S_v = Q_v \times \{0,1\}$. 
            Further, if the outgoing edge from some vertex $v$ is labelled
            by $\End$, then $v$ is the parent of a leaf with $S_v = \{s\} \times \{0,1\}$.
        \end{quote}
        
        Clearly $S_v = I \times \{0,1\}$ when $v$ is the root.
        Suppose for some non-leaf vertex $v$, there exists $Q_v$ such that $S_v = Q_v \times \{0,1\}$. If the outgoing edges from $v$ are labelled
        by some letter from $\Sigma$, then it is clear that for all children
        $v'$ of $v$ there exists $Q_{v'}$ with $S_{v'} = Q_{v'} \times \{0,1\}$.
        Hence, the only remaining case is when the outgoing
        edges of $v$ are labelled by $\End$.
        In this case, there is only one child of $v$ (say $v'$).
        If $Q_v \neq \{s\}$ then $(q_\rej,0), (q_\rej,1) \in S_{v'}$ which contradicts Fact A.
        Hence, $Q_v = \{s\}$ and so $S_v = \{s\} \times \{0,1\}$. Therefore, it follows that $S_{v'} = \{q_\acc\}$. By minimality
        of $T$, $v'$ must be a leaf.
    \end{itemize}
    
    It then follows that if we  remove the  leaves of $T$, we get
    a synchroniser between $(I \times \{0,1\},\gamma)$ and $s \times \{0,1\}$
    such that all the edges are labelled by letters from $\Sigma$ alone. 
    Hence, if we project the labels of each vertex on the first copy,
    we get a synchroniser between $(I,\gamma)$ and $s$ in $\Pc$.\\
    
    Therefore we have shown that $(I,\gamma) \xRightarrow[\Pc]{} s$ iff
    there exists $q \in Q'$ such that $(I \times \{0,1\}, \gamma) \xRightarrow[\Pc']{} q$. This gives the desired reduction from {\scshape Given-Sync} to
    {\scshape Subset-Ada-Sync}.
    
\end{proof}

\begin{proposition}
    {\scshape Given-Sync} is poly. time equivalent to {\scshape Super-Sync}. 
    Further, the same is true for the corresponding deterministic versions.
\end{proposition}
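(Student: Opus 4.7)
The plan is to prove two polynomial-time reductions, then check that both constructions preserve determinism, so that the equivalences transfer to the deterministic versions.

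For {\scshape Super-Sync} $\le_p$ {\scshape Given-Sync}, I would construct from $(\Pc, I, \gamma, s)$ a PDA $\Pc'$ obtained by adding a fresh letter $\#$ together with two sink states $q_\acc$ and $q_\rej$. On every $(q, \#, A)$ the stack is left unchanged (pop $A$, push $A$, with the usual convention for $\bot$); the target state is $q_\acc$ exactly when $q = s$ and $A = \bot$, and $q_\rej$ otherwise. Because the stack effect is the same for every source state, $\sim^\#_{S,A}$ collapses into a single equivalence class, so the unique $\#$-successor of $(S, A\eta)$ has $\nex = \{q_\acc\}$ precisely when $S = \{s\}$ and $A = \bot$; every other case plants $q_\rej$ in the successor and, as $q_\rej$ is a sink, that branch can never become a $\{q_\acc\}$-leaf. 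Therefore, any minimal synchroniser to $q_\acc$ in $\Pc'$ uses $\#$ only directly above its leaves, at vertices labelled $(\{s\}, \bot)$, and stripping those edges yields a super-synchroniser in $\Pc$ to $s$. The converse is routine: a super-synchroniser in $\Pc$ extends to a synchroniser in $\Pc'$ by attaching a single $\#$-edge below each of its $(\{s\}, \bot)$ leaves.

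For {\scshape Given-Sync} $\le_p$ {\scshape Super-Sync}, I would add to $\Pc$ a fresh letter $\$$ and a single sink $q_\rej$, defining $\$$ on state $s$ to pop the top (or preserve $\bot$) while staying at $s$, and on every other state to perform the same stack operation while moving to $q_\rej$. The same single-class argument shows that $\$$ played at a vertex with $S \not\subseteq \{s\}$ inevitably introduces $q_\rej$ into the successor, so in any super-synchroniser every $\$$-edge sits under a vertex already satisfying $S = \{s\}$. Then, given a synchroniser $T$ in $\Pc$ to $s$ with leaves $(\{s\}, \eta_\ell)$, I would hang below each leaf $\ell$ a chain of $|\eta_\ell|$ many $\$$-edges, each peeling off one stack letter, to obtain a super-synchroniser to $s$ in the enlarged PDA. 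For the reverse, given a super-synchroniser $T'$ in the enlarged PDA, I would truncate $T'$ at the first vertex on each root-to-leaf path where $S = \{s\}$; since no $\$$-edge can live strictly above such a vertex and the $\Sigma$-transitions are unchanged, the truncation is a synchroniser to $s$ in $\Pc$.

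Both constructions add only a polynomial amount of syntax and assign a unique image to every $(\text{state}, \text{letter}, \text{top})$ triple, so $\Pc'$ and $\Pc''$ are deterministic whenever $\Pc$ is, which lifts the equivalences to their deterministic versions. I expect the main obstacle in both directions to be the same: ruling out ``opportunistic'' uses of the fresh letter that would let the observer reach the target singleton without actually synchronising $\Pc$ itself. That step is where the sink state $q_\rej$ and the single-equivalence-class structure of the new transitions do the real work, pinning down exactly which pseudo-configurations can productively use the new letter.
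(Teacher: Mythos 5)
Your proposal is correct and follows essentially the same route as the paper: both directions work by appending a fresh ``test'' letter whose transitions send every wrong state to a sink $q_\rej$, so that the observer can only use the new letter at pseudo-configurations that already witness the original synchronisation/super-synchronisation, and the minimal-tree truncation/stripping argument recovers the original witness. The only (cosmetic) difference is in the {\scshape Given-Sync}-to-{\scshape Super-Sync} direction, where you reuse $s$ as the target and let a single letter $\$$ pop the stack only from $s$, whereas the paper introduces a fresh accepting state $q_\acc$ together with two letters $\End$ and $\mathtt{pop}$; both implement the same ``empty the stack only if you synchronised correctly'' idea.
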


\begin{proof}
    \paragraph*{{\scshape Given-Sync} is reducible to {\scshape Super-Sync}}
    Let $\Pc = (Q,\Sigma,\Gamma,\delta)$ be a PDA with 
    $I \subseteq Q$, $s \in Q$ and $\gamma \in \Gamma^*$.
    Construct $\Pc'$ from $\Pc$ by adding two new states $q_\acc$ and $q_\rej$
    and two new input letters $\End$ and $\mathtt{pop}$. Upon inputting $\End$
    the states $s$ and $q_\acc$ move to $q_\acc$ whereas all the
    other states move to $q_\rej$. Upon inputting $\mathtt{pop}$,
    all the states except $q_\acc$ move to $q_\rej$ whereas
    $q_\acc$ remains at $q_\acc$ and keeps on popping the stack.
    Notice that $\Pc'$ is deterministic if $\Pc$ is.
    We now claim that 
    \begin{quote}
        $(I,\gamma) \xRightarrow[\Pc]{} s$ if and only if $(I,\gamma) \xRightarrow[\Pc']{\text{sup}} q_\acc$
    \end{quote}
    
    ($\Rightarrow$) Let $T$ be  a synchroniser
    between $(I,\gamma)$ and $s$ in $\Pc$ and let
    $(S_v,\eta_v)$ be the label of each vertex $v$. 
    From each leaf $v$, add an outgoing edge  labelled by $\End$ and label
    the child of this edge by $(q_\acc,\eta_v)$. 
    Let $\eta_v = w_0w_1 \dots w_k\bot$.
    Now, add a chain of $k+1$ vertices from this child
    with each edge labelled by $\mathtt{pop}$ such that the $i^{th}$ vertex
    in the chain is labelled by $(q_\acc,w_iw_{i+1},\dots,w_k,\bot)$.
    It is clear that the new tree is a super-synchroniser between
    $(I,\gamma)$ and $q_\acc$ in $\Pc'$.\\
    
    ($\Leftarrow$) Let $T$ be a super-synchroniser between $(I,\gamma)$ and
    $q_\acc$ in $\Pc'$. We can assume that $T$ is a minimal such super-synchroniser.
    Let $(S_v,\eta_v)$ be the label of each vertex $v$ in $T$.
    We now do a series of observations.
    \begin{itemize}
        \item Because there are no outgoing transitions from $q_\rej$ and since $S_u = \{q_\acc\}$ for every leaf $u$, we have,
        \begin{quote}
            Fact A: For every vertex $v$, $q_\rej \notin S_v$.
        \end{quote}
        \item We now claim that,
        \begin{quote}
            Fact B: Along every branch of $T$, there is a vertex $v$ with only one child $v'$
            such that $S_v = \{s\}$, the outgoing edge from $v$ is labelled
            by $\End$ and $S_{v'} = \{q_\acc\}$. Further, for every vertex $v''$ before $v$
            in this branch, we have $S_{v''} \subseteq Q$ and no edge before the edge $(v,v')$
            along this branch is labelled by $\End$.
        \end{quote}
        
        Let us consider a branch of the tree $T$. We will
        say that $v \le v'$ for two vertices along this branch if $v$ appears
        before $v'$ along this branch.
        Now, the root of the branch is labelled by $(I,\gamma)$ whereas the leaf is 
        labelled by $(q_\acc,\bot)$. Hence, there should be a vertex
        $v$ and its child $v'$ along this branch such that $q_\acc \in S_{v'}$
        but $q_\acc \notin S_{v''}$ for every vertex $v'' \le v$.
        By Fact A, $q_\rej \notin S_{v''}$ for every vertex $v'' \le v$ and so $S_{v''} \subseteq Q$ for every $v'' \le v$. Further, if 
        the outgoing edge from some vertex $v'' < v$ 
        is labelled by $\End$, then the child of $v''$ along this 
        branch will contain either $q_\acc$ or $q_\rej$, which will lead to 
        a contradiction. 
        
        Now the only way to move from some state in $Q$ to $q_\acc$ is by the letter $\End$.
        Hence the edge between $v$ and $v'$ must be labelled by $\End$. 
        Now if $S_v \neq \{s\}$ then $q_\rej \in S_{v'}$ which contradicts Fact A.
        Hence $S_v = \{s\}$ and $S_{v'} = \{q_\acc\}$.
        
    \end{itemize}
    
    Hence, using Fact B, we proceed to cut the tree $T$ as follows:
    Along every branch, find the vertex $v$ as guaranteed
    by Fact B and then remove all the vertices after $v$ along this branch.
    It follows that this reduced tree will be a synchroniser between
    $(I,\gamma)$ and $s$ in $\Pc$.\\
    
    Therefore we have shown that $(I,\gamma) \xRightarrow[\Pc]{} s$ if and only if $(I,\gamma) \xRightarrow[\Pc']{\text{sup}} q_\acc$. Hence
    we get the desired reduction from {\scshape Given-Sync} to {\scshape Super-Sync}.
    
    \paragraph*{{\scshape Super-Sync} is reducible to {\scshape Given-Sync}}
    Let $\Pc = (Q,\Sigma,\Gamma,\delta)$ be a PDA with 
    $I \subseteq Q$, $s \in Q$ and $\gamma \in \Gamma^*$.
    Construct $\Pc'$ from $\Pc$ by adding two new states $q_\acc$ and $q_\rej$
    and one new input letter $\End$. Upon inputting $\End$,
    the state $q_\acc$ remains at $q_\acc$, 
    the state $s$ moves to $q_\acc$ if the stack is empty
    and in all the other cases, $\Pc'$ moves to $q_\rej$.
    Notice that $\Pc'$ is deterministic if $\Pc$ is.
    We now claim that 
    \begin{quote}
        $(I,\gamma) \xRightarrow[\Pc]{\text{sup}} s$ if and only if $(I,\gamma) \xRightarrow[\Pc'] {} q_\acc$
    \end{quote}
    
    ($\Rightarrow$) Suppose $T$ is a super-synchroniser between $(I,\gamma)$
    and $s$ in $\Pc$. Let $(S_v,\eta_v)$ be the label of each vertex $v$.
    To every leaf of $T$, add an outgoing edge labelled by $\End$,
    and label the child of this edge by $(q_\acc,\bot)$. 
    It follows that this new tree is a synchroniser between $(I,\gamma)$
    and $q_\acc$ in $\Pc'$.\\
    
    ($\Leftarrow$) Suppose $T$ is a synchroniser between $(I,\gamma)$
    and $q_\acc$ in $\Pc'$. We can assume $T$ is a minimal such synchroniser.
    Let $(S_v,\eta_v)$ be the label of each vertex $v$.
    We now do a series of observations.
    \begin{itemize}
        \item Since there are no outgoing transitions from $q_\rej$, and since $S_u = \{q_\acc\}$ for every leaf $u$, it follows that
        \begin{quote}
            Fact A: For every vertex $v$, $q_\rej \notin S_v$.
        \end{quote}
        \item By induction on the structure of $T$, we claim that,
        \begin{quote}
            Fact B: If $v$ is a non-leaf, then $S_v \subseteq Q$.
            Further, if the outgoing edge from some vertex $v$ is labelled by
            $\End$, then $v$ is the parent of a leaf with $(S_v,\eta_v) = (s,\bot)$.
        \end{quote}
        
        Clearly $S_v \subseteq Q$ when $v$ is the root.
        Suppose for some non-leaf $v$, $S_v \subseteq Q$. 
        If the outgoing edges from $v$ are labelled
        by some letter from $\Sigma$, then it is clear that for all children
        $v'$ of $v$, $S_{v'} \subseteq Q$.
        Hence, the only remaining case is when the outgoing
        edges of $v$ are labelled by $\End$.
        In this case, there is only one child of $v$ (say $v'$).
        If $S_v \neq \{s\}$ or if $\eta_v \neq \bot$, then $q_\rej \in S_{v'}$ which contradicts Fact A.
        Hence, $S_v = \{s\}$, $\eta_v = \bot$ and so $S_{v'} = \{q_\acc\}$. By minimality
        of $T$, $v'$ must be a leaf.
        
    \end{itemize}
    Hence, if we remove all the leaves of $T$ we get a 
    super-synchroniser between $(I,\gamma)$ and $s$ in $\Pc$.\\
    
    Therefore, we have shown that $(I,\gamma) \xRightarrow[\Pc]{\text{sup}} s$ iff $(I,\gamma) \xRightarrow[\Pc'] {} q_\acc$. This gives the desired
    reduction from {\scshape Super-Sync} to {\scshape Given-Sync}.
\end{proof}

\begin{proposition}
    {\scshape Super-Sync} and {\scshape Special-Sync} are poly. time equivalent. Further the same is true
    for the corresponding deterministic versions.
\end{proposition}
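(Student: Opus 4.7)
Since {\scshape Special-Sync} is just the restriction of {\scshape Super-Sync} to instances where $\gamma = \bot$, the reduction from {\scshape Special-Sync} to {\scshape Super-Sync} is trivial (the identity), and the same holds in the deterministic case. So the task is to reduce {\scshape Super-Sync} to {\scshape Special-Sync}.

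Given an instance $(\Pc, I, \gamma, s)$ of {\scshape Super-Sync} with $\Pc = (Q,\Sigma,\Gamma,\delta)$, I would write $\gamma = \gamma'\bot$ with $\gamma' \in (\Gamma\setminus\{\bot\})^*$ and construct $\Pc'$ by adjoining a fresh ``pre-initial'' copy $q_{\init}$ of each $q \in Q$, a sink state $q_\rej$, and a single new input letter $\mathtt{setup}$. Reading $\mathtt{setup}$ at $q_{\init}$ with $\bot$ on top pops $\bot$ and pushes $\gamma'$, so by the standing $\bot$-convention the new stack is exactly $\gamma$ and the state becomes $q$. All other combinations involving $q_{\init}$, as well as $\mathtt{setup}$ at any $q \in Q$, are sent to $q_\rej$; transitions for letters of $\Sigma$ at states in $Q$ are copied verbatim from $\delta$; and $q_\rej$ is a sink on every input. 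The new instance of {\scshape Special-Sync} is $(\Pc', I_{\init}, s)$ with $I_{\init} = \{q_{\init} : q \in I\}$, and $\Pc'$ is deterministic whenever $\Pc$ is.

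The correctness claim I would prove is that $(I,\gamma) \xRightarrow[\Pc]{\text{sup}} s$ if and only if $(I_{\init},\bot) \xRightarrow[\Pc']{\text{sup}} s$. The forward direction is obtained by prepending to a given super-synchroniser in $\Pc$ a new root labelled $(I_{\init},\bot)$ with a single outgoing $\mathtt{setup}$ edge; one checks that $\Succ(I_{\init},\bot,\mathtt{setup}) = \{(I,\gamma)\}$, so the old root $(I,\gamma)$ becomes the unique child. For the converse I would take a minimal super-synchroniser $T'$ in $\Pc'$ from $(I_{\init},\bot)$ to $s$ and follow the ``Fact A / Fact B'' pattern used in the earlier propositions in this appendix. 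Since $q_\rej$ is a sink and every leaf of $T'$ is labelled $(\{s\},\bot)$, $q_\rej$ never appears in a label; therefore the first edge out of the root cannot carry a letter of $\Sigma$ (that would push every pre-initial state into $q_\rej$), so it must be labelled $\mathtt{setup}$, and its unique child is $(I,\gamma)$. An inductive argument on the remaining tree then shows that every subsequent label sits in $2^Q \times \Gamma^*$ and no further $\mathtt{setup}$ edge occurs, so the subtree below the root is a bona fide super-synchroniser in $\Pc$ from $(I,\gamma)$ to $s$.

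The main obstacle I anticipate is ruling out spurious ``late'' uses of $\mathtt{setup}$ inside $T'$. Such an edge could only act non-trivially at a descendant pseudo-configuration with $\bot$ on top, but by the standing convention $\bot$ appears only at the bottom of the stack, so it can sit on top only when the stack is exactly $\bot$; by the super-synchroniser condition this must already be a leaf labelled $(\{s\},\bot)$. Minimality of $T'$ together with the sink behaviour of $q_\rej$ should therefore eliminate this case cleanly, after which the back-and-forth between synchronisers in $\Pc$ and $\Pc'$ reduces to a routine label-projection argument.
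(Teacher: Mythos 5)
Your reduction is correct and is essentially the construction the paper uses: adjoin fresh pre-initial copies of the states in $I$ that pop $\bot$, push $\gamma$, and hand control to the original states, so that $(I,\gamma)\xRightarrow[\Pc]{\text{sup}} s$ iff the new pseudo-configuration over $\bot$ super-synchronises to $s$. The only (cosmetic) difference is that the paper has the pre-initial states perform this loading on \emph{every} input letter, which avoids the extra letter $\mathtt{setup}$ and the reject sink and makes the correctness argument immediate, whereas your version needs the short minimality/sink analysis you sketch (and note that a ``late'' $\mathtt{setup}$ is already killed outright by your rule sending every $q\in Q$ to $q_\rej$ on $\mathtt{setup}$, so the $\bot$-on-top discussion is unnecessary).
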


\begin{proof}
    It suffices to show that {\scshape Super-Sync} is reducible to
    {\scshape Special-Sync} as the latter is a special case of the former.
    Let $\Pc = (Q,\Sigma,\Gamma,\delta)$ be a PDA with $I \subseteq Q$,
    $s \in Q$ and $\gamma \in \Gamma^*$. Construct $\Pc'$ from $\Pc$
    by adding a new set of states $I' = \{q' : q \in I\}$ such that
    upon inputting any letter, the state $q' \in I'$
    pushes $\gamma$ onto the stack and moves to $q$. 
    It is obvious that $(I,\gamma) \xRightarrow[\Pc]{\text{sup}} s$
    iff $(I',\bot) \xRightarrow[\Pc']{\text{sup}} s$.
\end{proof}

\section{Proofs of Section~\ref{sec:lower-bounds}}
\LemABPTwoEXP*

\begin{proof}

We show that the acceptance problem for alternating Turing machines with exponential space can be reduced to the emptiness problem for AEPS.
Since alternating exponential space machines correspond
to deterministic doubly-exponential time machines, it would then follow that the emptiness problem is $\TWOEXPTIME$-hard.

More specifically, we are given an one-tape alternating Turing machine $\Mc$, a word $w$ and a number $B$ encoded in binary, and the problem is to decide if $M$ accepts $w$ whilst using at most $B$ tape cells. The reduction that we present here is similar to the reductions given in Theorem 5.4 of~\cite{Alternation} (and also Prop. 31 of~\cite{PushdownGames}), to prove that the emptiness problem for  AEPS without any Boolean variables is $\EXPTIME$-hard.
The only additional insight that we have here is that by using the Boolean variables in an AEPS, one can push \emph{exponentially} many symbols onto the stack in a single path before cycling back to some state. This is because using the tests and commands of an AEPS, once can implement a 'counter' which can count up to some exponential value. If $V$ denotes the set of variables in an AEPS, then we can store a number between 0 and $2^{|V|}-1$ using the
following convention: If $x_1,\dots,x_{\ell}$ are the
values of the variables $v_1,\dots,v_{\ell}$ at some point, then, at that point the values of the variables $V$ denote the number whose binary representation is given by $\boldsymbol{x} := x_1\cdot x_2\cdots x_\ell$  
where $x_1$ is the most significant bit and $x_\ell$, the least significant bit. Using some tests and commands, it is easy to see that one can implement operations which would effectively perform addition by 1, or checking for equality to a specific value, say $2^{|V|}$.

Let $Q$ be the states of $\Mc$, $\Sigma$ be the tape alphabet. A configuration of $\Mc$ will be denoted by the string $wqw'$ where $w,w'\in\Sigma^*$, $q \in Q$ and where the head of the machine is always to the right of the control state.
Let $\delta$ be the transition relation of $\Mc$ where transitions are of the form:
$$(q,a) \rightarrow \{(q_1,a_1,d_1),\dots,(q_k,a_k,d_k)\}$$
where $q,q_1,\dots,q_k \in Q$, $a,a_1,\dots,a_k \in \Sigma$ and $d_1,\dots,d_k \in \{\text{left},\text{right}\}$.
We note that the existential branching of the alternation is captured by non-deterministically choosing a transition applicable at each configuration and the universal branching is captured by forking into many copies as specified by the chosen transition.

We now construct an AEPS $\Ac$ which will guess and verify an accepting run of the machine $\Mc$ on the input $w$. $\Ac$ will operate in two stages. In the first stage, it guesses an accepting run of $\Mc$. In the second stage it verifies that this guess is indeed a valid run of the machine $\Mc$. 

The machine $\Ac$ will have $\ell = \lceil\log_2(B+2)\rceil$ many Boolean variables.
As mentioned before, using these Boolean variables we can have a bounded-counter
which will enable us to count up till $B+1$ and also allow us to check
if the counter value at any point is equal to some specific value (say something like $B+1$ or $B+2$). We will, in the description of the machine, use phrases like, `checks if a value is $x$', `pushes $y$ onto the stack $x$ many times', to denote counting using the Boolean variables $V$.

\paragraph*{The first stage}
Using the Boolean variables and appropriately designed tests and commands, $\Ac$ will have transitions which will allow it to push, in a non-deterministic manner, exactly $B+1$ letters from the set $Q \cup \Sigma$ onto the stack.
Additionally $\Ac$ also ensures, using the finite control, that the word pushed is of the form $w\cdot q\cdot w'$ where $w,w'\in\Sigma^*$ and $q\in Q$.
Once such a word has been pushed into the stack, the variables $V$ are reset to $0$ and we note that at this point, $\Ac$ has pushed a configuration of the machine $\Mc$ onto the stack. Note that in its finite control, $\Ac$ remembers the state $q$ that it pushed into the stack
and the letter $a$ that it pushed after $q$. Now, $\Ac$ non-deterministically picks a transition of $\Mc$ of the form $(q,a) \rightarrow \{(q_1,a_1,d_1),\dots,(q_k,a_k,d_k)\}$ and then forks into $k$ copies of itself, with the $i^{th}$ copy pushing the symbol $(q_i,a_i,d_i)$ into the stack. After this, it repeats this whole process again of trying to push a configuration and a transition of $\Mc$ onto the stack. The first stage ends when $\Ac$ pushes an accepting configuration of $\Mc$ onto the stack, i.e., a configuration where the state is an accepting state of $\Mc$.

Note that at this point, in each of the forked copies of $\Ac$, the stack has a sequence of the form
$$c_0 (q_1, a_1, d_1) c_1 (q_2, a_2, d_2) \dots (q_{k}, a_{k}, d_{k}) c_k$$
where each $c_i$ is a configuration of $\Mc$ and $c_k$ is an accepting configuration of $\Mc$

\paragraph*{The second stage}
$\Ac$ verifies that the guessed sequence is indeed a valid run of $\Mc$.
If $c_i$ is the configuration at the top of the stack, $\Ac$ forks into two copies,
with the first copy deciding to verify that the configuration $c_i$ follows from the configuration $c_{i-1}$ using the move $(q_i,a_i,d_i)$ and the second copy deciding to pop the configuration $c_i$ and $(q_i,a_i,d_i)$ from the stack and recursively doing a similar fork to verify the run from the configuration $c_{i-1}$.

To verify that $c_i$ follows from $c_{i-1}$ using $(q_i,a_i,d_i)$, the first copy of $\Ac$ proceeds as follows:
It forks into two copies, with the first copy deciding to check that the current letter of $c_i$ at the top of the stack follows correctly from the configuration $c_{i-1}$ using $(q_i,a_i,d_i)$ and the second copy popping the letter at the top of the stack and then recursively forking to do a similar choice for the next letter of $c_i$.
The first copy remembers the letter at the top of the stack, pops this letter and then using the bounded-counter removes $B+1$ symbols from the stack, and on the way to popping these $B+1$ symbols, also remembers the move
$(q_i,a_i,d_i)$ that it pops. After having removed these $B+1$ symbols, it then pops the next four letters from the stack, remembers all these four letters, and using the six pieces of information in the finite control that it has remembered, checks the consistency of these four letters along with the letter from the configuration $c_i$. If this check succeeds, then $\Ac$ moves to an accepting state, else it moves to a rejecting state.

Finally, when a copy of $\Ac$ ends up popping everything on the stack except for the configuration $c_0$,
it checks if this configuration is an initial configuration of the machine $\Mc$, i.e., it checks if 
it is of the form $qw\#^{B-n}$ where $\#$ denotes the blank symbol of $\Mc$ and $n$ is the size of the input $w$. To do this, it keeps on popping
the stack till a non-blank symbol is reached, and then using its finite control, checks that the remaining portion in the stack is of the form $qw$. 

It is clear from the description that such an alternating extended pushdown system $\Ac$ can be constructed in polynomial time.
\end{proof}

\LemmaNBPEXP*

\begin{proof}
    We will give a reduction from the problem of checking if an alternating linearly bounded Turing machine $\Mc$ accepts a word $w$, i.e., whether an alternating Turing machine
    $\Mc$ accepts a word $w$ whilst using at most $n = |w|$ tape cells.

    The proof of this can be seen as an adaptation of the proof of Lemma~\ref{lemma:ABP-2EXP} for alternating Turing machines that use linear space instead of exponential space. But now, since there is no alternation at the disposal of the machine, it instead simulates all possible branches of the alternating linear-space Turing machine. 

    We assume that the Turing machine $\Mc$ has a finite set of states $Q$ and tape alphabet $\Sigma$. A configuration of the machine is represented by a word in $\Sigma\uplus Q$ and is of the form $wqw'$ where $w,w'\in\Sigma^*$, $q\in Q$  and the head of the machine is always to the right of the control state. The transitions of $\Mc$ are of the form $(q,a) \rightarrow \{(q_1,a_1,d_1),\dots,(q_k,a_k,d_k)\}$. As discussed before, the existential branching of the alternation is captured by non-deterministically choosing a transition applicable at each configuration and the universal branching is captured by forking into many copies as specified by the chosen transition.

    Note that a run of an alternating linearly bounded Turing machine can be represented by a tree where the nodes are labelled by configurations. We now construct
    an NEPS $\Ac$, which, roughly speaking,  will explore this tree in a DFS order.
    Unlike in the proof of Lemma~\ref{lemma:ABP-2EXP}, here instead of non-deterministically pushing a configuration and later verifying it, with the help of polynomially many counters, $\Ac$ 'remembers' a configuration and ensures that the next configuration pushed respects the transition that is chosen to be executed.

    The machine $\Ac$ has $\lceil\log(|\Sigma|+|Q|)\rceil(n + 1)$ many Boolean variables. These variables are used to encode a configuration of the Turing machine as follows: The first $\lceil\log(|\Sigma|+|Q|)\rceil$ letters are used for the first letter of the configuration, the next $\lceil\log(|\Sigma|+|Q|)\rceil$ for the next and so on. 
    The machine $\Ac$ has three modes: The initial, forward and reverse mode.

\begin{itemize}
    \item In the \textbf{initial mode}, $\Ac$ pushes the initial configuration $c_0 = q_0\cdot w$ into the stack, one by one letter at a time
    Moreover, it remembers the first letter $a$ of $w$ in its finite control and also ensures that $c_0$ is also encoded in the
    Boolean variables using the encoding described above. 
    \item Once an initial configuration is pushed in the stack, non-deterministically, a letter of the form $(t,1)$ where $t = (q_0,a) \rightarrow \{(q_1,a_1,d_1),\dots,(q_k,a_k,d_k)\}$ is pushed. The $t$ is chosen non-deterministically. It then proceeds to the forward mode.
    \item In the  \textbf{forward mode},
    suppose the top of the stack is of the form $(t,i)$ for $t = (q,a) \rightarrow \{(q_1,a_1,d_1),\dots,(q_k,a_k,d_k)\}$ and the contents of the Boolean variables denotes a configuration $c$. Then, from the Boolean variables, $\Ac$ gets the state and the three letters around the head of the configuration $c$. 
    Using these pieces of information, it updates the values
    of the Boolean variables encoding the state and the three letters
    around the head according to the transition $(q,a)\rightarrow (q_i,a_i,d_i)$. 
    Having done this, the Boolean variables now encode a new configuration $c'$.
    $\Ac$ then pushes this configuration $c'$ to the stack.
    After pushing a configuration, the machine then pushes a letter of the form $(t,1)$ where $t$ is chosen non-deterministically, out of the set of transitions that are possible from the configuration $c'$ stored in the Boolean variables.
    \item If a configuration pushed contains a final state, then $\Ac$ goes into \textbf{reverse mode} defined below where the following happens:
    \begin{itemize}
        \item It pops elements from the stack until it reaches a letter of the form $(t,i)$ for $t = (q,a) \rightarrow \{(q_1,a_1,d_1),\dots,(q_k,a_k,d_k)\}$ or the bottom of the stack symbol.
        \item If the bottom of the stack is reached, $\Ac$ has completed its DFS traversal of the accepting tree and reaches an accept state.
        \item If $i=k$, then it pops $(t,i)$ and the configuration below it and continues to be in the reverse mode.
        \item If $i<k$, then it pops $(t,i)$, remembers it in its finite control, pops the next $n+1$ symbols and stores the corresponding configuration that it pops in the Boolean variables. Later it then pushes the same configuration onto the stack and then pushes $(t,i+1)$ onto the top of the stack and proceeds into forward mode. 
    \end{itemize}
\end{itemize}
The above NEPS simulates a run-tree of an alternating linearly bounded Turing machine and can be encoded in size that is polynomial in $|\Mc|$ and $|w|$, showing that emptiness of NEPS is $\EXPTIME$-hard.
\end{proof}

\subsection{Proof of Reduction From Alternating Extended Pushdown Systems to {\scshape Special-Sync}}
\ThmSyncHardness*

\begin{proof}
    We now present the reduction from the emptiness problem for AEPS to {\scshape Special-Sync} in more detail.
    
    Let $\Ac = (Q,V,\Gamma,\Delta,\init,\fin)$ be an AEPS.
    Without loss of generality, we shall assume that 
    if $(q,A,G) \hookrightarrow \{(q_1,\gamma_1,C_1),
    (q_2,\gamma_2,C_2),\dots,(q_k,\gamma_k,C_k)\} \in \Delta$, then $\gamma_i \neq \gamma_j$
    for $i \neq j$. 
    This is because, if it happens that (say) $\gamma_1 = \gamma_2$, then
    we introduce a new stack symbol $\#$, a new state $q'$ and then replace this transition
    with $(q,A,G) \hookrightarrow \{(q',\#\gamma_1,C_1),(q_2,\gamma_2,C_2),\dots,(q_k,\gamma_k,C_k)\},$ 
    $(q',\#,\emptyset) \hookrightarrow \{(q_1,\epsilon,\emptyset)\}$.
    Similarly we can introduce additional states if equality holds for other indices as well.
    Having made this assumption, the desired reduction is described below.

    We now construct a pushdown automaton $\Pc$ as follows:
    The stack alphabet of $\Pc$ will be $\Gamma$.
    For each transition $t \in \Delta$, $\Pc$ will have an input letter $\inputt(t)$.
    $\Pc$ will also have another input letter $\End$.
    The state space of $\Pc$ will be the set $Q \cup (V \times \{0,1\}) \cup \{q_\acc,q_\rej\}$, where $q_\acc$ and $q_\rej$ are two new states, which 
    on reading any input letter, will leave the stack untouched and simply stay at $q_\acc$ and $q_\rej$ respectively. 
    
    Now we describe the transitions of $\Pc$.
    Let $t = (q,A,G) \hookrightarrow \{(q_1,\gamma_1,C_1),\dots,(q_k,\gamma_k,C_k)\}$ be
    a transition of $\Ac$.  
    Let $p \in Q$. Upon reading $\inputt(t)$, if $p \neq q$ then $p$ immediately 
    moves to the $q_\rej$ state. Further, even state $q$ moves to the $q_\rej$ state
    if the top of the stack is not $A$. However, if the top of the 
    stack is $A$, then $q$ pops $A$ and \emph{non-deterministically} pushes 
    any one of $\gamma_1,\dots,\gamma_k$ onto the stack and if it pushed $\gamma_i$, then $q$
    moves to the state $q_i$.
    
    Let $(v,b) \in V \times \{0,1\}$. Upon reading $\inputt(t)$, if the test $v ?= (1-b)$ appears in the guard $G$, then $(v,b)$ immediately moves to the $q_{\rej}$ state. (Notice
    that this is a purely syntactical condition on $\Ac$). 
    Further, if the top of the stack is not $A$, then once again $(v,b)$ moves to $q_{\rej}$.
    If these two cases do not hold, then $(v,b)$ pops $A$ and \emph{non-deterministically} picks an $i\in \{1,\dots k\}$ and pushes $\gamma_i$ onto the stack. Having pushed $\gamma_i$, if $C_i$ does not update the variable $v$,  it stays in state $(v,b)$; otherwise if $C_i$ has a command $v \mapsto b'$, it moves to $(v,b')$.
    
    Finally, upon reading $\End$, the states in $\{\fin\} \cup \{(v,0) : v \in V\}$ move
    to the $q_\acc$ state and all the other states in $Q \cup (V \times \{0,1\})$
    move to the $q_\rej$ state. This ends our construction of $\Pc$.

    Given an assignment $F : V \to \{0,1\}$ of the Boolean variables $V$, and a state $q$ of $\Ac$, we use 
    the notation $\set{q,F}$ to denote the subset $\{q\} \cup \{(v,F(v)) : v \in V\}$ of states of $\Pc$. We now analyse some basic properties of the constructed automaton 
    $\Pc$.
    \begin{itemize}
        \item By construction of $\Pc$, it is easy to see that,
        \begin{quote}
            \emph{Fact A: } Suppose $t$ is a transition of $\Ac$ which is not enabled at the configuration $(q,A\gamma,F)$. Then, upon reading $\inputt(t)$, there is at least one possible successor $(S,\eta)$ of the pseudo-configuration $(\set{q,F},A\gamma)$ such that $q_\rej \in S$.
        \end{quote}
        
        Indeed, suppose $t = (p,B,G) \hookrightarrow \{(p_1,\gamma_1,C_1),\dots,(p_k,\gamma_k,C_k)\}$
        is a transition of $\Ac$ which is not enabled at $(q,A\gamma,F)$.
        Either $q \neq p$, in which case the state $q$ moves to $q_\rej$ in $\Pc$;
        Or $A \neq B$, in which case all the states in $\set{q,F}$ move to $q_{\rej}$ in $\Pc$;
        Or for some variable $v$, the value $F(v)$ does not satisfy some guard in $G$,
        which can happen iff the test $v ?= 1-F(v)$ appears in $G$, in which case
        the state $(v,F(v)))$ moves to $q_{\rej}$ in $\Pc$. This proves Fact A.
        
        \item  Now, recall that if $(q,A,G) \hookrightarrow 
        \{(q_1,\gamma_1,C_1),\dots,(q_k,\gamma_k,C_k)\}$
        is a transition in $\Ac$, then $\gamma_i \neq \gamma_j$ for any $i \neq j$,
        With this in mind, the following fact is rather immediate to see
        \begin{quote}
            \emph{Fact B: } Suppose the configuration $(q,A\gamma,F)$ forks into the configurations $(q_1,\gamma_1\gamma,F_1),\dots,\\(q_k,\gamma_k\gamma,F_k)$ using the transition $t$ in the
            AEPS $\Ac$. 
            Then, the possible successors from the pseudo-configuration $(\set{q,F},A\gamma)$ upon reading $\inputt(t)$ in  the PDA $\Pc$
            are $(\set{q_1,F_1},\gamma_1\gamma),\dots,(\set{q_k,F_k},\gamma_k\gamma)$.
        \end{quote}
    \end{itemize}

    

    Using these 2 facts, we now claim that:
    \begin{quote}
        There exists an accepting run from a configuration $(q,\eta,H)$ in $\Ac$ iff there exists a super-synchroniser between $(\set{q,H},\eta)$ and $q_{\acc}$ in $\Pc$.
    \end{quote}

    ($\Rightarrow$ ) Suppose there is an accepting run from a configuration
    $(q,\eta,H)$ in $\Ac$. We prove the claim by induction on the size
    of the accepting run. For the base case of 1, it must be the
    case that $(q,\eta,H) = (\fin,\bot,\zero)$. In this case,
    by inputting the letter $\End$, it is clear that 
    there is a super-synchroniser between $(\set{\fin,\zero},\bot)$
    and $q_{\acc}$ in $\Pc$.
    
    Suppose we have an accepting run $T$ of size $m+1$ from the 
    configuration $(q,\eta,H)$ in $\Ac$.
    The root of $T$ is labelled by $(q,\eta,H)$. 
    Suppose its children are labelled by $(q_1,\gamma_1,F_1),\dots,(q_k,\gamma_k,F_k)$.
    By induction hypothesis, for each $1 \le i \le k$, we have
    a super-synchroniser between $(\set{q_i,F_i},\gamma_i)$ and $q_{\acc}$ in $\Pc$.
    By Fact B, it follows that we then have a super-synchroniser 
    between $(\set{q,H},\eta)$ and $q_{\acc}$.\\
    
    ($\Leftarrow$) Suppose there exists a super-synchroniser (say $T$)
    between $(\set{q,H},\eta)$ and $q_\acc$ in $\Pc$.
    Without loss of generality, we can assume that only the leaves of $T$ are labelled by $(\{q_\acc\},\bot)$. Let $(S_n,\gamma_n)$ be the label of each node $n$ in $T$. We now proceed to make some observations.
    \begin{itemize}
        \item Since there are no outgoing transitions out of $q_\rej$ and since $S_n = \{q_\acc\}$ for every leaf node $n$, it follows that
        \begin{quote}
            \emph{Fact C: } For every node $n$, $q_\rej \notin S_n$.
        \end{quote}
        \item By induction on the structure of the tree $T$, we prove that
        \begin{quote}
            \emph{Fact D: } If $n$ is a non-leaf node, then $S_n = \set{p_n,F_n}$ for 
            some $p_n \in Q$ and some $F_n : V \to \{0,1\}$.
            Further, if the outgoing edge from some node $n$ is labelled by
            $\End$, then $n$ is the parent of a leaf with $(S_n,\gamma_n) = (\set{\fin,\zero},\bot)$.
        \end{quote}
        
        If $n$ is the root node, then clearly $S_n = \set{q,H}$ and so satisfies the claim.
        Suppose for some non-leaf node $n$, $S_n = \set{p_n,F_n}$. 
        Suppose the outgoing edges from $n$ are labelled by some letter $\inputt(t)$.
        If $t$ is not enabled at the configuration $(p_n,F_n,\gamma_n)$ in $\Ac$,
        by Fact A, there is at least one child $n'$ of $n$ with $q_\rej \in S_{n'}$,
        which contradicts Fact C.
        Hence, $t$ must be enabled at $(p_n,F_n,\gamma_n)$ in $\Ac$.
        By Fact B, it is then clear that for all children
        $n'$ of $n$, $S_{n'}$ is also of the form $\set{p_{n'},F_{n'}}$
        for some $p_{n'} \in Q$ and $F_{n'} : V \to \{0,1\}$.
        
        Hence, the only remaining case is when the outgoing
        edges of $n$ are labelled by $\End$.
        In this case, there is only one child of $n$ (say $n'$).
        If $p_n \neq \fin$ or if $F_n \neq \zero$, then $q_\rej \in S_{n'}$ which contradicts Fact C. Hence, $S_n = \set{\fin,\zero}$ and so $S_{n'} = \{q_\acc\}$. Notice
        that if $\gamma_n$ is not $\bot$, then $\gamma_{n'}$ is also not $\bot$.
        No transition of $q_\acc$ pops the stack and there is no outgoing transition
        from $q_\acc$, and so it would follow that no leaf in the subtree of $n'$
        is labelled by $(q_\acc,\bot)$, which is a contradiction.
        Hence, $\gamma_n = \bot$ and so $\gamma_{n'} = \bot$. Since $(S_{n'},\gamma_{n'}) = (q_\acc,\bot)$, $n'$ is a leaf. Hence, Fact D is true.
                
    \end{itemize}

    By Facts A, B, C and D, it then follows that if we take $T$, remove all its leaves
    and change the label $(\set{p_n,F_n},\gamma_n)$ of each node $n$
    to the label $(p_n,\gamma_n,F_n)$, we get an accepting run of $\Ac$.\\

    Hence there is an accepting run of $\Ac$ iff there is 
    a super-synchroniser between $(\set{\init,\zero},\bot)$ and $q_{\acc}$ in $\Pc$.
    Notice that $\Pc$ is deterministic if $\Ac$ is non-deterministic.  
    Hence, by Lemmas~\ref{lemma:ABP-2EXP} and \ref{lemma:NBP-EXP}, we get the 
    required claims.
\end{proof}

\section{Proofs of Section~\ref{sec:upperbounds}}

Throughout this section, we fix a single PDA $\Pc = (Q,\Sigma,\Gamma,\delta)$
with $I \subseteq Q$ and $s \in Q$. This gives
rise to the alternating pushdown system $\Ac_{\Pc} = (2^Q,\Gamma,\Delta,I,\{s\})$.

\PropEquiClassesSmallSize*

\begin{proof}
    By definition $T^a_{S,A} = \{t \in \delta : t = (p,a,A,q,\gamma) \text{ where } p \in S \}$. Because $\Pc$ is deterministic, the size of 
    $T^a_{S,A}$ is $|S|$. Now, the relation $\sim^a_{S,A}$ partitions
    $T^a_{S,A}$ into equivalence classes $E_1,\dots,E_k$ 
    and for each $i$, $\nex(E_i)$ is simply the set $\{q : (p,a,A,q,\gamma) \in E_i\}$. Since $|T^a_{S,A}|$ is $|S|$, it follows that $\sum_{i=1}^k |\nex(E_i)|$
    is at most $|S|$.
\end{proof}

\LemSmallLeaves*

\begin{proof}
    Let $T$ be any accepting run of $(S,\gamma) = (S,A\eta)$.
    We proceed by induction on the size of $T$.
    The base case of $1$ is trivial.
    For the induction step, suppose the size of $T$ is $m+1$ for some $m \ge 0$.
    Let $v_1,\dots,v_k$ be the children of the root.
    By nature of the transitions in $\Ac_{\Pc}$, it follows
    that there exists $a \in \Sigma$ and equivalence 
    classes $E_1,\dots,E_k$ of $\sim^a_{S,A}$ such that
    $v_i$ is labelled by 
    $(\nex(E_i),\word(E_i)\eta)$. 
    By induction hypothesis, the sub-tree rooted at $v_i$ has
    at most $|\nex(E_i)|$ leaves.
    By proposition~\ref{prop:equi-classes-small-size}
    we have that $\sum_{i=1}^k |\nex(E_i)| \le |S|$.
    Hence the total number of leaves of the tree $T$ is at most $|S|$.
\end{proof}

\subsection{Proofs for subsection~\ref{subsec:Sparse-Empty}}

Let us fix an alternating pushdown system $\Ac = (Q,\Gamma,\Delta,\init,\fin)$
and a number $k$. From $\Ac$, we can derive a non-deterministic pushdown system
obtained by deleting all transitions of the form $(q,A) \hookrightarrow \{(q_1, \gamma_1),\dots,(q_k,\gamma_k)\}$ with $k > 1$. 
We will denote this NPS by $\Nc$.

\LemCompress*

\begin{proof}
    ($\Rightarrow$) : Suppose we have a $k$-accepting run of $\Ac$ from $(p,\eta)$, say $T$.
    Let us proceed by induction on $|T|$.
    If $|T| = 1$, we are done. Otherwise, let $r$ be the root of $T$.
    If $r$ is a simple vertex, then let $v$ be the unique closest descendant 
    of $r$ such that $v$ is complex (Such a vertex always exists by means 
    of the definition of simple and complex). Note that the
    sub-tree rooted at $v$ has also $k$ leaves. If we let $(p',\eta')$
    be the label of $v$, by induction hypothesis there is a $k$-compressed accepting run from $(p',\eta')$, say $T'$. Now, take $T'$, 
    and add $(p,\eta)$ as a parent to $(p',\eta')$ in $T'$. By definition,
    this then gives rise to a $k$-compressed accepting run from $(p,\eta)$.
    
    If $r$ is a complex vertex, let $v_1,\dots,v_m$ be the children of $r$
    such that the label of each $v_i$ is $(p_i,\eta_i)$ and the sub-tree rooted at $v_i$ has $\ell_i$ leaves.
    By induction hypothesis, for each $i$, there is a $\ell_i$-compressed
    accepting run from $(p_i,\eta_i)$. Taking all these trees and adding
    $(p,\eta)$ as their root, gives rise to a $k$-compressed accepting run from $(p,\eta)$.\\
    
    ($\Leftarrow$) : Suppose we have a $k$-compressed accepting run of $\Ac$ from $(p,\eta)$, say $T$. Let us proceed by induction on $|T|$.
    If $|T| = 1$, we are done. Otherwise, let $r$ be the root of $T$.
    If $r$ is a simple vertex, then let $v$ be the only child of 
    $r$ and let the label of $v$ be $(p',\eta')$. 
    By definition of the $k$-compressed accepting run $T$,
    we have a run $(p,\eta) \xrightarrow[\Nc]{} (p_1,\eta_1) \xrightarrow[\Nc]{} (p_2,\eta_2) \dots (p_m,\eta_m) \xrightarrow[\Nc]{}(p',\eta')$.
    By induction hypothesis, we have a $k$-accepting run of $\Ac$
    from $(p',\eta')$, say $T'$. 
    Now, take $T'$ and attach  the linear chain of vertices $(p,\eta),(p_1,\eta_1),\dots,(p_m,\eta_m)$ before its root.
    This gives rise to a $k$-accepting run of $\Ac$ from $(p,\eta)$.
    
    If $r$ is a complex vertex, let $v_1,\dots,v_m$ be the children of $r$
    such that the label of each $v_i$ is $(p_i,\eta_i)$ and the sub-tree rooted at $v_i$ has $\ell_i$ leaves.
    By induction hypothesis, for each $i$, there is a $\ell_i$-accepting run from $(p_i,\eta_i)$. Taking all these trees and adding
    $(p,\eta)$ as their root, gives rise to a $k$-accepting run from $(p,\eta)$.
\end{proof}

\PropInvariant*

\begin{proof}
    Recall that Invariant (*) was the following:
    \begin{quote}
    Invariant (*) : A configuration $(q,\gamma) \in \Cc(M_v)$ iff all the vertices of the sub-tree rooted at $v$ can be labelled such that
    the resulting labelled sub-tree is a compressed accepting run of $\Ac$ from $(q,\gamma)$.
    \end{quote}
    
    Let us proceed by induction on the structure of the tree $T$.
    By construction, the invariant is true for all leaves $v$.
    Now, suppose we have a simple vertex $v$. Let $u$ be its only child.
    By induction hypothesis, assume that the invariant is true for $u$.
    By construction, $M_v$ is an automaton
    such that $\Cc(M_v) = \{(q',\gamma') : \exists (q,\gamma) \in \Cc(M_{u}) \text{ such that } (q',\gamma') \xrightarrow[\Nc]{*} (q,\gamma)\}$.
    It then immediately follows that the invariant is satisfied for the vertex $v$ as well.
    
    Suppose we have a complex vertex $v$ and let $v_1,\dots,v_\ell$ be its children.
    Suppose $(p,A\gamma) \in \Cc(M_v)$. By construction of 
    $M_v$, it then follows that there exists a transition
    $(p,A) \hookrightarrow \{(p_1,\gamma_1),\dots,(p_\ell,\gamma_\ell)\}$ of $\Ac$
    such that for each $i$, the configuration $(p_i,\gamma_i\gamma) \in \Cc(M_{v_i})$. By induction hypothesis, for each $i$, the sub-tree
    rooted at $v_i$ can be labelled so that the resulting labelled
    sub-tree is a compressed accepting run from $(p_i,\gamma_i\gamma)$.
    By taking this labelling for each of the sub-trees rooted at $v_1,\dots,v_\ell$ and then labelling the vertex $v$ by $(p,A\gamma)$,  we get 
    a labelling of the sub-tree rooted at $v$ which is a compressed accepting
    run from the configuration $(p,A\gamma)$.
    
    Conversely, suppose for some configuration $(p,A\gamma)$, it is
    possible to label the sub-tree rooted at $v$ so that it becomes
    a compressed accepting run from the configuration $(p,A\gamma)$.
    Hence, there exists a transition $(p,A) \hookrightarrow \{(p_1,\gamma_1),\dots,(p_\ell,\gamma_\ell)\}$ of $\Ac$
    such that for each $i$, the label of $v_i$ under this labelling 
    is $(p_i,\gamma_i\gamma)$. By induction hypothesis, for each $i$,
    we have that $(p_i,\gamma_i\gamma) \in \Cc(M_{v_i})$. 
    By construction of $M_v$, it follows that $(p,A\gamma) \in \Cc(M_v)$.
    Hence, the invariant is satisfied when $v$ is a complex vertex as well.
\end{proof}

\paragraph*{Running time analysis}

Let us analyse the running time of $\mathtt{Check}$.
Let $T$ be a $k$-structured tree and therefore $T$ has $O(k^2)$ vertices.
$\mathtt{Check}$ assigns to each vertex $v$ of $T$ an automaton $M_v$.
We claim that the running time of $\mathtt{Check}$ is $O(k^2 \cdot |\Ac|^{ck^2})$ (for some fixed constant $c$) because
of the following facts: 
\begin{itemize}
    \item[1)] By induction on the structure of the tree $T$, it can be proved that, there exists a constant $d$, such that if $h_v$ is the height of 
a vertex $v$ and $l_v$ is the number of leaves in the sub-tree of $v$, then the number of states of $M_v$ is $O(|\Ac|^{dh_vl_v})$ (Recall that $h_vl_v$ is at most $O(k^2$)). 
    \item[2)]  If an $\Nc$-automaton has $n$ states,
then the number of transitions it can have is $O(|\Ac| \cdot n^2)$. 
    \item[3)] For a vertex $v$
with children $v_1,\dots,v_\ell$, $M_v$ can be constructed in polynomial time
in the size of $|M_{v_1}| \times |M_{v_2}| \times \dots |M_{v_\ell}|$ and
$|\Ac|$.

\end{itemize}
 
Notice that everything else apart from Fact 1) is easy to see.
To prove Fact 1), we proceed by bottom-up induction on the structure of the tree $T$.
For the base case when the vertex $v$ is a leaf, notice that we can easily construct
the required automaton $M_v$ with at most $O(|\Ac|)$ states.  
Suppose, $v$ is a simple vertex and $u$ its only child. 
By Theorem~\ref{th:prestar}, $M_v$ has the same set of states as $M_u$.
By induction hypothesis, the number of states of $M_u$ is $O\left(|\Ac|^{dh_ul_u}\right)$ 
and so the number of states of $M_v$ is $O\left(|\Ac|^{dh_vl_v}\right)$.
Suppose $v$ is a complex vertex and $v_1,\dots,v_\ell$ are its children.
Let $h$ be the maximum height amongst the vertices $v_1,\dots,v_\ell$.
By induction hypothesis, the number of states of each $M_{v_i}$ is 
$O\left(|\Ac|^{dhl_{v_i}}\right)$. It is then clear that 
the number of states of $M_v$ is $O\left(\prod_{i=1}^\ell |\Ac|^{dhl_{v_i}} + |\Ac|\right)
= O\left(|\Ac|^{dhl_v} + |\Ac|\right) = O\left(|\Ac|^{d(h+1)l_v}\right) = O\left(|\Ac|^{dh_vl_v}\right)$.

Now the final algorithm for {\scshape Sparse-Empty} simply iterates over all $k$-structured trees and calls $\mathtt{Check}$ on all of them. Since the
number of $k$-structured trees is at most $f(k)$ where $f$ is an exponential function,
it follows that the total running time is $O\left(f(k) \cdot k^2 \cdot |\Ac|^{ck^2}\right) = O(|\Ac|^{ek^2})$ for some constant $e$.
\section{Homing Problem}
Intuitively, in the homing problem, there is an observer who has no knowledge of the
current state of the PDA. The problem then asks if there is a strategy for the observer, to input letters adaptively and narrow down the possible set of states the PDA is in to exactly one state.

We first define  the notion of a homing word from a pseudo-configuration.
Let $\Pc = (Q,\Sigma,\Gamma, \delta)$ be a PDA with $I \subseteq Q$ and $\gamma \in \Gamma^*$ We say that the pseudo-configuration $(I,\gamma)$ admits a \emph{homing word} if there is a labeled tree $T$ satisfying the following conditions
\begin{itemize}
    \item All the edges are labelled by some input letter $a \in \Sigma$ such that, for every vertex $v$, all its outgoing edges have the same label. 
    \item The root is labelled by the pseudo-configuration $(I,\gamma)$. 
    \item Suppose $v$ is a vertex which is labelled by the pseudo-configuration $(S,A\eta)$.
    Let $a$ be the unique label of its outgoing edges and let $\Succ(S,A\eta,a)$ be of size $k$.
    Then $v$ has $k$ children, with the $i^{th}$ child labelled by the $i^{th}$ pseudo-configuration in $\Succ(S,A\eta,a)$.
    \item For every leaf, there exists $q\in Q$ and $\eta \in \Gamma^*$ such that its label is $(\{q\},\gamma')$ for some $\gamma'\in \Gamma^*$.
\end{itemize}
Notice that any synchroniser from some pseudo-configuration $(I,\gamma)$ to some state $s$ is also a homing word from $(I,\gamma)$.
For the automata in Figure~\ref{fig:automata}, we see that the tree in Figure~\ref{fig:tree} is a homing word from $(\{1,2,3,4\},\bot)$. In fact, just a subtree of the one in Figure~\ref{fig:tree}, where we prune the tree as soon as a pseudo configuration with one state is reached is also a homing word from $(\{1,2,3,4\},\bot)$.

The homing problem {\scshape Homing} is now defined as follows:
\begin{quote}
    \noindent \emph{Given: } A PDA $\Pc = (Q,\Sigma,\Gamma,\delta)$ and a word $\gamma \in \Gamma^*$\\
    \noindent \emph{Decide: } If there is a \emph{homing word} from $(Q,\gamma)$
\end{quote}

Similarly the subset homing problem {\scshape Subset-Homing} is defined as:
\begin{quote}
    \noindent \emph{Given: } A PDA $\Pc = (Q,\Sigma,\Gamma,\delta)$, a subset $I \subseteq Q$ and a word $\gamma \in \Gamma^*$\\
    \noindent \emph{Decide: } If there is a \emph{homing word} from $(I,\gamma)$
\end{quote}

By using the same reduction as given in Lemma~\ref{lem:equiv-subset-sync}, it follows that
\begin{lemma}\label{lem:equiv-subset-homing}
    {\scshape Homing} and {\scshape Subset-Homing} are polynomial-time equivalent.
\end{lemma}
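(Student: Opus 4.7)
The plan is to show that \textsc{Subset-Homing} reduces in polynomial time to \textsc{Homing}; the converse is trivial since any instance $(\Pc,\gamma)$ of \textsc{Homing} is the instance $(\Pc,Q,\gamma)$ of \textsc{Subset-Homing}. I would reuse verbatim the construction from Lemma~\ref{lem:equiv-subset-sync}: given a PDA $\Pc = (Q,\Sigma,\Gamma,\delta)$, a subset $I\subseteq Q$, a word $\gamma \in \Gamma^*$ and a fixed state $q_I \in I$, build $\Pc'$ by adding a fresh stack letter $\#$ and, for every input letter $a \in \Sigma$, transitions from $(q,\#)$ that pop $\#$ (and push nothing), taking every $q \in I$ to $q$ itself and every $q \notin I$ to $q_I$. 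The \textsc{Homing} instance produced is $(\Pc',\#\gamma)$.

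The claim to prove is that $(I,\gamma)$ admits a homing word in $\Pc$ iff $(Q,\#\gamma)$ admits one in $\Pc'$. For the forward direction, if $T$ is a homing word for $(I,\gamma)$ in $\Pc$, then for an arbitrary $a \in \Sigma$ all the new transitions in $T^a_{Q,\#}$ pop $\#$ and push the empty word, so they form a single equivalence class of $\sim^a_{Q,\#}$. Thus $\Succ(Q,\#\gamma,a) = \{(I,\gamma)\}$, and attaching $T$ below a new root labelled $(Q,\#\gamma)$ with outgoing edge $a$ yields a homing word in $\Pc'$.

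For the reverse direction, let $T'$ be a homing word from $(Q,\#\gamma)$ in $\Pc'$. Whatever letter labels the outgoing edges of the root, the argument above shows that the root has a unique child labelled $(I,\gamma)$. Since no transition of $\Pc'$ ever pushes $\#$, the symbol $\#$ disappears forever after this first step, and so the entire subtree rooted at this child uses only transitions inherited from $\Pc$ and its pseudo-configurations are over $\Pc$'s stack alphabet; moreover every leaf of $T'$ is labelled by a singleton $(\{q\},\eta)$ by the definition of a homing word. Hence this subtree is itself a homing word for $(I,\gamma)$ in $\Pc$.

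No step here is genuinely hard; the only subtlety — the one thing worth checking carefully — is that the argument does not rely on the leaves of $T'$ being labelled by a \emph{specific} singleton (as was crucial in the synchronisation case), only that they be \emph{some} singleton. Since the reduction sets up the first step so as to collapse $Q$ to exactly $I$ in a single input regardless of the letter chosen, and the $\#$-symbol never reappears, the syntactic correspondence between homing trees of $(I,\gamma)$ in $\Pc$ and of $(Q,\#\gamma)$ in $\Pc'$ is immediate, giving the desired polynomial-time equivalence.
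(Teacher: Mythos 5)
Your proposal is correct and matches the paper's approach exactly: the paper proves this lemma by invoking verbatim the $\#$-marker reduction of Lemma~\ref{lem:equiv-subset-sync}, which is precisely what you do, and your observation that the argument only needs leaves to be \emph{some} singleton (rather than a fixed one) is the right sanity check for why the reduction transfers unchanged to the homing setting.
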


We now have the following lemma which relates the homing problem to the adaptive
synchronisation problem.
\begin{lemma}
{\scshape Homing} is polynomial time equivalent to {\scshape Ada-Sync}.
\end{lemma}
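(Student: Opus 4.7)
The plan is to prove polynomial-time reductions in both directions. By Lemmas~\ref{lem:equiv-subset-sync}, \ref{lem:equiv-super-sync} and \ref{lem:equiv-subset-homing}, it suffices to exhibit reductions between {\scshape Subset-Homing} and {\scshape Given-Sync}, which I describe below.

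For the easy direction, {\scshape Subset-Homing} $\le_p$ {\scshape Given-Sync}, I would reuse the $\text{done}_q$-gadget from the proof of Lemma~\ref{lem:equiv-super-sync}. From $(\Pc,I,\gamma)$ construct $\Pc'$ by adjoining two sink states $q_\acc, q_\rej$ and, for each $q \in Q$, a fresh input letter $\text{done}_q$ that sends $q$ to $q_\acc$ and every other original state to $q_\rej$; output the {\scshape Given-Sync} instance $(\Pc',I,\gamma,q_\acc)$. A homing word in $\Pc$ extends to a synchroniser to $q_\acc$ by appending a $\text{done}_q$-edge at each singleton leaf $(\{q\},\eta)$. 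Conversely, in a minimal synchroniser from $(I,\gamma)$ to $q_\acc$ in $\Pc'$ the sink $q_\rej$ cannot appear in any label, so every branch must terminate with a $\text{done}_q$-edge emanating from a vertex labelled $(\{q\},\eta)$; chopping off those final edges yields a homing word in $\Pc$.

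For the harder direction, {\scshape Given-Sync} $\le_p$ {\scshape Subset-Homing}, I would use a ``twin'' construction. Given $(\Pc,I,\gamma,s)$, let $\widetilde Q = \{\tilde q : q \in Q \setminus \{s\}\}$ and define the involution $\tau$ on $Q \cup \widetilde Q$ by $\tau(s)=s$, $\tau(q) = \tilde q$ and $\tau(\tilde q) = q$ for $q \in Q \setminus \{s\}$. Build $\Pc'$ on state set $Q \cup \widetilde Q$ by keeping every transition of $\Pc$ and, for each $(p,A)\xhri{a}(r,\eta) \in \delta$, also adjoining the twin $(\tau(p),A)\xhri{a}(\tau(r),\eta)$; take $I' = I \cup \{\tilde q : q \in I \setminus \{s\}\}$ and output $(\Pc',I',\gamma)$. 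The central claim, to be proved by induction on the observer's strategy, is that every reachable pseudo-configuration $(S,\eta)$ of $\Pc'$ from $(I',\gamma)$ has the shape $S = J \cup \{\tilde q : q \in J \setminus \{s\}\}$ for some $J \subseteq Q$, and that the equivalence classes of $\sim^a_{S,A}$ in $\Pc'$ are in bijection with those of $\sim^a_{J,A}$ in $\Pc$, with $\nex$ in $\Pc'$ being $\nex$ in $\Pc$ closed under $\tau$. Since $|S| = |J| + |J \setminus \{s\}|$ equals $1$ only when $J = \{s\}$, the only reachable singleton in $\Pc'$ is $(\{s\},\cdot)$; hence a homing word from $(I',\gamma)$ in $\Pc'$ exists iff $(I',\gamma) \xRightarrow[\Pc']{} s$, which by the class-bijection is equivalent to $(I,\gamma) \xRightarrow[\Pc]{} s$.

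The main obstacle will be verifying the invariant and the class-bijection cleanly. Specifically, I must check that every twin transition always shares its stack action with its original counterpart (so both lie in the same equivalence class of $\sim^a_{S,A}$), that the prescribed shape of $S$ precisely pins down which sources actually contribute to $T^a_{S,A}$, and that the $\nex$-image of each class in $\Pc'$ is consequently closed under $\tau$ on $Q \setminus \{s\}$. Once these bookkeeping facts are in place, projecting $\Pc'$-synchronisers down to $\Pc$-synchronisers and lifting them back are routine, and composing the two reductions with the cited equivalence lemmas gives the polynomial-time equivalence of {\scshape Homing} and {\scshape Ada-Sync}.
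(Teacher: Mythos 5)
Your proposal is correct. The first direction ({\scshape Subset-Homing} to {\scshape Given-Sync} via the $\text{done}_q$ letters sending $q$ to $q_\acc$ and everything else to $q_\rej$) is essentially identical to the paper's reduction from {\scshape Homing} to {\scshape Given-Sync}. The second direction is where you genuinely diverge: the paper takes two full disjoint copies $Q\times\{0,1\}$, adds a fresh letter $\End$ that sends $(s,0)$ and $(s,1)$ to a new sink $q_\acc$ and everything else to copies of $q_\rej$, and argues that every non-leaf pseudo-configuration has the form $Q_v\times\{0,1\}$, so the only reachable singleton is $\{q_\acc\}$ and it is only reachable through $\{s\}\times\{0,1\}$. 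Your twin construction instead identifies the two copies of $s$ via the involution $\tau$, so that $|S|=|J|+|J\setminus\{s\}|$ forces $\{s\}$ to be the unique reachable singleton without any new letters or sink states; the invariant and class-bijection you flag as the "main obstacle" do go through, including the slightly delicate point that the twin of a transition out of $s$ must also be added (making $s$ non-deterministic even when $\Pc$ is deterministic) so that $\nex$ of each class stays closed under $\tau$. The trade-off is that the paper's gadget preserves determinism and would therefore also yield the deterministic analogue of the equivalence for free, whereas yours does not; since the lemma is stated only for the non-deterministic problems, this costs you nothing here, and in exchange your construction is smaller and avoids the extra input letter.
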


\begin{proof}

\paragraph*{Reducing an instance of {\scshape Homing} to {\scshape Ada-Sync}}

By Lemmas~\ref{lem:equiv-subset-sync} and~\ref{lem:equiv-super-sync}, it suffices to show that {\scshape Homing} can be reduced to {\scshape Given-Sync}.

Let $\Pc = (Q,\Sigma,\Gamma,\delta)$ and $\gamma \in \Gamma^*$.
We construct a PDA $\Pc' = (Q',\Sigma',\Gamma,\delta')$  as follows:
$Q'$ consists of all the states of $\Pc$, along with two new states $q_{\acc}$ and $q_{rej}$. $\Sigma'$ is taken to be $\Sigma \cup \{a_q: q \in Q\}$. The transition relation $\delta'$ contains all the transitions in $\delta$ and in addition has the following new ones: 
The state $q \in Q$, upon reading $a_q$ moves to $q_{\acc}$ and upon
reading $a_p$ for some $p \neq q$ moves to $q_{\rej}$. 
We now claim that 
\begin{quote}
    There is a homing word in $\Pc$ from $(Q,\gamma)$ iff there is a synchroniser between $(Q,\gamma)$ and $q_{\acc}$ in $\Pc'$.
\end{quote}

($\Rightarrow$) : Let $T$ be a homing word in $\Pc$ from $(Q,\gamma)$. For each leaf $v$, do the following: Suppose $(\{q\},\eta_v)$ is the label of the leaf $v$. Add an outgoing edge from $v$ 
labelled by $a_q$ and label the child of this edge by $(q_{\acc},\eta_v)$. It is
clear that the modified tree is a synchroniser between $(Q,\gamma)$ and $q_{\acc}$ in $\Pc'$.\\

($\Leftarrow$) : Let $T$ be a synchroniser between $(Q,\gamma)$ and $q_{\acc}$ in $\Pc'$.
We can assume that $T$ is a minimal such synchroniser.
For each vertex $v$, let $(S_v,\eta_v)$ be the label of $v$ in $T$.
Since there are no outgoing transitions from $q_{\rej}$ and since
$S_u = \{q_\acc\}$ for every leaf $u$, it follows
that $q_{\rej} \notin S_v$ for any vertex $v$. We now claim that,

\begin{quote}
    For every non-leaf $v$, $S_v \subseteq Q$. Further, if an outgoing edge
    of $v$ is labelled by $a_q$ for some $q$, then $v$ is the parent of a leaf
    with $S_v = \{q\}$.
\end{quote}

Clearly $S_v \subseteq Q$ when $v$ is the root. Suppose for some non-leaf $v$, $S_v \subseteq Q$. If the outgoing edges from $v$ are labelled by some letter
from $\Sigma$, then it is clear that for all children $v'$ of $v$ we have $S_{v'} \subseteq Q$.
Suppose the outgoing edge from $v$ is labelled by $a_q$ for some $q \in Q$.
Then, there is only one child of $v$ (say $v'$). If $S_v \neq \{q\}$, then $q_{\rej} \in S_{v'}$,
which leads to a contradiction. Hence, $S_v = \{q\}$ and $S_{v'} = \{q_{\acc}\}$.
By minimality of $T$, $v'$ is a leaf.

Hence, it follows that if we remove the leaves of $T$, we get a homing word in $\Pc$ from $(Q,\gamma)$.\\

Hence, there is a homing word in $\Pc$ from $(Q,\gamma)$ iff there is a synchroniser between $(Q,\gamma)$ and $q_{\acc}$ in $\Pc'$.
Therefore, we get  that {\scshape Homing} is reducible to {\scshape Given-Sync}.

\paragraph*{Reducing an instance of {\scshape Ada-Sync} to {\scshape Homing}}

By Lemmas~\ref{lem:equiv-subset-sync},~\ref{lem:equiv-super-sync} and~\ref{lem:equiv-subset-homing}, it suffices
to show that {\scshape Given-Sync} can be reduced
to {\scshape Subset-Homing}. 

Let $\Pc= (Q,\Sigma,\Gamma,\delta)$ be a PDA with $\gamma\in\Gamma^*$, $I\subseteq Q$ and $s\in Q$.
We now construct $\Pc' = (Q',\Sigma',\Gamma,\delta')$ as follows:

The states of $\Pc'$
will be $Q' = (Q \times \{0,1\}) \cup \{q_\acc,(q_\rej,0),(q_\rej,1)\}$.
The input alphabet of $\Pc'$ will be $\Sigma \cup \{\End\}$.
Upon reading any letter from $\Sigma$, the transitions on a state $(q,b) \in Q \times \{0,1\}$ just mimic the transitions of $\delta$ on the first co-ordinate
and leave the second one unchanged. Upon reading $\End$, 
the states $(s,0)$, $(s,1)$ and $q_\acc$ move to $q_\acc$ and 
all the other states move to their corresponding copy of $q_\rej$.

We now claim that
\begin{quote}
    There is a synchroniser from $(I,\gamma)$ to $s$ in $\Pc$ iff there is a homing
    word from  $(I \times \{0,1\},\gamma)$ in $\Pc'$.
\end{quote}

$(\Rightarrow)$: Suppose $T$ is a synchroniser from $(I,\gamma)$ to $s$ in $\Pc$.
Let $(S_v,\eta_v)$ be the label of each vertex $v$ in $T$.
Modify $T$ as follows: For each vertex $v$, replace $S_v$ with $S_v \times \{0,1\}$.
Further, to each leaf $v$ of $T$, add an outgoing edge labelled with $\End$ with the child of this edge being labelled by the pseudo-configuration $(\{q_{\acc}\},\eta_v)$. It is now easy to see that this modified tree $T'$ is a homing word from $(I \times \{0,1\},\gamma)$ in $\Pc'$.\\

$(\Leftarrow)$ : Suppose $\Pc'$ has a homing word from $(I \times \{0,1\}, \gamma)$. Let $T$ be a minimal such homing word. Let $(S_v,\eta_v)$ be the label of each vertex $v$ in $T$.
We claim the following:
\begin{quote}
    For any non-leaf vertex $v$, there exists $Q_v \subseteq Q$ such that $S_v = Q_v \times \{0,1\}$.
    Further, if the outgoing edge from some vertex $v$ is labelled by $\End$, then $v$
    is the parent of a leaf with $S_v = \{s\} \times \{0,1\}$.
\end{quote}

Let us prove this by induction on the structure of $T$. Clearly for the root vertex $v$,
we have $Q_v = I$. Suppose for some non-leaf vertex there exists $Q_v \subseteq Q$ with
$S_v = Q_v \times \{0,1\}$. If the outgoing edges from $v$ are labelled by some letter from $\Sigma$,
then it is easy to see that for every child $v'$ of $v$, there exists $Q_{v'} \subseteq Q$
with $S_{v'} = Q_{v'} \times \{0,1\}$. Suppose the outgoing edge from $v$ is labelled
by $\End$. Hence, there is only one child of $v$ (say $v'$).
If $Q_v \neq \{s\}$, it follows that both $(q_{\rej},0)$ and $(q_{\rej},1)$ belong to 
$S_{v'}$. Notice that there are no outgoing transitions from both of these states.
Hence, for every vertex $v''$ on the subtree rooted at $v'$, we would have $(q_\rej,0) \in S_{v''}$ and
$(q_\rej,1) \in S_{v''}$, which would be a contradiction. Hence,
$Q_v = \{s\}$ and so $S_{v'} = \{q_{\acc}\}$. By minimality of the tree $T$, it follows that $v'$
is a leaf. 

It then follows that if we remove the leaves of $T$ and project the labels of each vertex $v$ to the subset $Q_v$, we will get a synchronising word from $(I,\gamma)$ to $s$ in $\Pc$.

Hence, there is a synchroniser from $(I,\gamma)$ to $s$ in $\Pc$ iff there is a homing
word from  $(I \times \{0,1\},\gamma)$ in $\Pc'$. Therefore, we get that {\scshape Given-Sync} is reducible to {\scshape Subset-Homing}.
\end{proof}

\end{document}